\newcommand{\be}{\begin{equation}}
\newcommand{\ee}{\end{equation}}
\newcommand{\ba}{\begin{array}}
\newcommand{\ea}{\end{array}}
\newcommand{\bea}{\begin{eqnarray}}
\newcommand{\eea}{\end{eqnarray}}
\newcommand{\beaa}{\begin{eqnarray*}}
\newcommand{\eeaa}{\end{eqnarray*}}
\newcommand\+[1]{\boldsymbol{#1}}
\newcommand\bbR{\mathbb{R}}
\newcommand\bbN{\mathbb{N}}
\newcommand\Kn{\mathit{Kn}}
\newcommand\vf[1]{\bar{f}_{\boldsymbol{e}_1+ {#1} \boldsymbol{e}_2}}
\newcommand\vfb[1]{\bar{f}_{\boldsymbol{e}_1+ ({#1}) \boldsymbol{e}_2}}
\newcommand\pd[2]{\dfrac{\partial {#1}}{\partial {#2}}}
\newcommand\pdd[1]{\dfrac{\partial}{\partial {#1}}}
\newcommand\odd[1]{\dfrac{\mathrm{d}}{\mathrm{d} {#1}}}
\newcommand\od[2]{\dfrac{\mathrm{d} {#1}}{\mathrm{d} {#2}}}
\newcommand\Gauss[3]{\frac{1}{(2\pi {#3})^{3/2}}
\exp\left(-\frac{|{#1}-{#2}|^2}{2 {#3}}\right)}
\newcommand\ang[1]{\langle{#1}\rangle}
\newcommand\wang[2]{\langle\phi_{#1},\ \xi_2\varphi_{#2}\rangle_{\bar{\omega}}}
\begin{document}

\title{An Approximate Analytical Solution to Knudsen Layers}

\author{Ruo Li\thanks{CAPT, LMAM \& School of Mathematical Sciences,
    Peking University, Beijing, China, email: {\tt
      rli@math.pku.edu.cn}.} \and Yichen Yang\thanks{School of
    Mathematical Sciences, Peking University, Beijing, China, email:
    {\tt yichenyang@pku.edu.cn}.}
}

\maketitle
\begin{abstract}
  We apply moment methods to obtaining an approximate analytical solution
  to Knudsen layers. Based on the hyperbolic regularized moment
  system for the Boltzmann equation with the Shakhov collision model,
  we derive a linearized hyperbolic moment system to model the
  scenario with the Knudsen layer vicinity to a solid wall with
  Maxwell boundary condition. We find that the reduced system is in an
  even-odd parity form that the reduced system proves to be
  well-posed under all accommodation coefficients. We show that the
  system may capture the temperature jump coefficient and the thermal
  Knudsen layer well with only a few moments. With the increasing number
  of moments used, qualitative convergence of the approximate solution is observed.
\end{abstract}
\begin{keywords}
  Moment method; Maxwell boundary condition; Shakhov collision model;
  Thermal Knudsen layer; Temperature jump
\end{keywords}
\begin{AMS}
	34B05; 35Q20; 76P05; 82C40
\end{AMS}


\section{Introduction}\label{Sec.1}

The Knudsen layer is an important feature of the rarefied gas flow
\cite{Lilley2007}, where the continuum assumption does not hold so the
Navier-Stokes-Fourier (NSF) equations fail to describe the gas
behavior \cite{shen2006rare} but the model from a statistical
viewpoint such as the Boltzmann equation \cite{Boltzmann} works. As
introduced in the book \cite{Struchtrup}, the moment equations, which
are extend macroscopic transport equations reduced from the Boltzmann
equation, provide a new description of rarefied gases.  The model
reduction methods are necessary partly because the direct simulation
of the Boltzmann equation, such as the direct simulation Monte Carlo
(DSMC) \cite{Bird} and the discrete velocity method (DVM)
\cite{Broadwell}, may be too expensive for applications in concern
\cite{Reese2003, Mizzi2007}.

This paper is aimed to obtain an approximate analytical solution to 
the Knudsen layer in some classical flow problems, based on the 
hyperbolic regularized moment equations (HME) developed in recent
years \cite{Fan,Fan_new,ANRxx,framework,fan2016model}.
Moment methods for the Boltzmann equation are first proposed by Grad
\cite{Grad} and success to simulate the nonequilibrium gas flow with
high accuracy and high efficiency \cite{Muller, Struchtrup2002,
TorrilhonEditorial}. Nevertheless, the original Grad's moment equations 
suffer the lack of hyperbolicity \cite{Grad13toR13} and the hyperbolic 
model reduction remains an important issue in this area, whose long 
and rich history can be found in the review paper \cite{cai2020hyperbolic}. 
Following the regularization framework \cite{framework}, the HME is globally
hyperbolic regularized from the Grad's moment system of arbitrary
moment orders and has been studied in both theoretical and numerical
aspects \cite{Cai2011}.

There have been exhaustive studies applying the linearized Boltzmann
equation \cite{Williams2001} with various collision models to
Knudsen layers in classical flow problems, i.e. the temperature jump
problem \cite{Welander1954} and Kramers' problem \cite{Kramers1949}.
Many highly accurate numerical results have been reported
\cite{Loyalka1971, 1978Temperature, Siewert2000} by the
discrete-ordinates method. However, moment methods may bring different
insights into the understanding of Knudsen layers, especially by their
available analytical solutions. To our best knowledge,
\cite{2008Linear} first analyses a 1D linear kinetic equation for heat
transfer by means of Grad's moment methods.  \cite{Gu2014} presents
analytical solutions of the temperature jump problem for linearized
R13 and R26 moment methods. In the recent work \cite{Lijun2017}, formal
analytical solutions of the Kramers' problem are obtained for the
linearized HME with the BGK collision model \cite{BGK}.

In this paper, we rewrite the linearized HME as an even-odd parity
form and present approximate analytical solutions of the temperature
jump problem with the Shakhov collision model \cite{Shakhov}. Compared
to the early work \cite{Lijun2017} on Kramers' problem, the even-odd
parity form of the linearized HME is explicitly utilized in this paper
and the moment equations' boundary conditions are also imposed as an
even-odd formulation. In this way, we improve the results in
\cite{Lijun2017} that the well-posedness is attained under all
accommodation coefficients. Furthermore, the numerical study
confirms the effectivity of our model. The idea of the even-odd
formulation is inspired by analysis of the kinetic equations such as
\cite{egg2012, Lu2014}.

Briefly, we first derive the HME from the Boltzmann equation then make
linearization to get the linearized HME (LHME). Thanks to the
assumptions of the temperature jump problem, we can decouple the
equations including the temperature from the whole LHME, to get a
system of linear ordinary differential equations (ODEs) with constant
coefficients. We impose the boundary conditions of the moment system
by multiplying Maxwell's accommodation boundary condition
\cite{Maxwell} with some appropriate polynomials then integrating both
sides. Finally, we separate the decaying and non-decaying unknowns,
seeking the analytical solutions of the ODEs satisfying the boundary
conditions and the boundedness of the decaying unknowns.  For arbitrary
moment order $M$, the explicit solutions can be determined via a
simple algorithm and we then study the temperature jump coefficient,
temperature defect, and effective thermal conductivity, etc.

This paper is organized as follows. In Section 2 we derive the LHME
in the half-space with wall boundary conditions, and discuss its
reduced version in some classical flow problems. In Section 3 we
detailedly discuss the temperature jump problem, obtaining the
analytical solutions, and proving the well-posedness of the reduced
moment system. In Section 4 we briefly discuss the Kramers' problem.
In Section 5 we carefully compare our temperature profile with other
kinetic models both theoretically and numerically. The paper ends
with a conclusion.


\section{The Linearized Moment System}\label{Sec.2}
\subsection{The Basic Equations.} We consider the following
Boltzmann equation \cite{Boltzmann} with 
the Shakhov \cite{Shakhov} collision model 
\begin{eqnarray}\label{eq:Bol}
	\pd{f}{t} + \boldsymbol{\xi}\cdot\nabla_{\boldsymbol{x}}f = Q^S(f),
\end{eqnarray}
where $f=f(t,\+x,\+\xi)$ is the number density distribution function
of particles at time $t\in\bbR^+$,
location $\+x=(x_1,x_2,x_3)\in\bbR^3,$ with velocity
$\+\xi=(\xi_1,\xi_2,\xi_3)\in\bbR^3$. 
The Shakhov collision term $Q^S(f)$ is 
\begin{eqnarray}\label{eq:shakhov}
	Q^{S}(f) = \frac{1}{\tau}(f^{S}-f),\ 
f^S = \rho\omega^{[\boldsymbol{u},\theta]}(\boldsymbol{\xi})
\left(1+\frac{(1-\Pr)(\boldsymbol{\xi}-\boldsymbol{u})\cdot \boldsymbol{q}}{5\rho\theta^2}
\left(\frac{|\boldsymbol{\xi}-\boldsymbol{u}|^2}{\theta}-5\right)
\right),
\end{eqnarray}
where $\tau^{-1}$ measures frequency of the collision, $\Pr$ is the
Prandtl number. And 
the macroscopic variables such as density $\rho=\rho(t,\boldsymbol{x})$,
macro velocity vector $\+u=\boldsymbol{u}(t,\boldsymbol{x})$,
temperature $\theta=\theta(t,\boldsymbol{x})$ and 
heat flux vector $\+q=\boldsymbol{q}(t,\boldsymbol{x})$ are defined
by the distribution $f$:
\begin{eqnarray}\label{eq:macrodef}
\rho = \int_{\mathbb{R}^3}\!\!\!f\, \mathrm{d}\boldsymbol{\xi},\ 
\rho \boldsymbol{u} = \int_{\mathbb{R}^3}\!\!\!f \boldsymbol{\xi}\,\mathrm{d}\boldsymbol{\xi},\ 
\rho|\boldsymbol{u}|^2+3\rho\theta = \int_{\mathbb{R}^3}\!\!f|\boldsymbol{\xi}|^2\,\mathrm{d}\boldsymbol{\xi},\
\boldsymbol{q}=\frac{1}{2}\int_{\mathbb{R}^3}f|\boldsymbol{\xi}-\boldsymbol{u}|^2(\boldsymbol{\xi}-\boldsymbol{u})\,\mathrm{d}\boldsymbol{\xi}.
\end{eqnarray}
$\omega^{[\boldsymbol{u},\theta]}(\boldsymbol{\xi})$ is the local
Maxwellian, defined as 
\begin{eqnarray*}
\omega^{[\boldsymbol{u},\theta]}(\boldsymbol{\xi}) = \Gauss{\boldsymbol{\xi}}{\boldsymbol{u}}{\theta}.
\end{eqnarray*}
\vspace{-2mm}
\begin{remark}
The Shakhov model, which in fact turns to the BGK model \cite{BGK}
	when $\Pr=1$, may provide the correct Prandtl number of the flow.
 The Boltzmann operator with a more general collision kernel will be
	discussed in the future work but not in this paper. 
\end{remark}

Then we briefly introduce the deduction of the HME, whose more details can 
be found in \cite{framework}. First we make the Hermite expansion ansatz
of the distribution function
\begin{eqnarray} \label{eq:expan}
f(t,\boldsymbol{x},\boldsymbol{\xi}) = \omega ^{[\boldsymbol{u},\theta]}(\boldsymbol{\xi})
	\sum_{\boldsymbol{\alpha}\in \mathbb{N}^3}f_{\boldsymbol{\alpha}}(t,\boldsymbol{x})
	\mathrm{He}_{\boldsymbol{\alpha}}^{[\boldsymbol{u},\theta]}(\boldsymbol{\xi}),
\end{eqnarray}
where $\boldsymbol{\alpha}:=(\alpha_1,\alpha_2,\alpha_3)\in \mathbb{N}^3$.
$\mathrm{He}_{\+\alpha} ^{[\boldsymbol{u},\theta]}(\boldsymbol{\xi})$
is the generalized 3D Hermite polynomial defined as 
\begin{eqnarray*}
\mathrm{He}_{\boldsymbol{\alpha}}^{[\boldsymbol{u},\theta]}(\boldsymbol{\xi}) &=& 
	\frac{(-1)^{|\boldsymbol{\alpha}|}}
	{\omega ^{[\boldsymbol{u},\theta]}(\boldsymbol{\xi})}
	\dfrac{\partial^{|\boldsymbol{\alpha}|} \omega ^{[\boldsymbol{u},\theta]}(\boldsymbol{\xi}) }
	{\partial \boldsymbol{\xi}^{\boldsymbol{\alpha}}},\ 
|\boldsymbol{\alpha}|:=\alpha_1+\alpha_2+\alpha_3.
\end{eqnarray*}
By Appendix \ref{app:A}, $\{\mathrm{He}_{\+\alpha} ^{[\boldsymbol{u},\theta]}(\boldsymbol{\xi})\}$ are 
orthogonal polynomials with the weight function $\omega ^{[\boldsymbol{u},\theta]}(\boldsymbol{\xi})$ 
, and
\begin{eqnarray}\label{eq:falpha}
	f_{\boldsymbol{\alpha}} = \frac{\theta^{|\+\alpha|}}{\boldsymbol{\alpha}!}\int_{\bbR^3}\!\!\!
f \mathrm{He}_{\boldsymbol{\alpha}}^{[\boldsymbol{u},\theta]}(\boldsymbol{\xi})\, \mathrm{d} \boldsymbol{\xi},
\ \boldsymbol{\alpha}!:=\alpha_1!\alpha_2!\alpha_3!.
\end{eqnarray}
If $\boldsymbol{e}_i\in \mathbb{R}^3$ is the unit vector with
the $i$-th component equaling one, from \eqref{eq:macrodef} we have
\begin{eqnarray}
\label{eq:f123}
f_{\boldsymbol{0}}=\rho,\ f_{\boldsymbol{e}_i}=0,\ \sum_{d=1}^3f_{2\boldsymbol{e}_d}=0,\ 
	2f_{3 \boldsymbol{e}_i} + \sum_{d=1}^3f_{\boldsymbol{e}_i+2 \boldsymbol{e}_d} = q_i.
\end{eqnarray}

For any integer $M\geq2$, we define a 
projection $\mathcal{P}_M^{[\+u,\theta]}$ onto the space
spanned by the first $M$-th order basis functions
$ \mathcal{S}_M^{[\+u,\theta]}:= \mathrm{span}\{\omega ^{[\boldsymbol{u},\theta]}(\boldsymbol{\xi}) \mathrm{He}_{\+\alpha} ^{[\boldsymbol{u},\theta]}(\boldsymbol{\xi}),\ |\+\alpha|\leq M\}$, as 
\begin{eqnarray*}
	\mathcal{P}_M^{[\+u,\theta]} f := \omega ^{[\boldsymbol{u},\theta]}(\boldsymbol{\xi})
	\sum_{|\boldsymbol{\alpha}|\leq M}f_{\boldsymbol{\alpha}}(t,\boldsymbol{x})
	\mathrm{He}_{\boldsymbol{\alpha}}^{[\boldsymbol{u},\theta]}(\boldsymbol{\xi}).\quad
\end{eqnarray*}
Then substituting it into the Boltzmann equation, making the HME's
closure and matching the coefficients before the first $M$-th order
basis functions, we have a closed system with finite
terms which is called the $M$-th order HME:
\begin{eqnarray} \label{eq:cHME}
	\mathcal{P}_M^{[\+u,\theta]}\left(
\pd{\mathcal{P}_M^{[\+u,\theta]}f}{t} + 
	\sum_{i=1}^3\xi_i {\mathcal{P}_M^{[\+u,\theta]}}\left(
\pd{\mathcal{P}_M^{[\+u,\theta]}f}{x_i}
	\right)\right) = \mathcal{P}_M^{[\+u,\theta]} Q^S(\mathcal{P}_M^{[\+u,\theta]}f).
\end{eqnarray}
The component form reads as
\begin{eqnarray}\label{eq:tHME} 
	\begin{aligned} \frac{\mathrm{D} f_{\boldsymbol{\alpha}}}{\mathrm{D} t} + 
		\sum_{d=1}^3 &\left(\theta\pd{f_{\boldsymbol{\alpha}- \boldsymbol{e}_d}}{x_d} + 
		(1 - \delta_{|\boldsymbol{\alpha}|,M})(\alpha_d + 1)
		\pd{f_{\boldsymbol{\alpha}+ \boldsymbol{e}_d}}{x_d} \right) \\  
		+ \sum_{k=1}^3 f_{\boldsymbol{\alpha}- \boldsymbol{e}_k}\frac{\mathrm{D} u_k}{\mathrm{D} t} 
    + \sum_{d=1}^3\sum_{k=1}^3 \pd{u_k}{x_d}
		&\left(\theta f_{\boldsymbol{\alpha}- \boldsymbol{e}_k- \boldsymbol{e}_d} + 
		(1 - \delta_{|\boldsymbol{\alpha}|,M})(\alpha_d + 1) f_{\boldsymbol{\alpha}- \boldsymbol{e}_k+\boldsymbol{e}_d}\right) \\
		+ \frac{1}{2} \sum_{k=1}^3 f_{\boldsymbol{\alpha}- 2 \boldsymbol{e}_k} \frac{\mathrm{D} \theta}{\mathrm{D} t}
    + \dfrac{1}{2} \sum_{d=1}^3\sum_{k=1}^3  \pd{\theta}{x_d}
		&\left(\theta f_{\boldsymbol{\alpha}-2 \boldsymbol{e}_k- \boldsymbol{e}_d} + 
		(1 - \delta_{|\boldsymbol{\alpha}|,M})(\alpha_d + 1)
		f_{\boldsymbol{\alpha}-2 \boldsymbol{e}_k+ \boldsymbol{e}_d}\right)
    = -Q_{\boldsymbol{\alpha}},
  \end{aligned}
\end{eqnarray}
where $\displaystyle\frac{\mathrm{D}}{\mathrm{D}t}=\pdd{t}+\sum_{d=1}^3u_d\pdd{x_d}$ is the material derivative and  
$(\cdot)_{\boldsymbol{\alpha}}$ is taken as zero if any component of $\boldsymbol{\alpha}$ is 
negative or $|\boldsymbol{\alpha}|>M$. $Q_{\+\alpha}$ is calculated directly as in \cite{Cai2011},
\begin{eqnarray*}
Q_{\+\alpha} = 
-\frac{\theta^{|\boldsymbol{\alpha}|}}{\boldsymbol{\alpha}!}\int_{\bbR^3}\!\!\!
Q^S(f)
\mathrm{He}_{\boldsymbol{\alpha}}^{[\boldsymbol{u},\theta]}(\boldsymbol{\xi})\, \mathrm{d} \boldsymbol{\xi}
=
\frac{1}{\tau}\left(\delta_{|\+\alpha|\geq2}f_{\+\alpha}-
\frac{1-\Pr}{5}\sum_{i,j=1}^3\delta_{\+\alpha,\+e_i+2\+e_j}q_i 
\right),
\end{eqnarray*}
where the Kronecker function $\delta_{\+\alpha,\+e_i+2\+e_j}$ equals 1 when $\+\alpha=\+e_i+2\+e_j$ and 
equals 0 otherwise; $\delta_{|\+\alpha|\geq2}$ equals 1 when $|\+\alpha|\geq2$ and equals 0 otherwise. 
\begin{remark}
	The hyperbolicity of the moment system \eqref{eq:tHME} is proved
	in \cite{framework}. And \cite{framework} clarifies that the
	keypoint of the regularization lies in the extra projection 
	after the space derivatives in \eqref{eq:cHME}, which is also the
	only difference between Grad's moment equations and the HME.
\end{remark}

After the non-dimensionalization and linearization, we will get the
linearized HME (LHME). The linearization is assumed to be 
around the Maxwellian $\rho_0\omega^{[\+u_0,\theta_0]}(\+\xi)$ where
$\rho_0,\+u_0=\+0,\theta_0$ are constants. 
Denote by $L$ a characteristic length, then we introduce the 
dimensionless coordinates $\bar{\+x}$ and time $\bar{t}$ 
as $\+x=L\bar{\+x}$ and $t=\frac{L}{\sqrt{\theta_0}}\bar{t}$. 
The corresponding dimensionless Knudsen number is defined as 
\begin{eqnarray*}
\Kn = \frac{\tau_0}{L/\sqrt{\theta_0}},\quad \text{where}\ \tau=\tau_0(1+\bar{\tau}),\ \bar{\tau}=o(1).
\end{eqnarray*}
Analogously, we introduce the variables with a bar as dimensionless
variables:
\begin{eqnarray}\label{eq:dimensionless}
\rho=\rho_0(1+\bar{\rho}),\ \boldsymbol{u}=\sqrt{\theta_0}\bar{\boldsymbol{u}},\ 
\theta=\theta_0(1+\bar{\theta}),\ 
f_{\boldsymbol{\alpha}} = \rho_0\theta_0^{\frac{|\boldsymbol{\alpha}|}{2}}
	\bar{f}_{\boldsymbol{\alpha}},\ 
	|\boldsymbol{\alpha}|\geq 2,
\end{eqnarray}
where $\Kn$ is assumed to be a small quantity, $\bar{\rho},\bar{\+u},
\bar{\theta}$ and $\bar{f}_{\+\alpha}$
assumed to be $O(\Kn)$. Substituting \eqref{eq:dimensionless} into
the HME \eqref{eq:tHME} and discarding
the higher order terms, we have $M$-th order LHME:
\begin{eqnarray} \label{eq:simHME}
	\pd{\bar{h}_{\+\alpha}}{\bar{t}}+
\sum_{d=1}^3\left(
	\pd{\bar{h}_{\boldsymbol{\alpha}- \boldsymbol{e}_d}}{\bar{x}_d} +
	(1-\delta_{|\boldsymbol{\alpha}|,M})(\alpha_d+1)\pd{\bar{h}_{\boldsymbol{\alpha}+ 
	\boldsymbol{e}_d}}{\bar{x}_d} 
\right) = -\frac{1}{\mathit{Kn}}\bar{Q}_{\boldsymbol{\alpha}},\ |\+\alpha|\leq M,
\end{eqnarray}
where $\bar{h}_{\+\alpha}$ and $\bar{Q}_{\+\alpha}$ are defined as
\begin{eqnarray}\label{eq:hQ}
\bar{h}_{\+\alpha} = \bar{f}_{\boldsymbol{\alpha}}+\sum_{k=1}^3\delta_{\boldsymbol{\alpha},\boldsymbol{e}_k}
\bar{u}_k + \frac{1}{2}\sum_{k=1}^3\delta_{\boldsymbol{\alpha},2 \boldsymbol{e}_k}\bar{\theta},\quad
\bar{Q}_{\boldsymbol{\alpha}} = \delta_{|\boldsymbol{\alpha}|\geq 2}\bar{f}_{\boldsymbol{\alpha}} -
\frac{1-\Pr}{5}\sum_{i,j=1}^3\delta_{\boldsymbol{\alpha},\boldsymbol{e}_i+2 \boldsymbol{e}_j}\bar{q}_i.
\end{eqnarray}
\begin{remark}\label{rem:Grad}
We can see from the deduction that the extra projection in the HME
	only affects the higher order terms. So the difference between Grad's
	moment equations and the HME vanishes in case of this linearization.
\end{remark}
\begin{remark}
From another point of view, the linearized moment equations may be
	deduced directly from the linearized Boltzmann equation by the
	traditional Galerkin spectral expansion, i.e. under the basis
	functions independent on temporal and spatial variables $t,\ x$. 
\end{remark}

\subsection{Wall Boundary Conditions.}
In this paper we will consider the half-space problem, where the gas 
flow is on the upper half plane of an
infinite plate wall. Without loss of generality, we assume the
coordinate of the wall $\+x=(x_1,0,x_3),x_1,x_3\in\bbR,$ the outer
normal vector $\+n=(0,-1,0)^T$, the wall velocity $\+u^W=(0,0,0)^T$
and the wall temperature $\theta^W$ here and hereafter.

We use Maxwell's accommodation boundary condition \cite{Maxwell} to
describe the diffuse-specular process between the wall and the gas flow,
which in this case reads as
\begin{eqnarray}
\label{eq:Maxwell-bc}
	f(t,\boldsymbol{x},\boldsymbol{\xi}) = 
	\chi \mathcal{M}^W(\boldsymbol{x},\boldsymbol{\xi}) + 
	(1-\chi)f(t,\boldsymbol{x},\boldsymbol{\xi}^*),&
	\quad \xi_2 > 0,\ x_2=0,
\end{eqnarray}
where $\chi\in[0,\ 1]$ is the accommodation coefficient. 
$\boldsymbol{\xi}^*=\boldsymbol{\xi}-2 \boldsymbol{n}(\boldsymbol{\xi}\cdot \boldsymbol{n}) 
= (\xi_1,-\xi_2,\xi_3)^T$ comes from the specular reflection at the wall,
and $\mathcal{M}^W(\+x,\+\xi)$ is the Maxwellian characterizing the wall:
\begin{eqnarray*}
\mathcal{M}^W(\boldsymbol{x}, \boldsymbol{\xi}) = \frac{\rho^W(\boldsymbol{x})}{(2\pi\theta^W(\boldsymbol{x}))^{\frac{3}{2}}}
  \mathrm{exp}\left( - \frac{|\boldsymbol{\xi}- \boldsymbol{u}^W|^2}{2\theta^W(\boldsymbol{x})}\right),
\end{eqnarray*}
where $\rho^W$ is a normalizing factor to ensure $(\+u-\+u^W)\cdot\+n=0$
at the wall.

To construct the boundary conditions of the moment equations, a
traditional way is multiplying \eqref{eq:Maxwell-bc} by some 
polynomials $p_{\+\alpha}(\+\xi)$ and taking integral in $\bbR^3$
about $\+\xi$ both sides. To ensure the continuity when $\chi\rightarrow 0$,
Grad \cite{Grad,Grad1958} suggests choosing the polynomials satisfying
\begin{eqnarray} \label{eq:oddpol}
p_{\boldsymbol{\alpha}}(\boldsymbol{\xi}) = -p_{\boldsymbol{\alpha}}
	(\boldsymbol{\xi}^*),\quad \deg(p_{\boldsymbol{\alpha}})\leq M,
\end{eqnarray}
i.e. the odd polynomials about $\xi_2$. Hence 
substituting $\+\xi^*$ by $\+\xi$ in the last integral and noting that
$\mathcal{M}^W(\boldsymbol{x}, \boldsymbol{\xi^*}) 
= \mathcal{M}^W(\boldsymbol{x}, \boldsymbol{\xi})$, we have
\begin{eqnarray}\label{eq:int-bc}
\int_{\bbR^3}\!\! p_{\boldsymbol{\alpha}}(\boldsymbol{\xi})f\, \mathrm{d} \boldsymbol{\xi} &=& 
\int_{\bbR^2}\!\!\int_{-\infty}^0\!\! 
	p_{\boldsymbol{\alpha}}(\boldsymbol{\xi})f\, \mathrm{d} \boldsymbol{\xi} + 
\int_{\bbR^2}\!\!\int_{0}^{+\infty}\!\! p_{\boldsymbol{\alpha}}(\boldsymbol{\xi})
\left(\chi \mathcal{M}^W+(1-\chi)f(\boldsymbol{x}, \boldsymbol{\xi}^*)\right)
\, \mathrm{d} \boldsymbol{\xi} \notag \\ 
& = & \chi
\int_{\bbR^2}\!\!\int_{-\infty}^0\!\! p_{\boldsymbol{\alpha}}(\boldsymbol{\xi})
\left(f- \mathcal{M}^W\right)\, \mathrm{d} \boldsymbol{\xi}.
\end{eqnarray}

Since the equivalence of finite dimensional polynomial spaces, the specific
choice of $p_{\+\alpha}(\+\xi)$ can somehow be arbitrary in numeric,
such as the Hermite polynomial \cite{Cai2011}, the Legendre
polynomial \cite{2008Linear}, or even the monomial
$\boldsymbol{\xi}^{\boldsymbol{\alpha}}$ \cite{Lijun2017}. However, to
analyse the well-posedness, it may be more convenient to rewrite the
odd polynomial $p_{\+\alpha}$ as $\xi_2\tilde{p}_{\+\alpha}$ where
$\tilde{p}_{\+\alpha}$ is an even polynomial. This is analogous to the
Marshak conditions which impose the continuity of fluxes in the 
domain decomposition methods \cite{1997Arnold} and its benefits
will show naturally in the following sections.

According to this belief and note that when $\alpha_2$ is even,
$\mathrm{He}_{\+\alpha} ^{[\boldsymbol{u},\theta]}(\boldsymbol{\xi}) 
= \mathrm{He}_{\+\alpha}^{[\+u,\theta]}(\+\xi^*),$ we choose
\begin{eqnarray} \label{eq:phi}
p_{\boldsymbol{\alpha}}(\boldsymbol{\xi}) 
	= \xi_2\theta^{\frac{|\+\alpha|}{2}} \mathrm{He}_{\+\alpha} ^{[\boldsymbol{u},\theta]}(\boldsymbol{\xi}),\
\boldsymbol{\alpha} \in\mathbb{I} = \{|\boldsymbol{\alpha}|\leq M-1\ |\ \alpha_2\ \text{is even}\}.
\end{eqnarray}
Then assume $f$ and $\mathcal{M}^W$ each has the expansion
coefficients $f_{\+\alpha}$ and $m_{\+\alpha}$
defined as \eqref{eq:falpha} under the basis functions 
$\{\omega ^{[\boldsymbol{u},\theta]}(\boldsymbol{\xi}) 
\mathrm{He}_{\+\alpha} ^{[\boldsymbol{u},\theta]}(\boldsymbol{\xi})\}$.
The even-odd symmetry shows that
\beaa
\int_{\bbR^2}\int_{-\infty}^0\!\!\! p_{\+\alpha} \mathrm{He}_{\+\beta} 
^{[\boldsymbol{u},\theta]} \omega ^{[\boldsymbol{u},\theta]}
 \, \mathrm{d}\+\xi &= &
\int_{\bbR^2}\int^{+\infty}_0\!\!\! p_{\+\alpha} \mathrm{He}_{\+\beta} 
^{[\boldsymbol{u},\theta]} \omega ^{[\boldsymbol{u},\theta]}
 \, \mathrm{d}\+\xi \\ &=& \frac{1}{2}
\int_{\bbR^3}\!\!\! p_{\+\alpha} \mathrm{He}_{\+\beta} 
^{[\boldsymbol{u},\theta]} \omega ^{[\boldsymbol{u},\theta]}
 \, \mathrm{d}\+\xi,\ \beta_2\text{ is odd.}\\
-\int_{\bbR^2}\int_{-\infty}^0\!\!\! p_{\+\alpha} \mathrm{He}_{\+\beta} 
^{[\boldsymbol{u},\theta]} \omega ^{[\boldsymbol{u},\theta]}
 \, \mathrm{d}\+\xi &=& 
\int_{\bbR^2}\int^{+\infty}_0\!\!\! p_{\+\alpha} \mathrm{He}_{\+\beta} 
^{[\boldsymbol{u},\theta]} \omega ^{[\boldsymbol{u},\theta]}
 \, \mathrm{d}\+\xi \\&=& \frac{1}{2}
\int_{\bbR^3}\!\!\! |\xi_2|\theta^{\frac{|\+\alpha|}{2}}
\mathrm{He}_{\+\alpha} ^{[\boldsymbol{u},\theta]} \mathrm{He}_{\+\beta} 
^{[\boldsymbol{u},\theta]} \omega ^{[\boldsymbol{u},\theta]}
 \, \mathrm{d}\+\xi,\ \beta_2\text{ is even.}
\eeaa
Note that all the integral can calculate separately about
$\xi_1,\xi_2$ and $\xi_3$. After some tedious computation using the
properties of Hermite polynomials (Appendix \ref{app:A}), 
and plugging the linearization \eqref{eq:dimensionless} as well
as $\bar{m}_{\+\alpha}$ defined analogously (explicitly calculated
in Appendix \ref{app:B}), making the linearized HME's closure
$\bar{f}_{\+\alpha}=o(\Kn),\ |\+\alpha|>M$, then discarding all the
higher order small quantities, we have the linearized boundary
conditions from \eqref{eq:int-bc}:
\begin{eqnarray}
\label{eq:lbc-0317}
\alpha_2 !\left(\bar{f}_{\boldsymbol{\alpha}- \boldsymbol{e}_2} + 
(\alpha_2+1)\bar{f}_{\boldsymbol{\alpha}+ \boldsymbol{e}_2}\right)
=
b(\chi)\left(
\sum_{\beta_2=0,\ \text{even}}^{M-\alpha_1-\alpha_3}S(\alpha_2,\beta_2)
\left(\bar{f}_{k_{\beta_2}^{\boldsymbol{\alpha}}} -\bar{m}_{k_{\beta_2}^{\boldsymbol{\alpha}}}\right)\right),
\end{eqnarray}
where $\boldsymbol{\alpha}\in\mathbb{I} = \{|\boldsymbol{\alpha}|\leq M-1\ |\ \alpha_2\ \text{is even}\}$ as 
in \eqref{eq:phi}, 
$k^{\boldsymbol{\alpha}}_{\beta_2}:=\boldsymbol{\alpha}+(\beta_2-\alpha_2) \boldsymbol{e}_2$, 
$\displaystyle b(\chi) := \frac{2\chi}{2-\chi}(2\pi)^{-\frac{1}{2}}.$ $S(\alpha_2,\beta_2)$ is a
1D half-space integral with two parameters $\alpha_2,\beta_2\in\bbN$, defined as
\begin{eqnarray}
\label{eq:defS}
S(\alpha_2,\beta_2) := \sqrt{\frac{2\pi}{\theta}}\int_{-\infty}^0\!\!
\xi_2\theta^{\frac{\alpha_2+\beta_2}{2}} \mathrm{He}_{\beta_2}^{[0,\theta]}(\xi_2)
\mathrm{He}_{\alpha_2}^{[0,\theta]}(\xi_2)\omega^{[0,\theta]}(\xi_2) \mathrm{d}\xi_2.
\end{eqnarray}
We put all the calculation in Appendix for brevity and just
list some properties for completeness:
\begin{prop}
$S(\alpha_2,\beta_2)$ is independent of $\theta$ and can write explicitly.
	Especially, when $\alpha_2$ is even, $\beta_2$ is odd
	and $|\beta_2-\alpha_2|\neq 1$, we have
	$S(\alpha_2,\beta_2)=0.$ (proof in Appendix \ref{app:A})
\end{prop}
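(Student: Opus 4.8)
The plan is to reduce $S(\alpha_2,\beta_2)$ to a $\theta$-independent integral against the standard variance-one Hermite weight, and then to combine parity with the three-term recurrence and orthogonality of Appendix~\ref{app:A}. First I would record the scaling relation between the generalized and standard 1D Hermite polynomials. Writing $\mathrm{He}_n^{[0,1]}$ for the probabilist Hermite polynomials, the definition $\mathrm{He}_n^{[0,\theta]}(\xi_2)=\frac{(-1)^n}{\omega^{[0,\theta]}}\frac{\mathrm{d}^n\omega^{[0,\theta]}}{\mathrm{d}\xi_2^n}$ gives $\mathrm{He}_n^{[0,\theta]}(\xi_2)=\theta^{-n/2}\mathrm{He}_n^{[0,1]}(\xi_2/\sqrt{\theta})$. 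Substituting this into \eqref{eq:defS} and changing variables $x=\xi_2/\sqrt{\theta}$, every power of $\theta$ cancels: the prefactor $\theta^{(\alpha_2+\beta_2)/2}$ is exactly annihilated by the $\theta^{-\alpha_2/2}\theta^{-\beta_2/2}$ coming from the two polynomials, while the $\theta$-powers produced by $\sqrt{2\pi/\theta}$, by $\xi_2=\sqrt{\theta}\,x$, by $\omega^{[0,\theta]}$, and by $\mathrm{d}\xi_2=\sqrt{\theta}\,\mathrm{d}x$ cancel among themselves. This yields the clean representation
\[
S(\alpha_2,\beta_2)=\int_{-\infty}^0 x\,\mathrm{He}_{\alpha_2}^{[0,1]}(x)\,\mathrm{He}_{\beta_2}^{[0,1]}(x)\,e^{-x^2/2}\,\mathrm{d}x,
\]
which no longer depends on $\theta$, establishing the first assertion.

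For the explicit value and the vanishing statement I would argue by parity. The integrand $F(x):=x\,\mathrm{He}_{\alpha_2}^{[0,1]}(x)\mathrm{He}_{\beta_2}^{[0,1]}(x)e^{-x^2/2}$ obeys $F(-x)=(-1)^{1+\alpha_2+\beta_2}F(x)$, since $\mathrm{He}_n^{[0,1]}$ has parity $(-1)^n$ and $e^{-x^2/2}$ is even. When $\alpha_2$ is even and $\beta_2$ is odd, the exponent $1+\alpha_2+\beta_2$ is even, so $F$ is even and the half-line integral becomes half the full-line one,
\[
S(\alpha_2,\beta_2)=\tfrac12\int_{-\infty}^{\infty} x\,\mathrm{He}_{\alpha_2}^{[0,1]}(x)\,\mathrm{He}_{\beta_2}^{[0,1]}(x)\,e^{-x^2/2}\,\mathrm{d}x,
\]
at which point the orthogonality machinery of Appendix~\ref{app:A} applies directly. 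I would then insert the recurrence $x\,\mathrm{He}_{\alpha_2}^{[0,1]}(x)=\mathrm{He}_{\alpha_2+1}^{[0,1]}(x)+\alpha_2\,\mathrm{He}_{\alpha_2-1}^{[0,1]}(x)$ and use $\int_{\bbR}\mathrm{He}_m^{[0,1]}\mathrm{He}_n^{[0,1]}e^{-x^2/2}\,\mathrm{d}x=\sqrt{2\pi}\,n!\,\delta_{mn}$, which collapses the integral to $\sqrt{2\pi}\big[(\alpha_2+1)!\,\delta_{\beta_2,\alpha_2+1}+\alpha_2!\,\delta_{\beta_2,\alpha_2-1}\big]$. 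This is supported only on $|\beta_2-\alpha_2|=1$, so $S(\alpha_2,\beta_2)=0$ whenever $|\beta_2-\alpha_2|\neq1$, as claimed; the same identity also supplies the explicit value of $S$ on the two surviving diagonals.

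To complete the full ``can be written explicitly'' claim for all parities, I would note that in the remaining classes ($\alpha_2+\beta_2$ even) $F$ is odd, so the half-line integral no longer equals half the full-line integral; there one evaluates $\int_{-\infty}^0 x\,\mathrm{He}_{\alpha_2}^{[0,1]}\mathrm{He}_{\beta_2}^{[0,1]}e^{-x^2/2}\,\mathrm{d}x$ via the standard closed-form half-line Hermite moments, again using the recurrence to lower the degree. The main obstacle is merely bookkeeping rather than any conceptual difficulty: one must verify the complete cancellation of the $\theta$-powers in the change of variables, and handle the edge case $\alpha_2=0$ (where the $\mathrm{He}_{\alpha_2-1}^{[0,1]}$ term is absent) so that the recurrence and orthogonality are applied without spurious boundary terms.
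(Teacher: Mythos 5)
Your proof is correct, but it follows a genuinely different route from the paper's. You first rescale $\xi_2=\sqrt{\theta}\,x$ using $\mathrm{He}_n^{[0,\theta]}(\xi_2)=\theta^{-n/2}\mathrm{He}_n^{[0,1]}(\xi_2/\sqrt{\theta})$, which kills all $\theta$-dependence at once, and then exploit the parity of the integrand: when $\alpha_2+\beta_2$ is odd the half-line integral is half the full-line one, so the three-term recurrence plus orthogonality immediately gives $S=\frac{\sqrt{2\pi}}{2}\bigl[(\alpha_2+1)!\,\delta_{\beta_2,\alpha_2+1}+\alpha_2!\,\delta_{\beta_2,\alpha_2-1}\bigr]$, from which both the diagonal values and the vanishing for $|\beta_2-\alpha_2|\neq1$ follow. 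The paper instead never leaves the half-line: it integrates $I(\alpha,\beta)$ by parts in two different ways, equates the results to get a recursion whose boundary terms are $z_\alpha=\theta^{\alpha/2}\mathrm{He}_\alpha^{[0,\theta]}(0)$ (a $\theta$-free sequence with $z_n=0$ for odd $n$), and deduces the closed form $S(\alpha,\beta)=\frac{\alpha+\beta+1}{(\alpha-\beta)^2-1}z_\alpha z_\beta$ off the two diagonals; the vanishing then comes from $z_{\beta_2}=0$ rather than from orthogonality. Your parity argument is arguably more transparent for the vanishing statement and for $\theta$-independence, but it buys the explicit value only on the odd-sum classes; for the same-parity pairs (which the paper genuinely needs, e.g.\ the entries $S(2k-2,2l-2)$ of $\boldsymbol{T}^b$ and $\boldsymbol{S}_k$) you defer to ``standard half-line Hermite moments,'' whereas the paper's integration-by-parts recursion delivers a single uniform formula covering that case. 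That deferred step is routine bookkeeping, not a gap in the ideas, but to fully match the ``can write explicitly'' claim you would need to carry it out, e.g.\ by reproducing something like the paper's two-sided integration by parts on the half-line.
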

\begin{prop}
	When $\+\alpha\neq\+0,\+e_i,2\+e_i(1\leq i\leq3)$, $\bar{m}_{\+\alpha}=0.$ Especially,
	$\bar{m}_{\+e_i}=\bar{u}_i^W-\bar{u}_i,$ and 
	$\bar{m}_{2\+e_i}=\frac{1}{2}(\bar{\theta}^W-\bar{\theta}).$
	(proof in Appendix \ref{app:B})
\end{prop}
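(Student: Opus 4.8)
The plan is to compute the coefficients $m_{\+\alpha}$ directly from their definition and then linearize. First I would write $\mathcal{M}^W = \rho^W\omega^{[\+u^W,\theta^W]}(\+\xi)$, so that by \eqref{eq:falpha} (with $\mathcal{M}^W$ replacing $f$) the $m_{\+\alpha}$ are exactly the coefficients in $\mathcal{M}^W = \omega^{[\+u,\theta]}\sum_{\+\alpha}m_{\+\alpha}\mathrm{He}_{\+\alpha}^{[\+u,\theta]}$. The conceptual difficulty is that $\mathcal{M}^W$ is built on the wall state $[\+u^W,\theta^W]$ whereas the basis rests on the gas state $[\+u,\theta]$; since the two Gaussians differ, every $m_{\+\alpha}$ is \emph{a priori} nonzero and must be shown to truncate after linearization.

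The observation that makes this tractable is that under the linearization $\+u^W-\+u$ and $\theta^W-\theta$ are both $O(\Kn)$. I would therefore Taylor expand $\omega^{[\+u^W,\theta^W]}$ about $\omega^{[\+u,\theta]}$ in these two differences, retaining only first-order terms, using the parameter-derivative identities
\[
\pd{\omega^{[\+u,\theta]}}{u_i} = \omega^{[\+u,\theta]}\,\mathrm{He}_{\+e_i}^{[\+u,\theta]},
\qquad
\pd{\omega^{[\+u,\theta]}}{\theta} = \frac{1}{2}\,\omega^{[\+u,\theta]}\sum_{d=1}^{3}\mathrm{He}_{2\+e_d}^{[\+u,\theta]},
\]
which follow from the definition of $\mathrm{He}_{\+\alpha}^{[\+u,\theta]}$ together with $\partial_{u_i}\omega=-\partial_{\xi_i}\omega$ and the heat-kernel relation $\partial_\theta\omega=\tfrac12\sum_d\partial_{\xi_d}^2\omega$. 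Substituting gives
\[
\mathcal{M}^W = \rho^W\omega^{[\+u,\theta]}\left(1 + \sum_{i=1}^{3}(u_i^W-u_i)\,\mathrm{He}_{\+e_i}^{[\+u,\theta]} + \frac{\theta^W-\theta}{2}\sum_{d=1}^{3}\mathrm{He}_{2\+e_d}^{[\+u,\theta]}\right) + O(\Kn^2),
\]
and matching coefficients against the orthogonal basis (Appendix \ref{app:A}) leaves, at first order, only $m_{\+0}=\rho^W$, $m_{\+e_i}=\rho^W(u_i^W-u_i)$ and $m_{2\+e_i}=\tfrac12\rho^W(\theta^W-\theta)$, with $m_{\+\alpha}=O(\Kn^2)$ for every other $\+\alpha$.

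It then remains to pass to dimensionless variables. Using $\+u=\sqrt{\theta_0}\,\bar{\+u}$, $\theta=\theta_0(1+\bar\theta)$ and the analogous wall relations, together with the normalization $m_{\+\alpha}=\rho_0\theta_0^{|\+\alpha|/2}\bar m_{\+\alpha}$ extending \eqref{eq:dimensionless}, I would substitute $u_i^W-u_i=\sqrt{\theta_0}(\bar u_i^W-\bar u_i)$ and $\theta^W-\theta=\theta_0(\bar\theta^W-\bar\theta)$ and replace the prefactor $\rho^W$ by its background value $\rho_0$ (the correction being $O(\Kn^2)$ once multiplied by the $O(\Kn)$ differences). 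This yields $\bar m_{\+e_i}=\bar u_i^W-\bar u_i$ and $\bar m_{2\+e_i}=\tfrac12(\bar\theta^W-\bar\theta)$; the coefficient $m_{\+0}=\rho^W$ merely reproduces the wall density and is excluded from the statement, while all remaining $\bar m_{\+\alpha}$ vanish, as claimed.

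I expect the main obstacle to be bookkeeping rather than ideas: verifying the $\theta$-derivative identity (recognizing $\tfrac12\sum_d\mathrm{He}_{2\+e_d}^{[\+u,\theta]}$ as the normalized second central moment) and then tracking constants so the $\rho_0,\theta_0$ prefactors cancel correctly. One must also confirm that the discarded second-order Taylor terms are genuinely $O(\Kn^2)$; this is exactly where the assumption that $\bar{\+u},\bar{\+u}^W,\bar\theta,\bar\theta^W$ are $O(\Kn)$ enters, and it is the reason any $\+\alpha$ with two raised indices (such as $\+e_i+\+e_j$ or $3\+e_i$) contributes nothing at this order. As an alternative that parallels the direct computation deferred to Appendix \ref{app:B}, the factorized one-dimensional integrals $\int\omega^{[u^W,\theta^W]}\mathrm{He}_{\alpha}^{[u,\theta]}\,\mathrm{d}\xi$ can be evaluated through the Hermite generating function $\sum_n\frac{t^n}{n!}\mathrm{He}_n^{[u,\theta]}(\xi)=\exp\left(\frac{(\xi-u)t}{\theta}-\frac{t^2}{2\theta}\right)$, whose linearization truncates at $n\le 2$ in precisely the same manner.
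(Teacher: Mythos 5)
Your argument is correct, and it reaches the conclusion by a genuinely different route from the paper. The paper (Appendix~\ref{app:B}) factorizes $m_{\+\alpha}=\rho^W J_{\alpha_1}(u_1^W-u_1)J_{\alpha_2}(0)J_{\alpha_3}(u_3^W-u_3)$ into one-dimensional integrals, derives the recursion $J_m(x)=\frac{1}{m}\bigl((\theta^W-\theta)J_{m-2}(x)+xJ_{m-1}(x)\bigr)$ by integration by parts, and then concludes $J_m(x)=O(\varepsilon^{\lceil m/2\rceil})$ by induction, reading off $\bar m_{\+e_i}$ and $\bar m_{2\+e_i}$ from the explicit $J_1,J_2$. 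You instead Taylor-expand the wall Maxwellian in the parameters $(\+u^W-\+u,\theta^W-\theta)$ using the identities $\partial_{u_i}\omega=\omega\,\mathrm{He}_{\+e_i}^{[\+u,\theta]}$ and $\partial_\theta\omega=\tfrac12\omega\sum_d\mathrm{He}_{2\+e_d}^{[\+u,\theta]}$ (both of which I checked and which do follow from the definition of $\mathrm{He}_{\+\alpha}$ and the heat-kernel relation), then read off the coefficients by orthogonality. Both arguments operate at the same formal level of rigor in discarding $O(\varepsilon^2)$ terms; yours is more conceptual and makes transparent \emph{why} only $\+\alpha=\+0,\+e_i,2\+e_i$ survive (they index the first-order parameter derivatives of the Gaussian), while the paper's recursion has the practical advantage of producing explicit closed forms for every $J_m$, which would be needed if one wanted the next-order corrections. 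Your generating-function remark is essentially the paper's 1D factorization in disguise. The only point worth making explicit in a write-up is the normalization bookkeeping you already flag: $m_{\+\alpha}$ carries the factor $\rho^W=\rho_0(1+\bar\rho^W)$, and replacing it by $\rho_0$ costs only an $O(\varepsilon^2)$ error because it multiplies quantities that are already $O(\varepsilon)$.
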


\begin{remark}
	The boundary conditions \eqref{eq:lbc-0317} are in 
	an even-odd parity form, i.e, the left-hand side of \eqref{eq:lbc-0317} 
	only involves $\bar{f}_{\+\alpha}$
	where $\alpha_2$ is odd and the right-hand side $\alpha_2$ is even.
\end{remark}

\subsection{Reduced Moment System.}\label{sec:25}
Under the assumptions of the temperature jump problem proposed by
Welander \cite{Welander1954}, we claim that the equations
including $\bar{\theta}$ can decouple from the whole LHME. Thus, we
only need to solve a smaller moment system, which is called
the reduced moment system, to get solutions of the temperature jump problem.

In the temperature jump problem, we assume that the gas velocity is
$\+u=(u_1,0,0)^T$ and all derivatives in $x_1,x_3,t$ vanish. Further,
we assume that there is a given constant gradient of the temperature
normal to the wall at infinity. Thus, we just
set $\boldsymbol{\alpha}=2 \boldsymbol{e}_k+i \boldsymbol{e}_2,\ k=1,2,3,\ 
0\leq i\leq M-2$ in the $M$-th order LHME \eqref{eq:simHME}, and 
set $\+\alpha=2\+e_k+i\+e_2,\ k=1,2,3,\ 0\leq i\leq M-3,\ i$ even,
in the linearized wall boundary conditions \eqref{eq:lbc-0317}.
Since $\bar{h}_{\+e_2}=\bar{u}_2=0$, \eqref{eq:f123} and
\eqref{eq:hQ}, we will have $3(M-1)$ equations with the same number
unknowns $\bar{h}_{\+\alpha}$, where
$\sum_{i=1}^3\bar{h}_{2\+e_i}=\frac{3}{2}\bar{\theta}$.
If we impose the remaining required boundary conditions by the
boundedness of solutions, we will get a system of ODEs with the
correct number of boundary conditions.

The details will be shown in the next section, and here we just
mention two important tricks. First, the main focus in the temperature
jump problem, i.e. $\bar{\theta}$, is only dependent on $x_2$, so we
can add the corresponding terms on $x_1$ and $x_3$ to get a reduced system
of $2(M-1)$ equations. This is similar as integrating in the $x_1$ and
$x_3$ dimension when applying the linearized Boltzmann equation to the
temperature jump problem \cite{Williams2001}. Second, since the
Boltzmann collision operator always has the nontrivial null space which
means the conservation laws, we can only expect part of
$\bar{f}_{\+\alpha}$ to be bounded at infinity. So we dividedly
consider what we call the decaying variables and non-decaying variables.
\begin{remark}
	The Kramers' problem \cite{Kramers1949}, which can be seen as the
	velocity analogue of the temperature jump problem, would also be
	solved by a reduced moment system. For LHME with the BGK collision
	model, this is claimed in \cite{Lijun2017}. Here for the
	Shakhov collision model, $\bar{q}_1=3\bar{f}_{3\+e_1}+
	\bar{f}_{\+e_1+2\+e_2}+\bar{f}_{\+e_1+2\+e_3}$ will appear.
	So similarly we can set  $\+\alpha=\boldsymbol{e}_1+ 
	i \boldsymbol{e}_2,\  0\leq i \leq M-1$ and assume
	$\bar{f}_{3\+e_1}=0$, $\bar{f}_{\+e_1+2\+e_3}=0$ to get $M$
	equations with the same number unknowns.
\end{remark}

\section{The Temperature Jump Problem}\label{sec.3}

For simplicity, we write $x_2$ as $y$ and define 
$\bar{\omega}(\+\xi) = \omega^{[\+0,1]}(\+\xi)$,
\begin{eqnarray} \label{eq:simfg}
	\bar{g}_i=\bar{f}_{2\boldsymbol{e}_1+i \boldsymbol{e}_2}+\bar{f}_{2\boldsymbol{e}_3+i \boldsymbol{e}_2},
	\ \bar{t}_i=\bar{f}_{(i+2) \boldsymbol{e}_2},\ \overline{\mathrm{He}}_{\+\alpha}(\+\xi)= 
	\mathrm{He}^{[\+0,1]}_{\+\alpha}(\+\xi), \  
	\ang{\cdot}_{\bar{\omega}}=\displaystyle\int_{\bbR^3}\!\!\cdot\bar{\omega}\, \mathrm{d}\+\xi.
\end{eqnarray}
\begin{theorem}
	Then if $\bar{\theta}$ satisfies \eqref{eq:simHME},
	it must satisfy the following $2(M-1)$ equations:
\begin{eqnarray} \label{eq:bartheta}
\od{\bar{q}_2}{\bar{y}} = 0,\quad \od{\bar{\theta}}{\bar{y}} &=& 
	-\frac{2}{5}\frac{1}{\mathit{Kn}}\Pr\bar{q}_2 - \frac{4}{5}\od{(\bar{t}_0+(1-\delta_{M,3})
	(6\bar{t}_2+\bar{g}_2))}{\bar{y}},\\
\label{eq:mat-temp}
\boldsymbol{M}\od{\hat{w}}{\bar{y}} &:=& \begin{bmatrix}
	\boldsymbol{0} & \boldsymbol{M}_0 \\
	\boldsymbol{M}_0^T & \boldsymbol{0}
\end{bmatrix}\odd{\bar{y}} \begin{bmatrix} \hat{w} _{\text{even}} \\ \hat{w} _{\text{odd}}
\end{bmatrix}
	= -\frac{1}{\mathit{Kn}}\hat{w},
\end{eqnarray}
where $M\geq3$, $\bar{q}_2=3\bar{t}_1+\bar{g}_1$, 
	$\hat{w}=\+L\hat{f}:=(\hat{w} _{\text{even}},\hat{w} _{\text{odd}})^T$
with $\+L= \mathrm{diag}(\+L_1,\+L_2)$, $\hat{f}=(\hat{f} _{\text{even}}, \hat{f} _{\text{odd}})^T$.
Here $\hat{f} _{\text{even}}=(\bar{t}_0,\bar{t}_2,\bar{g}_2,\cdots,\bar{t}_{m_e-1},\bar{g}_{m_e-1})^T
\in \mathbb{R}^{m_e}$ 
collects unknowns with even subscripts, $\hat{f} _{\text{odd}}=
(\bar{t}_1-\frac{\bar{q}_2}{5},\bar{t}_3,\bar{g}_3,\cdots,\bar{t}_{m_o},\bar{g}_{m_o})^T\in \mathbb{R}^{m_o}$ 
collects the odd. The index $m_o = 2\lfloor \frac{M-1}{2}\rfloor-1,\quad m_e = 2\lfloor \frac{M}{2}\rfloor-1,$
thus $m_o+m_e=2(M-2)$. The matrix $\+M_0=(m_{ij}^0)\in \mathbb{R}^{m_e\times m_o},\+L_1= \mathrm{diag}(a_i)
\in \mathbb{R}^{m_e\times m_e},\+L_2 = \mathrm{diag}(b_i)\in \mathbb{R}^{m_o\times m_o}$ can write explicitly:
\begin{eqnarray} \label{eq:m12t}
	m_{ij}^0 = \frac{1}{a_ib_j}\langle \phi_i, \xi_2\varphi_j \rangle_{\bar{\omega}},\ 
	a_i := \sqrt{\langle \phi_i, \phi_i\rangle_{\bar{\omega}}};\ 
	b_i := \sqrt{\langle \varphi_i, \varphi_i\rangle_{\bar{\omega}}},
\end{eqnarray}
where $\phi=(\phi_i)\in \mathbb{R}^{m_e}$ and $\varphi=(\varphi_j)\in \mathbb{R}^{m_o}$
come from rearranging the Hermite polynomials:
\begin{equation*}
\phi_1 = \overline{\mathrm{He}}_{2 \boldsymbol{e}_2}-\frac{1}{2}\left(
\overline{\mathrm{He}}_{2 \boldsymbol{e}_1}+\overline{\mathrm{He}}_{2 \boldsymbol{e}_3} \right),
\phi_{2k}= \overline{\mathrm{He}}_{(2k+2) \boldsymbol{e}_2},\ \phi_{2k+1}
= \frac{1}{2}\left(\overline{\mathrm{He}}_{2 \boldsymbol{e}_1+ 2k \boldsymbol{e}_2}+
\overline{\mathrm{He}}_{2 \boldsymbol{e}_3+ 2k \boldsymbol{e}_2}\right);
\end{equation*}
\vspace{-4mm}
\begin{equation*}
\small
\varphi_1 = \overline{\mathrm{He}}_{3 \boldsymbol{e}_2}-\frac{3}{2}\left(
\overline{\mathrm{He}}_{2 \boldsymbol{e}_1+ \boldsymbol{e}_2}+\overline{\mathrm{He}}_{2 \boldsymbol{e}_3+ \boldsymbol{e}_2} \right),
\varphi_{2k} = \overline{\mathrm{He}}_{(2k+3) \boldsymbol{e}_2},\ \varphi_{2k+1}
= \frac{1}{2}\left(\overline{\mathrm{He}}_{2 \boldsymbol{e}_1+ (2k+1) \boldsymbol{e}_2}+
\overline{\mathrm{He}}_{2 \boldsymbol{e}_3+ (2k+1) \boldsymbol{e}_2}\right).
\end{equation*}
\end{theorem}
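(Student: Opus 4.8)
The plan is to start from the full reduced system and peel off the two conservation-type identities \eqref{eq:bartheta} before exhibiting the symmetric block structure \eqref{eq:mat-temp}. First I would specialize the $M$-th order LHME \eqref{eq:simHME} to the temperature-jump geometry: since all derivatives except the one in $\bar y=x_2$ vanish, only the $d=2$ streaming terms survive, so setting $\+\alpha=2\+e_k+i\+e_2$ ($k=1,2,3$, $0\le i\le M-2$) produces a coupled ODE system in $\bar y$ for the coefficients $\bar f_{\+\alpha}$. Using the $x_1\leftrightarrow x_3$ symmetry I would add the $k=1$ and $k=3$ equations to replace the pair $(\bar f_{2\+e_1+i\+e_2},\bar f_{2\+e_3+i\+e_2})$ by the single variable $\bar g_i$ of \eqref{eq:simfg}, keeping the $k=2$ equation for $\bar t_i$; this collapses the $3(M-1)$ equations to $2(M-1)$. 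Throughout I would convert $\bar h_{\+\alpha}$ back to $\bar f_{\+\alpha}$ via \eqref{eq:hQ}, which only shifts the $i=0$ second moments by $\tfrac12\bar\theta$ and leaves the $\+e_2$-column governed by the material constraints \eqref{eq:f123}.

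Next I would extract \eqref{eq:bartheta}. Summing the three $i=0$ second-moment equations over $k$ and invoking $\sum_d \bar f_{2\+e_d}=0$ from \eqref{eq:f123} kills the collision side and leaves exactly $\od{\bar q_2}{\bar y}=0$, with $\bar q_2=3\bar t_1+\bar g_1$ matching the heat-flux definition in \eqref{eq:f123}. The temperature equation comes from the weighted combination $3\times(k{=}2)+(k{=}1)+(k{=}3)$ of the $i=1$ equations: on the collision side the weights reproduce $3\bar t_1+\bar g_1=\bar q_2$ together with the Shakhov correction $-\tfrac{1-\Pr}{5}\bar q_2$ from each of the five contributing terms, giving $-\tfrac{1}{\Kn}\Pr\bar q_2$; on the streaming side the $\tfrac12\bar\theta$ shifts accumulate to $\tfrac52\od{\bar\theta}{\bar y}$ while the surviving fourth moments assemble $\bar t_0+6\bar t_2+\bar g_2$, and dividing by $\tfrac52$ yields the stated identity (the factor $1-\delta_{M,3}$ simply records that $\bar t_2,\bar g_2$ are absent once $M=3$).

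For the core statement \eqref{eq:mat-temp} I would reorganize the remaining $2(M-2)$ equations by the parity of $\alpha_2$ and read the reduced system as a one-variable kinetic problem in $\xi_2$. The key observation is that in one dimension the streaming term $\bar f_{\+\alpha-\+e_2}+(\alpha_2+1)\bar f_{\+\alpha+\+e_2}$ is nothing but the matrix of multiplication by $\xi_2$ in the Hermite basis, and this multiplication is self-adjoint with respect to $\ang{\cdot}_{\bar\omega}$ while flipping the parity of $\alpha_2$. Hence, after collecting the even-$\alpha_2$ modes into $\phi=(\phi_i)$ and the odd ones into $\varphi=(\varphi_j)$, multiplication by $\xi_2$ sends $\mathrm{span}\,\phi$ into $\mathrm{span}\,\varphi$ and back, forcing the block-antidiagonal shape in \eqref{eq:mat-temp}; self-adjointness then gives $\ang{\phi_i,\xi_2\varphi_j}_{\bar\omega}=\ang{\varphi_j,\xi_2\phi_i}_{\bar\omega}$, i.e. the lower block is the transpose of the upper, so $\+M_0$ and $\+M_0^T$ appear together. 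The orthogonalized choices of $\phi_1$ and $\varphi_1$, and correspondingly the shifted entry $\bar t_1-\bar q_2/5$ of $\hat f_{\text{odd}}$, are dictated by requiring $\phi_1$ to be $\ang{\cdot}_{\bar\omega}$-orthogonal to the energy mode $\overline{\mathrm{He}}_{2\+e_1}+\overline{\mathrm{He}}_{2\+e_2}+\overline{\mathrm{He}}_{2\+e_3}$ (the $\bar\theta$ direction removed above) and $\varphi_1$ orthogonal to the heat-flux mode $\overline{\mathrm{He}}_{3\+e_2}+\overline{\mathrm{He}}_{2\+e_1+\+e_2}+\overline{\mathrm{He}}_{2\+e_3+\+e_2}$ (the $\bar q_2$ direction); a short computation with $\ang{\overline{\mathrm{He}}_{\+\alpha},\overline{\mathrm{He}}_{\+\beta}}_{\bar\omega}=\delta_{\+\alpha\+\beta}\,\+\alpha!$ confirms both orthogonalities and identifies $\ang{g,\phi_i/a_i}_{\bar\omega}$ and $\ang{g,\varphi_j/b_j}_{\bar\omega}$ with the components of $\hat w=\+L\hat f$.

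Finally I would verify that the Shakhov collision term collapses to $-\tfrac{1}{\Kn}\hat w$ in these normalized coordinates: the corrections $-\tfrac{1-\Pr}{5}\bar q_i$ live only on the $i=1$ modes and, viewed as a function, are proportional to the heat-flux polynomial $\overline{\mathrm{He}}_{3\+e_2}+\overline{\mathrm{He}}_{2\+e_1+\+e_2}+\overline{\mathrm{He}}_{2\+e_3+\+e_2}$, so the orthogonality of $\varphi_1$ absorbs them entirely into the already-separated $\bar q_2$ equation while the surviving modes see only the BGK-type relaxation $\delta_{|\+\alpha|\ge2}\bar f_{\+\alpha}$; the rescaling by $a_i,b_j$ turns this relaxation into the identity on the right-hand side. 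I expect the main obstacle to be the third step, specifically establishing the block symmetry cleanly: one must simultaneously recognize the streaming term as the self-adjoint operator $\xi_2\cdot$, choose the parity-graded orthogonal bases $\phi,\varphi$ with their two non-obvious first vectors, and confirm that the normalization $m_{ij}^0=\tfrac{1}{a_ib_j}\ang{\phi_i,\xi_2\varphi_j}_{\bar\omega}$ renders $\+M$ exactly symmetric, after which the remaining manipulations are routine Hermite-recurrence bookkeeping.
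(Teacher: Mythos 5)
Your proposal is correct, and for the two scalar identities \eqref{eq:bartheta} it coincides with the paper's proof: the same specialization $\+\alpha=2\+e_k+i\+e_2$, the same $k=1,3$ summation producing $\bar g_i$, the same use of $\bar t_0+\bar g_0=0$ at $i=0$, and the same $3{:}1$ combination at $i=1$ (your accounting of the factor $\Pr$ via the five Shakhov corrections and of the $\tfrac52\od{\bar\theta}{\bar y}$ accumulation both check out). Where you genuinely diverge is the verification of \eqref{eq:mat-temp}: the paper computes every entry $a_i$, $b_i$, $\wang{i}{j}$ explicitly from the three-term recurrence and orthogonality and then ``verifies'' the system, whereas you derive the block-antidiagonal symmetric structure a priori by recognizing the streaming term $\bar h_{\+\alpha-\+e_2}+(\alpha_2+1)\bar h_{\+\alpha+\+e_2}$ as the matrix of multiplication by $\xi_2$, which is self-adjoint for $\ang{\cdot}_{\bar{\omega}}$ and flips the parity of $\alpha_2$, and by characterizing $\phi_1,\varphi_1$ through orthogonality to the conserved energy and heat-flux modes. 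Your route explains \emph{why} $\+M$ is symmetric and why the Shakhov correction drops out of \eqref{eq:mat-temp} (it is proportional to the heat-flux polynomial, to which $\varphi_1$ is orthogonal) — points the paper leaves implicit — while the paper's route delivers the explicit numbers ($a_1=\sqrt3$, $b_1=\sqrt{15}$, $\wang{1}{1}=9$, the bandwidth-three sparsity of $\+M_0$) that are reused downstream in the full-rank argument for $\+M_0$ and in the $M=3$ example; your ``routine Hermite-recurrence bookkeeping'' would still have to be carried out to recover them. One caveat: the $(1-\delta_{|\+\alpha|,M})$ factor means the streaming operator is really the truncation $\mathcal{P}_M\,\xi_2\,\mathcal{P}_M$ rather than bare multiplication by $\xi_2$; since an orthogonal projection composed this way remains self-adjoint, your symmetry argument survives, but this should be said explicitly.
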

\begin{proof}
As mentioned before, we set $\boldsymbol{\alpha}=2 \boldsymbol{e}_k+i \boldsymbol{e}_2,\ k=1,2,3,\ 
0\leq i\leq M-2,\ M\geq3$ in \eqref{eq:simHME}. For each $i$, add the equations with $k=1$ and $k=3$,
then we have
\beaa
	\odd{\bar{y}}(\bar{g}_{i-1}+\delta_{i,1}\bar{\theta}) + (1-\delta_{i,M-2})(i+1)\odd{\bar{y}}\bar{g}_{i+1}
	= - \frac{1}{\Kn}\left(\bar{g}_i-2\frac{1-\Pr}{5}\delta_{i,1}\bar{q}_2\right),
\eeaa
	where $\bar{g}_{-1}=0$.
And for $k=2$ we have
\beaa
	\odd{\bar{y}}\left(\bar{t}_{i-1}+\frac{1}{2}\delta_{i,1}\bar{\theta}\right) + 
	(1-\delta_{i,M-2})(i+3)\odd{\bar{y}}\bar{t}_{i+1} = -\frac{1}{\Kn}\left(
	\bar{t}_i-\frac{1-\Pr}{5}\delta_{i,1}\bar{q}_2\right),
\eeaa
	where $\bar{t}_{-1}=0$ because $\bar{f}_{\+e_2}=\bar{u}_2=0$ in this case. Note that
	$\bar{g}_0+\bar{t}_0=0$ and $\bar{q}_2=3\bar{t}_1+\bar{g}_1$ from \eqref{eq:f123}, so
	add the equations with $i=0$, we have
	\beaa
	\od{\bar{q}_2}{\bar{y}} = 0\qquad\Rightarrow\qquad \bar{q}_2 \text{ is a constant.}
	\eeaa
	Add the equations with $i=1$ by proportion 3:1 to make the right hand side a constant,
	we have
	\beaa
\od{\bar{\theta}}{\bar{y}} &=& 
	-\frac{2}{5}\frac{1}{\mathit{Kn}}\Pr\bar{q}_2 - \frac{4}{5}\od{(\bar{t}_0+(1-\delta_{M,3})
	(6\bar{t}_2+\bar{g}_2))}{\bar{y}}.	
	\eeaa
	Note Appendix \ref{app:A} tells us 
	$\xi_d \overline{\mathrm{He}}_{\+\alpha} = \alpha_d \overline{\mathrm{He}}_{\+\alpha-\+e_d} + \overline{
	\mathrm{He}}_{\+\alpha+\+e_d},\ \ang{\overline{\mathrm{He}}_{\+\alpha},
	\overline{\mathrm{He}}_{\+\beta}}_{\bar{\omega}}
	= {\+\alpha!}\ \delta_{\+\alpha,\+\beta}$, so direct computation shows that from \eqref{eq:m12t},
	\beaa
	a_1=\sqrt{3},\ a_{2k}=\sqrt{(2k+2)!},\ a_{2k+1}=\sqrt{(2k)!}\ ,\\
	b_1=\sqrt{15},\ b_{2k}=\sqrt{(2k+3)!},\ b_{2k+1}=\sqrt{(2k+1)!}\ ,\\
	\wang{1}{1}=9,\ \wang{2}{1}=24,\ \wang{3}{1}=-6, \\
	\wang{2k}{2k}=(2k+3)!,\quad \wang{2k}{2k-2}=(2k+2)!,\\ 
	\wang{2k+1}{2k+1}=(2k+1)!,\quad \wang{2k+1}{2k-1}=(2k)!.
	\eeaa
	And the other entries of $\+M_0$ are zero. So if we eliminate $\bar{\theta}$ in the later equations,
	we can verify that $\bar{\theta}$ satisfies \eqref{eq:bartheta}\eqref{eq:mat-temp}.
\hfill
\end{proof}

Since the special structure of $\+M$ from the even-odd parity form, immediately we have
\begin{lemma}\label{lem:01}
	$\boldsymbol{M}$ has $m_o$ positive , $m_o$ negative and $m_e-m_o$ zero eigenvalues.
\end{lemma}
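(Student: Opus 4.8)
The plan is to determine the inertia of $\+M$ directly from its block anti-diagonal shape together with the entries of $\+M_0$ already computed in the proof of the theorem. Since $\+M=\begin{bmatrix}\+0 & \+M_0\\ \+M_0^T & \+0\end{bmatrix}$ is real symmetric, its nonzero eigenvalues split into $\pm$ pairs: if $(u,v)^T$ is an eigenvector for $\lambda$, so that $\+M_0 v=\lambda u$ and $\+M_0^T u=\lambda v$, then $(u,-v)^T$ is an eigenvector for $-\lambda$. Moreover the rank of $\+M$ equals $2\,\mathrm{rank}(\+M_0)$, since $\mathrm{rank}(\+M_0)=\mathrm{rank}(\+M_0^T)$. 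Hence the number of nonzero eigenvalues is $2\,\mathrm{rank}(\+M_0)$, split evenly into positive and negative by the pairing, while the number of zero eigenvalues is $(m_e+m_o)-2\,\mathrm{rank}(\+M_0)$. The whole lemma therefore reduces to the single claim $\mathrm{rank}(\+M_0)=m_o$, since then the counts become $m_o$, $m_o$, and $m_e-m_o$, as asserted.

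To prove that $\+M_0$ has full column rank I would invoke the explicit values of $\wang{i}{j}$ recorded in the proof above. They show two things. First, every entry strictly above the diagonal vanishes: the only nonzero off-diagonal inner products occur at the index pairs $(2,1),(3,1),(2k,2k-2),(2k+1,2k-1)$, all of which satisfy $i>j$, so $m_{ij}^0=\wang{i}{j}/(a_ib_j)=0$ whenever $i<j$. Second, every diagonal entry is nonzero, since $\wang{1}{1}=9$, $\wang{2k}{2k}=(2k+3)!$ and $\wang{2k+1}{2k+1}=(2k+1)!$ are all positive and the normalizers $a_i,b_j$ are positive by definition \eqref{eq:m12t}. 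Because $m_e\geq m_o$ (indeed $m_e-m_o\in\{0,2\}$), the leading $m_o\times m_o$ submatrix of $\+M_0$ is lower triangular with nonvanishing diagonal, hence has determinant $\prod_{i=1}^{m_o}m_{ii}^0\neq0$ and is invertible. This forces $\mathrm{rank}(\+M_0)=m_o$, closing the argument through the reduction of the first paragraph.

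The only delicate point is confirming this triangular sparsity pattern, that is, that no inner product $\wang{i}{j}$ with $i<j$ survives once $\phi$ and $\varphi$ are ordered as in $\hat{f}_{\text{even}}$ and $\hat{f}_{\text{odd}}$. This is precisely what the Hermite multiplication rule $\xi_2\overline{\mathrm{He}}_{\+\alpha}=\alpha_2\overline{\mathrm{He}}_{\+\alpha-\+e_2}+\overline{\mathrm{He}}_{\+\alpha+\+e_2}$ together with the orthogonality $\ang{\overline{\mathrm{He}}_{\+\alpha},\overline{\mathrm{He}}_{\+\beta}}_{\bar{\omega}}=\+\alpha!\,\delta_{\+\alpha,\+\beta}$ yields after matching subscripts, exactly as in the entry computations of the theorem. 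All the genuine combinatorics of the lemma lives in this verification, whereas the spectral counting in the first paragraph is the standard symmetry of an off-diagonal block matrix and requires no further work.
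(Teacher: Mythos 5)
Your proposal is correct and follows essentially the same route as the paper: the $(u,v)\mapsto(u,-v)$ pairing of eigenvalues plus a proof that $\+M_0$ has full column rank $m_o$, which in both cases rests on the lower-triangular sparsity and nonzero diagonal of $\+M_0$ read off from the Hermite recurrence and orthogonality (the paper phrases this as an induction annihilating the coefficients of a hypothetical null vector, which is just forward substitution on your triangular leading block). No gap.
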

\begin{proof}
Evidently, if $\begin{bmatrix} \boldsymbol{0} & \boldsymbol{M}_0 \\
	\boldsymbol{M}_0^T & \boldsymbol{0} \end{bmatrix} \begin{bmatrix} x_1 \\ x_2 \end{bmatrix}
		= \lambda \begin{bmatrix} x_1 \\ x_2 \end{bmatrix},\ \text{then} \
\begin{bmatrix} \boldsymbol{0} & \boldsymbol{M}_0 \\
	\boldsymbol{M}_0^T & \boldsymbol{0} \end{bmatrix} \begin{bmatrix} x_1 \\ -x_2 \end{bmatrix}
		= -\lambda \begin{bmatrix} x_1 \\ -x_2 \end{bmatrix}.$ So the positive and negative 
			eigenvalues of $\+M$ appear in pairs. $\+M$ is real symmetric so it can be real 
			diagonalized. What's more, we claim that $\+M_0$
			has a column full rank of $m_o$. If suppose the contrary,
			there exists non-trivial coefficients
			$r_j\in\bbR,1\leq j\leq m_o$ such that
	\vspace{-3mm}{\setlength{\belowdisplayskip}{3pt}
	 \begin{eqnarray*}
	\ang{\phi_i,\xi_2\sum_{j=1}^{m_o}r_j\varphi_j}_{\bar{\omega}}=0,\quad \forall 1\leq i\leq m_e.
	\end{eqnarray*} }
But if we let $i=1$, the trinomial recurrence and orthogonality tell us 
	\vspace{-3mm}{\setlength{\belowdisplayskip}{3pt}
	 \begin{eqnarray*}
	\ang{\phi_1,\xi_2\sum_{j=1}^{m_o}r_j\varphi_j}_{\bar{\omega}}=
	\ang{\phi_1,\xi_2r_1\varphi_1}_{\bar{\omega}}= 9r_1,
	\end{eqnarray*} }
so $r_1=0$. By induction, we can see that $r_j=0,1\leq j\leq m_o$. Thus the lemma is proved.
\hfill
\end{proof}

From the process of the proof we immediately have
\begin{cor} There exist an orthogonal diagonalization
	$\boldsymbol{M} \boldsymbol{R}= \boldsymbol{R} \boldsymbol{\Lambda}$ 
	where $\boldsymbol{R}^T \boldsymbol{R}= \boldsymbol{R} \boldsymbol{R}^T = 
	\boldsymbol{I}_{2(M-2)}$ is the $2(M-2)$-th order identity matrix and
\begin{eqnarray}
\label{eq:Rexp}
\boldsymbol{R}:=
\begin{bmatrix}
\boldsymbol{R} _{\text{even}} & \boldsymbol{R}_0 & \boldsymbol{R} _{\text{even}} \\
\boldsymbol{R} _{\text{odd}} & \boldsymbol{0} & - \boldsymbol{R} _{\text{odd}}
\end{bmatrix},\quad
	\boldsymbol{\Lambda} := \begin{bmatrix} \boldsymbol{\Lambda}_+ & & \\
	& \boldsymbol{0}_{m_e-m_o} & \\
		& & - \boldsymbol{\Lambda}_+
	\end{bmatrix},
\end{eqnarray}
where $\boldsymbol{R} _{\text{even}}\in \mathbb{R}^{m_e\times m_o},\ \boldsymbol{R} _{\text{odd}}\in \boldsymbol{R}^{m_o\times m_o},\ 
	\boldsymbol{R}_0\in \boldsymbol{R}^{m_e\times(m_e-m_o)}$,
	$\boldsymbol{\Lambda}_+:= \mathrm{diag}(\lambda_i)\in \mathbb{R}^{m_o\times m_o}$ 
	and $\lambda_1\geq\lambda_2\geq\cdots\geq\lambda_{m_o}>0$. 
\end{cor}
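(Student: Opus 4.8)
The plan is to assemble an orthonormal eigenbasis of $\boldsymbol{M}$ directly from the three structural facts established in the proof of Lemma~\ref{lem:01}: that $\boldsymbol{M}$ is real symmetric and hence orthogonally diagonalizable, that the reflection $(x_1,x_2)\mapsto(x_1,-x_2)$ carries each $\lambda$-eigenvector to a $(-\lambda)$-eigenvector, and that $\boldsymbol{M}_0$ has full column rank $m_o$. Since the spectral theorem already guarantees \emph{some} orthonormal eigenbasis, the only task is to choose it so that it exhibits the claimed block pattern; the statement is purely existential, so we are free to make convenient selections in each eigenspace.

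First I would fix the positive part. Let $\lambda_1\ge\cdots\ge\lambda_{m_o}>0$ be the positive eigenvalues, which number exactly $m_o$ by Lemma~\ref{lem:01}, and choose orthonormal eigenvectors $\boldsymbol{r}_i=(\boldsymbol{a}_i^T,\boldsymbol{b}_i^T)^T$ with $\boldsymbol{a}_i\in\mathbb{R}^{m_e}$ and $\boldsymbol{b}_i\in\mathbb{R}^{m_o}$. Setting $\boldsymbol{R}_{\text{even}}=(\boldsymbol{a}_1,\dots,\boldsymbol{a}_{m_o})$ and $\boldsymbol{R}_{\text{odd}}=(\boldsymbol{b}_1,\dots,\boldsymbol{b}_{m_o})$ produces the first block column of $\boldsymbol{R}$ together with the eigenvalue block $\boldsymbol{\Lambda}_+=\mathrm{diag}(\lambda_i)$.

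The reflection step is what forces the shared $\boldsymbol{R}_{\text{even}}$ and the sign flip. Writing $S$ for the diagonal orthogonal map $(x_1,x_2)\mapsto(x_1,-x_2)$, the computation $S\boldsymbol{M}S=-\boldsymbol{M}$ shows that $S$ is an orthogonal bijection between the positive and negative eigenspaces; applying it to the orthonormal family $\{\boldsymbol{r}_i\}$ therefore yields an orthonormal basis $\{(\boldsymbol{a}_i^T,-\boldsymbol{b}_i^T)^T\}$ of the negative eigenspace, and adopting \emph{this} basis gives the third block column $(\boldsymbol{R}_{\text{even}}^T,-\boldsymbol{R}_{\text{odd}}^T)^T$ with eigenvalues $-\boldsymbol{\Lambda}_+$, the crucial point being that the upper halves are reused verbatim. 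For the kernel I would note that any null vector $(x_1,x_2)$ obeys $\boldsymbol{M}_0 x_2=\boldsymbol{0}$ and $\boldsymbol{M}_0^T x_1=\boldsymbol{0}$; full column rank of $\boldsymbol{M}_0$ forces $x_2=\boldsymbol{0}$, so the null space consists precisely of vectors $(x_1^T,\boldsymbol{0})^T$ with $x_1$ in the $(m_e-m_o)$-dimensional left null space of $\boldsymbol{M}_0$. Taking an orthonormal basis of that space as the columns of $\boldsymbol{R}_0$ supplies the middle block column $(\boldsymbol{R}_0^T,\boldsymbol{0})^T$ and the zero block of $\boldsymbol{\Lambda}$.

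It then remains to verify $\boldsymbol{R}^T\boldsymbol{R}=\boldsymbol{R}\boldsymbol{R}^T=\boldsymbol{I}_{2(M-2)}$, which reduces to a handful of inner products: the positive and negative columns are orthonormal and mutually orthogonal because they sit in distinct eigenspaces (equivalently $\boldsymbol{a}_i^T\boldsymbol{a}_j-\boldsymbol{b}_i^T\boldsymbol{b}_j=0$), and the zero columns are orthogonal to the rest since their lower halves vanish while their upper halves lie in $\ker\boldsymbol{M}_0^T=(\mathrm{range}\,\boldsymbol{M}_0)^{\perp}$, orthogonal to the $\boldsymbol{a}_i$. I expect no genuine difficulty beyond this bookkeeping; the one point demanding care is arguing that the reflected vectors are not merely eigenvectors but an orthonormal \emph{basis} of the negative eigenspace, which is handled by the isometry-and-bijection property of $S$, and that the kernel is exhausted by the $(x_1,\boldsymbol{0})$ form, which is exactly where the full-column-rank conclusion of Lemma~\ref{lem:01} is indispensable.
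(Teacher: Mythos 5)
Your proposal is correct and follows exactly the route the paper intends: the corollary is stated as an immediate consequence of the proof of Lemma~\ref{lem:01}, and you assemble the block structure from precisely those three ingredients (symmetry of $\boldsymbol{M}$, the sign-flip reflection pairing the $\pm\lambda$ eigenspaces, and the full column rank of $\boldsymbol{M}_0$ to pin down the kernel as $\{(x_1^T,\boldsymbol{0})^T : \boldsymbol{M}_0^Tx_1=\boldsymbol{0}\}$). The orthogonality bookkeeping you describe, including $\boldsymbol{a}_i^T\boldsymbol{a}_j=\boldsymbol{b}_i^T\boldsymbol{b}_j$ and the kernel columns lying in $(\mathrm{range}\,\boldsymbol{M}_0)^{\perp}$, is exactly what is needed and is consistent with the identity $\boldsymbol{R}_{\text{even}}^T\boldsymbol{R}_{\text{even}}=\tfrac12\boldsymbol{I}_{m_o}$ used later in the paper.
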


As mentioned in Section \ref{sec:25}, we shall impose boundary
	conditions from two parts: one is to ensure the boundedness of the
	decaying variables $\hat{w}$, the other is from the wall-gas
	interaction. Define $\hat{v}=\+R^{-1}\hat{w}$,
	we have characteristic equations
\begin{eqnarray} \label{eq:cha}
	\boldsymbol{\Lambda}\od{\hat{v}}{\bar{y}} = 
	-\frac{1}{\mathit{Kn}}\hat{v}.
\end{eqnarray}
	Assume $\hat{v}=(\hat{v}_+,\hat{v}_0,\hat{v}_-)^T$ where $\hat{v}_+,\hat{v}_-\in\bbR^{m_o},\hat{v}_0\in
	\bbR^{m_e-m_o}$. Then if we don't allow the exponential blow up of $\hat{v}$ at infinity, 
	we would get $m_e$ boundary conditions:
\begin{eqnarray}
\label{eq:bc1}
	\hat{v}_-(0) = \+0,\quad \hat{v}_0(0)=\+0.
\end{eqnarray}
As a remark, from $\hat{w}=\boldsymbol{R} \hat{v}$ and \eqref{eq:bc1} we have
\begin{eqnarray}
\label{eq:col1}
	\hat{w} _{\text{even}} = \boldsymbol{R} _{\text{even}} \hat{v}_+,\quad 
	\hat{w} _{\text{odd}} = \boldsymbol{R} _{\text{odd}} \hat{v}_+.
\end{eqnarray}
\begin{theorem}
	From \eqref{eq:lbc-0317}, $\hat{v}_+(0)$ and $\bar{\theta}(0)$ satisfy 
	the following $m_o+1$ boundary conditions:
\begin{eqnarray}
\label{eq:nwbc-2}
	\bar{q}_2c_r + \+E\begin{bmatrix} 0 & \\ &
		\boldsymbol{M}_0
\end{bmatrix} 
	\begin{bmatrix} \bar{\theta}(0)-\bar{\theta}^W \\ 
		\hat{w}_{\text{odd}}(0)
	\end{bmatrix} = b(\chi) \+E\boldsymbol{T}\begin{bmatrix} \bar{\theta}(0)-\bar{\theta}^W \\
		\hat{w}_{\text{even}}(0)
\end{bmatrix},
\end{eqnarray}
	where $c_r=(1,\frac{4}{5\sqrt{3}},\frac{2\sqrt{6}}{5},\frac{2\sqrt{2}}{5},0,...,0)^T
	\in \mathbb{R}^{m_o+1}$, $\+E:=\begin{bmatrix}\+I_{m_o+1},\+0 \end{bmatrix}\in\bbR^{(m_o+1)\times (m_e+1)}$.
		$\hat{w} _{\text{even}} , \hat{w} _{\text{odd}}$ are related to $\hat{v}_+$ by \eqref{eq:col1}. And
	\beaa
	\+T = \begin{bmatrix} 1 & \\ & \+L_1^{-1} \end{bmatrix}
		\begin{bmatrix} \+P_1 & \\ & \+I_{m_e-1} \end{bmatrix} \+T^b
\begin{bmatrix} \+P_1 & \\ & \+I_{m_e-1} \end{bmatrix} 
	\begin{bmatrix} 1 & \\ & \+L_1^{-1} \end{bmatrix}, \quad 
\+P_1=\begin{bmatrix} 0.5 & 1 \\ 1 & -1\end{bmatrix}. 
	\eeaa
Here $\boldsymbol{T}^b:=(t^b_{ij})\in \mathbb{R}^{(m_e+1)\times(m_e+1)}$ is a square matrix 
	with nonzero entries 
	\vspace{-3mm}
\begin{eqnarray}\label{eq:tb}
t^b_{2k,\ 2l} = S(2k-2,\ 2l-2)\ ,\ t^b_{2k-1,\ 2l-1} = S(2k,\ 2l)-\frac{S(2k,\ 0)}{S(0,\ 0)}S(0,\ 2l),
\ k,l\geq 1,
\end{eqnarray}
where $S(\alpha,\beta)$ is the half-space integral \eqref{eq:defS}. 
\end{theorem}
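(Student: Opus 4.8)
The plan is to specialize the linearized wall conditions \eqref{eq:lbc-0317} to the temperature-jump geometry and then run the same even--odd bookkeeping as in the previous theorem, closing with the decay relations \eqref{eq:col1}. I would take $\+\alpha=2\+e_k+i\+e_2$ with $k=1,2,3$ and $i$ even, $0\le i\le M-3$, which are exactly the admissible indices in $\mathbb{I}$ for this problem; adding the $k=1$ and $k=3$ conditions for each even $i$ yields one condition in the $\bar{g}$-variables, while the $k=2$ condition stays in the $\bar{t}$-variables. Because the left-hand side of \eqref{eq:lbc-0317} is $\alpha_2!\,(\bar{f}_{\+\alpha-\+e_2}+(\alpha_2+1)\bar{f}_{\+\alpha+\+e_2})$, it involves only quantities with odd second index, i.e. the entries of $\hat{f}_{\text{odd}}$, whereas the right-hand side carries the factor $b(\chi)$ and the half-space integrals $S(\alpha_2,\beta_2)$ acting on quantities with even second index. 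Since $k^{\+\alpha}_{\beta_2}=2\+e_1+\beta_2\+e_2$ (or $2\+e_3+\beta_2\+e_2$) for $k=1,3$ but $k^{\+\alpha}_{\beta_2}=\beta_2\+e_2$ for $k=2$, the $\bar{g}$-family and the $\bar{t}$-family do not couple, which is why $\+T^b$ only has the even--even entries $t^b_{2k,2l}$ and the odd--odd entries $t^b_{2k-1,2l-1}$.

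Next I would extract the three structural features that create the named objects in \eqref{eq:nwbc-2}. First, the wall data enters only through $\beta_2=0$: by the stated values $\bar{m}_{2\+e_1}=\bar{m}_{2\+e_3}=\tfrac12(\bar{\theta}^W-\bar{\theta})$ and $\bar{m}_{2\+e_2}=\tfrac12(\bar{\theta}^W-\bar{\theta})$, every sum contributes a multiple of $\bar{\theta}(0)-\bar{\theta}^W$, which becomes the leading slot of the vectors on both sides. Second, the $\+\alpha=\+0$ condition is the mass-flux balance $\sum_{\beta_2\ \text{even}}S(0,\beta_2)(\bar{f}_{\beta_2\+e_2}-\bar{m}_{\beta_2\+e_2})=0$; solving it for the combination $\bar{\rho}-\bar{\rho}^W$ and substituting back into the $\bar{t}$-family conditions is precisely the rank-one (Schur) elimination that produces the correction $-S(2k,0)S(0,2l)/S(0,0)$ in $t^b_{2k-1,2l-1}$, whereas the $\bar{g}$-family carries no density slot and retains $t^b_{2k,2l}=S(2k-2,2l-2)$. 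Third, only the $i=0$ and $i=2$ conditions contain $\bar{t}_1$ and $\bar{g}_1$; rewriting these via the shifted odd variable $\bar{t}_1-\bar{q}_2/5$ and the relation $\bar{q}_2=3\bar{t}_1+\bar{g}_1$ splits off a constant multiple of $\bar{q}_2$, giving the term $\bar{q}_2c_r$ whose only nonzero entries sit in those four positions.

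With both sides reduced to the raw ordering and coefficient matrix $\+T^b$, I would change basis to the normalized $\phi$-family in which $\hat{w}_{\text{even}}=\+L_1\hat{f}_{\text{even}}$ and $\hat{w}_{\text{odd}}=\+L_2\hat{f}_{\text{odd}}$ live. The $2\times2$ block $\+P_1$ implements the $i=0$ recombination into $\phi_1=\overline{\mathrm{He}}_{2\+e_2}-\tfrac12(\overline{\mathrm{He}}_{2\+e_1}+\overline{\mathrm{He}}_{2\+e_3})$ together with the constraint $\bar{g}_0=-\bar{t}_0$ and the $\bar{\theta}$-slot, while $\mathrm{diag}(1,\+L_1^{-1})$ supplies the normalization by the $a_i$. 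Applying the resulting row operation $\+C=\mathrm{diag}(\+P_1,\+I)\,\mathrm{diag}(1,\+L_1^{-1})$ to the equations and the matching column change to the even variables turns $\+T^b$ into the congruence $\+T=\+C^T\+T^b\+C$ of the theorem; the very same row operation, together with the odd-variable normalization $\+L_2$, recasts the raw boundary flux on the left as $\+E\,\mathrm{diag}(0,\+M_0)$ applied to $(\bar{\theta}(0)-\bar{\theta}^W,\ \hat{w}_{\text{odd}}(0))^T$, where $\+M_0$ is exactly the matrix from the previous theorem and $\+E$ discards the last $m_e-m_o$ rows that correspond to indices outside the admissible range $|\+\alpha|\le M-1$. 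Finally, substituting the decay relations \eqref{eq:col1}, $\hat{w}_{\text{even}}(0)=\+R_{\text{even}}\hat{v}_+(0)$ and $\hat{w}_{\text{odd}}(0)=\+R_{\text{odd}}\hat{v}_+(0)$, turns \eqref{eq:nwbc-2} into $m_o+1$ conditions in the $m_o+1$ unknowns $\hat{v}_+(0)$ and $\bar{\theta}(0)$.

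I expect the main obstacle to be the simultaneous bookkeeping of the second and third paragraphs: one must check that eliminating $\bar{\rho}^W$ yields exactly the rank-one correction in the odd--odd block, that the interleaved $(\bar{t},\bar{g})$-ordering of $\hat{f}_{\text{even}}$ is reproduced by the single congruence $\+C$, and that the numerical constants in $c_r$ and in $\+T^b$ emerge with the correct normalizations $a_i,b_i$ --- all of which rests on the Hermite recurrence $\xi_d\overline{\mathrm{He}}_{\+\alpha}=\alpha_d\overline{\mathrm{He}}_{\+\alpha-\+e_d}+\overline{\mathrm{He}}_{\+\alpha+\+e_d}$ and the closed form of $S(\alpha_2,\beta_2)$ from the appendices.
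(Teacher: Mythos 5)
Your proposal follows the same route as the paper's (terse) proof: specialize \eqref{eq:lbc-0317} to $\+\alpha=2\+e_k+i\+e_2$ with $i$ even and $0\le i\le M-3$, add the $k=1,3$ conditions, use $\bar t_0+\bar g_0=0$ together with the wall moments of Appendix \ref{app:B}, eliminate $\bar\rho^W-\bar\rho$ via the $\+\alpha=\+0$ condition, and recast everything through the $\+P_1$/$\+L_1^{-1}$ congruence and \eqref{eq:col1} --- your version simply makes the paper's ``direct computation'' explicit. One minor slip: for the $\bar t$-family the temperature wall datum $\bar m_{2\+e_2}$ enters at $\beta_2=2$ (the $\beta_2=0$ slot carries the density $\bar\rho^W$), so ``the wall data enters only through $\beta_2=0$'' is not literally correct, but this does not affect the argument since you handle both the temperature slot and the density elimination correctly afterwards.
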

\begin{proof}
We just set $\boldsymbol{\alpha}=2 \boldsymbol{e}_k+i \boldsymbol{e}_2$, where $i$ is even 
and $0\leq i\leq M-3$, in the linearized boundary conditions \eqref{eq:lbc-0317}.
For each $i$, add the conditions with $k=1$ and $k=3$, then we will have
	$2\lfloor \frac{M-1}{2}\rfloor=m_o+1$ more boundary conditions. Since $\bar{t}_0+\bar{g}_0=0$, we have
\beaa
	\+P_1\begin{bmatrix} \bar{\theta} \\ \bar{t}_0 \end{bmatrix} = 
		\begin{bmatrix} \bar{t}_0+\frac{1}{2}\bar{\theta} \\ \bar{g}_0+\bar{\theta}
			\end{bmatrix}.
\eeaa
	From Appendix \ref{app:B} we know that $\bar{m}_{2\+e_k}=\frac{1}{2}(\bar{\theta}^W-\bar{\theta})$,
	$\bar{m}_{\+\alpha}=0$ when $\alpha_2$ is even and $|\+\alpha|>2$, and
	\beaa
	S(0,0)(\bar{m}_{\+0}-\bar{\rho}) = \sum_{\beta_2=2,\text{ even}}^MS(0,\beta_2)(\bar{f}_{\beta_2 \+e_2}-
	\bar{m}_{\beta_2\+e_2}).
	\eeaa
	So direct computation will show \eqref{eq:nwbc-2} right.
\hfill
\end{proof}
\begin{remark}
Here when $M$ is odd, $m_e=m_o=M-2$. So $\+M$ has no zero eigenvalues and 
	$\+E=\+I_{m_o+1}$ is an identity matrix which can be ignored. When $M$ is even, $m_e=M-1,\ m_o=M-3$. So
	$\+M$ has two zero eigenvalues and $\+E$ ensures the correct number of boundary conditions. 
\end{remark}
\begin{lemma}\label{lem:02}
	$\boldsymbol{T}^b$ is negative symmetric definite, so is $\+T$.
\end{lemma}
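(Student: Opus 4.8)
The plan is to reduce the definiteness of both $\boldsymbol{T}^b$ and $\boldsymbol{T}$ to a single elementary fact about the half-space integral $S(\alpha,\beta)$. Since $S$ is independent of $\theta$ (Proposition~1), I would set $\theta=1$, so that after absorbing the normalization $\sqrt{2\pi}$ into $\bar{\omega}$,
\[
S(\alpha,\beta)=\int_{-\infty}^0 \xi_2\,\overline{\mathrm{He}}_\alpha(\xi_2)\,\overline{\mathrm{He}}_\beta(\xi_2)\,e^{-\xi_2^2/2}\,\mathrm{d}\xi_2,
\]
which is manifestly symmetric in $\alpha,\beta$. The key observation is that on the integration domain $\xi_2<0$, so for any finite index set $A\subset\bbN$ and any real vector $(c_\alpha)_{\alpha\in A}$,
\[
\sum_{\alpha,\beta\in A} c_\alpha S(\alpha,\beta) c_\beta
=\int_{-\infty}^0 \xi_2\Big(\sum_{\alpha\in A} c_\alpha\overline{\mathrm{He}}_\alpha(\xi_2)\Big)^2 e^{-\xi_2^2/2}\,\mathrm{d}\xi_2\le 0,
\]
with equality only if the polynomial $\sum_{\alpha\in A}c_\alpha\overline{\mathrm{He}}_\alpha$ vanishes on $(-\infty,0)$, hence identically, forcing $c_\alpha=0$ by the linear independence of the Hermite polynomials. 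Thus every principal submatrix of $\big(S(\alpha,\beta)\big)$ is symmetric negative definite.

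Next I would exploit the block structure of $\boldsymbol{T}^b$: because $t^b_{ij}=0$ whenever $i,j$ have opposite parity, $\boldsymbol{T}^b$ is, up to a permutation, block diagonal with an ``even'' block $\big(S(2k-2,2l-2)\big)$ and an ``odd'' block $\big(S(2k,2l)-S(2k,0)S(0,0)^{-1}S(0,2l)\big)$. Let $\hat{S}$ denote the principal submatrix of $\big(S(\alpha,\beta)\big)$ over the even arguments $\{0,2,4,\dots\}$ taken large enough to contain all arguments occurring. By the previous paragraph $\hat{S}$ is symmetric negative definite, and the even block is exactly $\hat{S}$ (or a principal submatrix of it). Crucially, the odd block is precisely the Schur complement of the leading entry $S(0,0)$ in $\hat{S}$; since the Schur complement of a symmetric negative definite matrix with respect to a principal block is again symmetric negative definite (apply the positive-definite Schur complement theorem to $-\hat{S}$), the odd block inherits negative definiteness. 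Hence both diagonal blocks, and therefore $\boldsymbol{T}^b$, are symmetric negative definite.

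Finally, for $\boldsymbol{T}$ I would read its defining identity as a congruence. Writing $\boldsymbol{C}=\begin{bmatrix}1&\\&\boldsymbol{L}_1^{-1}\end{bmatrix}\begin{bmatrix}\boldsymbol{P}_1&\\&\boldsymbol{I}_{m_e-1}\end{bmatrix}$ and noting that $\boldsymbol{L}_1$ is diagonal and $\boldsymbol{P}_1$ is symmetric, the two factors flanking $\boldsymbol{T}^b$ are transposes of one another, so $\boldsymbol{T}=\boldsymbol{C}\boldsymbol{T}^b\boldsymbol{C}^T$. The matrix $\boldsymbol{C}$ is invertible, since $\boldsymbol{L}_1^{-1}$ has strictly positive diagonal entries ($a_i>0$) and $\det\boldsymbol{P}_1=-\tfrac32\neq0$; by Sylvester's law of inertia, congruence by an invertible matrix preserves symmetric negative definiteness, so $\boldsymbol{T}$ is symmetric negative definite as well. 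I expect the main obstacle to be the middle step, namely recognizing the odd block as a Schur complement of the negative definite matrix $\hat{S}$ and then carefully matching the index ranges (tied to the floors defining $m_e,m_o$) so that the Schur complement identity lines up exactly with the formula for $t^b_{2k-1,2l-1}$; the sign argument in the first paragraph and the congruence in the last are essentially routine once this identification is made.
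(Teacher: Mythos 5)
Your proof is correct. The core mechanism is the same as the paper's: both arguments rest on writing the quadratic form of the matrix $\bigl(S(\alpha,\beta)\bigr)$ as $\int_{-\infty}^0\xi_2\,p(\xi_2)^2\,\omega\,\mathrm{d}\xi_2\le 0$ for a polynomial $p$, with strictness from linear independence of the Hermite polynomials, and both finish by observing that $\+T$ is an invertible congruence of $\+T^b$ (the paper compresses this to ``so is $\+T$ by definition''). Where you genuinely diverge is in handling the correction term $-S(2k,0)S(0,2l)/S(0,0)$ in the odd block: the paper keeps everything at the level of integrals, bounding $\bigl(\int_{-\infty}^0\xi_2 f_o\omega\bigr)^2$ by Cauchy--Schwarz against $\int_{-\infty}^0(-\xi_2)f_o^2\omega$ and $\int_{-\infty}^0(-\xi_2)\omega$ so that the $f_o^2$ contribution cancels, and then must argue separately that equality forces $f_o$ to be a constant, hence zero since its Hermite components have degree at least two. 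You instead identify the odd block as the Schur complement of the pivot $S(0,0)$ in the principal submatrix of $\bigl(S(\alpha,\beta)\bigr)$ over even arguments including $0$, and invoke the standard fact that Schur complements preserve (negative) definiteness. The two are equivalent at heart --- the Cauchy--Schwarz step is exactly the rank-one Schur-complement argument written out in the integral inner product --- but your packaging is cleaner: it dispatches the equality case automatically (strictness of the Schur complement follows from strictness of the ambient matrix, which already contains the constant $\overline{\mathrm{He}}_0$), at the cost of having to check, as you note, that the index ranges line up so that the odd block is literally that Schur complement; that bookkeeping does check out since $m_e+1$ is even and the odd rows of $\+T^b$ carry the arguments $2,4,\dots$
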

\begin{proof}
	By definition $S(\alpha,\beta)=S(\beta,\alpha)$, so $\boldsymbol{T}^b$ is symmetric. 
	For any $x=(x_i)\in \mathbb{R}^{m_e+1}$, define
\begin{eqnarray*}
	f_o(\xi_2) := \sum_{i=1}^{(m_e+1)/2}x_{2i-1} \overline{\mathrm{He}}_{2i}(\xi_2),\quad
	f_e(\xi_2) := \sum_{i=1}^{(m_e+1)/2}x_{2i} \overline{\mathrm{He}}_{2i-2}(\xi_2),
\end{eqnarray*}
	where $\overline{\mathrm{He}}_{\alpha}(\xi_2):= \mathrm{He}_{\alpha}^{[0,1]}(\xi_2)$. 
	By definition \eqref{eq:defS} and Cauchy-Schwartz inequality, we have 
\begin{eqnarray*}
	x^T \boldsymbol{T}^b x &=& \sqrt{\frac{2\pi}{\theta}}\int_{-\infty}^0\!\!\xi_2\left(f_e^2 + f_o^2
	\right)\omega^{[0,\theta]}\mathrm{d}\xi_2 - \frac{1}{S(0,0)}
	\frac{2\pi}{\theta}\left(\int_{-\infty}^0\!\!\xi_2f_o\omega^{[0,\theta]}\mathrm{d}\xi_2\right)^2 \\
	&\leq&  \sqrt{\frac{2\pi}{\theta}}\int_{-\infty}^0\!\!\xi_2\left(f_e^2 + f_o^2
	\right)\omega^{[0,\theta]}\mathrm{d}\xi_2 - \frac{1}{S(0,0)}\frac{2\pi}{\theta}\left(\int_{-\infty}^0\!\!
	-\xi_2f_o^2 \omega^{[0,\theta]} \mathrm{d}\xi_2\right)\left(\int_{-\infty}^0\!\!
	-\xi_2\omega^{[0,\theta]} \mathrm{d}\xi_2\right)\\
	&=& \sqrt{\frac{2\pi}{\theta}}\int_{-\infty}^0\!\!
	\xi_2f_e^2\omega^{[0,\theta]} \mathrm{d}\xi_2 \leq 0.
\end{eqnarray*}
Note that here 
	$S(0,0)=\displaystyle\sqrt{\frac{2\pi}{\theta}}\displaystyle{\int_{-\infty}^0}
	\!\!\xi_2\omega^{[0,\theta]} \mathrm{d}\xi_2=-1$.
	If $x^T \boldsymbol{T}^b x =0$, there must have $f_e=0$ and $f_o$ a constant function, which
	means $x= \boldsymbol{0}\in \mathbb{R}^{m_e+1}$ because $f_o$ is at least a polynomial of degree 2 if
	it's not zero. 
	Thus $\boldsymbol{T}^b$ is negative definite. So is $\+T$ by definition.
\hfill
\end{proof}

\begin{defn}
	We call the ODEs \eqref{eq:bartheta}\eqref{eq:mat-temp} with the
	boundary conditions \eqref{eq:bc1}\eqref{eq:nwbc-2} and a given
	constant $\bar{q}_2$ the $M$-th order reduced moment system for
	the temperature jump problem.
\end{defn}

Note that if $\+M$ has zero eigenvalues, i.e, $M$ is even, the
matrix $\+E$ in \eqref{eq:nwbc-2} would make the analysis more complicated.
And for our purpose, we just need to choose odd $M$ to obtain a series
of temperature solutions for the temperature jump problem. Thus, we have:
\begin{theorem}\label{lemma:temwp}
  For any given constant $\bar{q}_{2}$, accommodation coefficient
  $\chi\in(0,1]$ and odd moment order $M\geq3,M\in\bbN$,
  \eqref{eq:nwbc-2} has a unique solution of $\bar{\theta}(0)$ and
  $\hat{v}_+(0).$
\end{theorem}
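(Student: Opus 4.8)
The plan is to exploit that for odd $M$ the system \eqref{eq:nwbc-2} is \emph{square}: since $m_e=m_o=M-2$ the matrix $\+M$ has no zero eigenvalues, so $\+E=\+I_{m_o+1}$ may be dropped, and there are exactly $m_o+1$ scalar equations in the $m_o+1$ unknowns $(\bar{\theta}(0)-\bar{\theta}^W,\hat{v}_+(0))$. Existence and uniqueness for every right-hand side $\bar{q}_2c_r$ is then equivalent to injectivity of the linear map sending the unknowns to the left-hand side of \eqref{eq:nwbc-2}, so it suffices to show that the associated homogeneous problem (set $\bar{q}_2=0$ and $\bar{\theta}^W=0$) admits only the trivial solution. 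Throughout I would eliminate $\hat{w}_{\text{even}}(0)$ and $\hat{w}_{\text{odd}}(0)$ in favour of $\hat{v}_+(0)$ using \eqref{eq:col1}, i.e. $\hat{w}_{\text{even}}(0)=\+R_{\text{even}}\hat{v}_+(0)$ and $\hat{w}_{\text{odd}}(0)=\+R_{\text{odd}}\hat{v}_+(0)$.

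The heart of the argument is an energy identity obtained by pairing the homogeneous equation with the even vector $p:=(\bar{\theta}(0),\,\hat{w}_{\text{even}}(0))^{T}$. Because the top-left entry of $\begin{bmatrix}0&\\&\+M_0\end{bmatrix}$ together with its first row and column vanish, the scalar slot $\bar{\theta}(0)$ decouples from the right-hand side, and the pairing reduces to $b(\chi)\,p^{T}\+T\,p=\hat{w}_{\text{even}}(0)^{T}\+M_0\,\hat{w}_{\text{odd}}(0)$. I would then rewrite the right-hand side purely in terms of $\hat{v}_+(0)$ by harvesting two facts from the orthogonal diagonalization $\+M\+R=\+R\+\Lambda$ of the Corollary. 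Reading off the columns belonging to $\+\Lambda_+$ in \eqref{eq:Rexp} yields $\+M_0\+R_{\text{odd}}=\+R_{\text{even}}\+\Lambda_+$, while $\+R^{T}\+R=\+I$ (with the zero block absent precisely because $m_e=m_o$) forces $\+R_{\text{even}}^{T}\+R_{\text{even}}=\+R_{\text{odd}}^{T}\+R_{\text{odd}}=\tfrac12\+I_{m_o}$. Substituting these gives $\hat{w}_{\text{even}}(0)^{T}\+M_0\,\hat{w}_{\text{odd}}(0)=\hat{v}_+(0)^{T}\+R_{\text{even}}^{T}\+R_{\text{even}}\+\Lambda_+\hat{v}_+(0)=\tfrac12\,\hat{v}_+(0)^{T}\+\Lambda_+\hat{v}_+(0)$.

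With the identity $b(\chi)\,p^{T}\+T\,p=\tfrac12\,\hat{v}_+(0)^{T}\+\Lambda_+\hat{v}_+(0)$ established, the conclusion follows from the opposing definiteness already on hand. Since $b(\chi)=\tfrac{2\chi}{2-\chi}(2\pi)^{-1/2}>0$ for $\chi\in(0,1]$ and $\+T$ is negative definite by Lemma \ref{lem:02}, the left side is $\le 0$; since $\+\Lambda_+=\mathrm{diag}(\lambda_i)$ with every $\lambda_i>0$ by the Corollary, the right side is $\ge 0$. Equality therefore forces both to vanish: the left side gives $p=0$, in particular $\bar{\theta}(0)=0$, while the right side gives $\hat{v}_+(0)=0$. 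Both unknowns are zero, the homogeneous map is injective, and the unique solvability claimed follows. (Consistently, $\+R_{\text{even}}=\tfrac{1}{\sqrt2}\times$orthogonal is invertible, so $\hat{w}_{\text{even}}(0)=0$ already encodes $\hat{v}_+(0)=0$.)

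The step I expect to be the main obstacle is the second paragraph: correctly extracting $\+M_0\+R_{\text{odd}}=\+R_{\text{even}}\+\Lambda_+$ and $\+R_{\text{even}}^{T}\+R_{\text{even}}=\tfrac12\+I_{m_o}$ from the block forms of $\+R$ and $\+\Lambda$ in \eqref{eq:Rexp}, and checking that the temperature slot truly separates so that the pairing isolates exactly the quadratic form $\hat{v}_+^{T}\+\Lambda_+\hat{v}_+$ with no leftover scalar coupling. Once these two algebraic identities are in place, the competing signs of $\+T$ and of $\+\Lambda_+$ do all the remaining work; note that the restriction to odd $M$ is exactly what deletes the zero-eigenvalue block (and the projector $\+E$) that would otherwise obstruct the clean pairing.
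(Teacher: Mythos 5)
Your proposal is correct and is essentially the paper's own argument in quadratic-form clothing: the paper factors the coefficient matrix as $\+K(\chi)\,\mathrm{diag}(1,\+R_{\text{even}})$ with $\+K(\chi)=b(\chi)\+T-2\,\mathrm{diag}(0,\+R_{\text{even}}\+\Lambda_+\+R_{\text{even}}^T)$ negative definite (using exactly your two identities $\+M_0\+R_{\text{odd}}=\+R_{\text{even}}\+\Lambda_+$ and $\+R_{\text{even}}^T\+R_{\text{even}}=\tfrac12\+I_{m_o}$ together with Lemma \ref{lem:02}), while you test the homogeneous system against $p=\mathrm{diag}(1,\+R_{\text{even}})x$, which is the same definiteness statement read as an energy identity. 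Both hinge on the same ingredients and on $\+E=\+I$, $\+R_{\text{even}}$ square and invertible for odd $M$, so no substantive difference.
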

\begin{proof}
  By the orthogonal diagonalization \eqref{eq:Rexp},
  $\+M_0\+R _{\text{odd}} = \+R _{\text{even}} \+\Lambda_+$ and
  $\+R _{\text{even}} ^T \+R _{\text{even}} = \frac{1}{2}\+I_{m_o}$.
  Note the diagonal matrix $\+L_1= \mathrm{diag}(a_i)$ is invertible,
  we can write \eqref{eq:nwbc-2} as
  \begin{eqnarray}\label{eq:0429}
    \+E\+K(\chi) \begin{bmatrix}
      1 & \\ & \+R _{\text{even}}
    \end{bmatrix}\begin{bmatrix}
      \bar{\theta}(0)-\bar{\theta}^W \\ \hat{v}_+(0)
    \end{bmatrix} = \bar{q}_2 c_r,\quad
    \+K(\chi):=b(\chi) \boldsymbol{T} - 2\begin{bmatrix}
      0 & \\ & \+R _{\text{even}}\+\Lambda_+\+R
      _{\text{even}} ^T\end{bmatrix},
  \end{eqnarray}
  where $b(\chi)=\frac{2\chi}{(2-\chi)\sqrt{2\pi}} >0$ when
  $\chi\in(0,1]$, we immediately know that $\+K(\chi)$ is negative
  symmetric definite by Lemma \ref{lem:02}.  When $M$ is odd, we
  further have $m_o=m_e=M-2$, $\+E=\+I_{m_o+1}$ is just the identity
  matrix and $\+R _{\text{even}}$ is a square matrix. So
  $\+R _{\text{even}}$ is invertible and \eqref{eq:0429} immediately
  implies that the coefficient matrix is non-singular.  \hfill
\end{proof}
\begin{remark} The well-posedness of the linear kinetic equations is
  widely studied.  We note that for the non-stationary problem, when
  $M$ is odd, the reduced system with the given boundary conditions is
  symmetric hyperbolic with dissipative boundary conditions.  Many
  classical results such as \cite{1960Local} have studied the
  well-posedness of this type of linear problems.
\end{remark}
\begin{remark}\label{rem:odd}
  When $M$ is even, the zero eigenvalues of $\+M$ will make the case
  more complicated.  For two reasons, we think it unnecessary to
  consider this case. One is mentioned previously: the odd $M$ can
  already give a series of solutions. The other is inspired from
  \cite{egg2012}, which shows that when $M$ is even, alternative
  spaces should be used to ensure the stability, i.e. multiplying even
  polynomials $p_{\+\alpha}$ when imposing the boundary conditions.
\end{remark}
\begin{remark}
  Nevertheless, numerically we verify that \eqref{eq:nwbc-2} has a
  unique solution when $M$ is even, varying from 4 to 4000.  In fact
  the coefficient matrix can write as a form $b(\chi)\+S_1-\+S_2$
  where $\+S_1,\+S_2$ is the constant matrix.  We can calculate a
  generalized eigenvalue problem to get $b(\chi)$ such that
  $|b(\chi)\+S_1-\+S_2|=0$ in numeric.
\end{remark}

Now for arbitrary odd moment order $M\geq 3$, we have the formal
analytical solutions of the temperature profile:
\begin{eqnarray}\label{eq:gsol}
  \bar{\theta}(\bar{y})
&= & -\frac{2}{5}\frac{\Pr}{\mathit{Kn}}\bar{q}_2\bar{y} + c_0 - \frac{4}{5}
       \begin{bmatrix}
         \frac{\sqrt{3}}{3},\frac{\sqrt{6}}{2},\frac{\sqrt{2}}{2}
       \end{bmatrix} \boldsymbol{R} _{\text{even}}
       [1:3,:]\exp\left(-\frac{1}{\mathit{Kn}}
         \boldsymbol{\Lambda}_+^{-1}\bar{y}\right)
       \hat{v}_+(0) \notag \\
  &:= & -\frac{2}{5}\frac{\Pr}{\mathit{Kn}}\bar{q}_2\bar{y} + c_0 +
        \sum_{i=1}^{m_o} \tilde{r}_i
        \exp\left(-\frac{1}{\mathit{Kn}}\lambda_i^{-1}\bar{y}\right),
\end{eqnarray}
where $\bar{q}_2$ is the given constant.
$\boldsymbol{R} _{\text{even}} [1:3,:]$ is the first three rows of
$\boldsymbol{R} _{\text{even}}$, $c_0,\ \tilde{r}_i$ are some
constants calculated from the previous process.

Qualitatively, we can see that the temperature profile are
superpositions of Knudsen layers of various widths, which is shown
similarly in \cite{2008Linear, Lijun2017}. When $\bar{y}$ goes to
infinity, $\bar{\theta}$ will asymptotically approach a linear
function, which is exactly the classical Fourier's Law. While as
$\bar{y}$ goes to zero, the phenomenon of temperature jump will occur.
\begin{cor}\label{re:q1}
  From \eqref{eq:nwbc-2} we can see that $\hat{v}_+(0)$ and
  $\bar{\theta}(0)-\bar{\theta}^W$ are depend linearly on $\bar{q}_2$,
  where the coefficients depend on $\chi$, the moment order $M$ but
  not on $\bar{q}_2$. Further we can see from \eqref{eq:gsol} that
  $\bar{\theta}(\bar{y}) -\bar{\theta}^W$ is depend linearly on
  $\bar{q}_2$.
\end{cor}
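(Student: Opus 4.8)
The plan is to read off both linearity statements directly from the linear system \eqref{eq:0429} and the solution formula \eqref{eq:gsol}, with Theorem \ref{lemma:temwp} supplying the only nontrivial input. First I would rewrite \eqref{eq:0429} compactly as $\+A\,[\,\bar{\theta}(0)-\bar{\theta}^W,\ \hat{v}_+(0)\,]^T = \bar{q}_2\,c_r$, where $\+A := \+E\+K(\chi)\begin{bmatrix} 1 & \\ & \+R_{\text{even}}\end{bmatrix}$. The point to stress is that every ingredient of $\+A$ — namely $\+E$, $\+K(\chi)$, and $\+R_{\text{even}}$ — is assembled from $\chi$, the fixed diagonalization data of $\+M$, and the half-space integrals $S(\cdot,\cdot)$, so $\+A$ depends on $\chi$ and $M$ but not on $\bar{q}_2$; likewise $c_r$ is a fixed vector. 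For odd $M\geq 3$ and $\chi\in(0,1]$, Theorem \ref{lemma:temwp} guarantees that $\+A$ is nonsingular, so the first claim is immediate: $[\,\bar{\theta}(0)-\bar{\theta}^W,\ \hat{v}_+(0)\,]^T = \bar{q}_2\,\+A^{-1}c_r$. Thus both $\bar{\theta}(0)-\bar{\theta}^W$ and each component of $\hat{v}_+(0)$ equal $\bar{q}_2$ times a fixed entry of $\+A^{-1}c_r$, and since $\+A^{-1}c_r$ depends only on $\chi$ and $M$, this is exactly the asserted linear dependence with $\chi,M$-dependent coefficients.

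For the second claim I would track the $\bar{q}_2$-dependence through \eqref{eq:gsol}. The transport term $-\frac{2}{5}\frac{\Pr}{\Kn}\bar{q}_2\bar{y}$ is manifestly linear in $\bar{q}_2$. The decay coefficients $\tilde{r}_i$ are, by their definition in \eqref{eq:gsol}, fixed linear combinations of the entries of $\hat{v}_+(0)$ — the factors $\+R_{\text{even}}[1:3,:]$, $\+\Lambda_+$, and the exponential rates $\lambda_i^{-1}$ being $\bar{q}_2$-independent — hence each $\tilde{r}_i$ is linear in $\bar{q}_2$ by the first claim. Finally, evaluating \eqref{eq:gsol} at $\bar{y}=0$ gives $\bar{\theta}(0)-\bar{\theta}^W = (c_0-\bar{\theta}^W)+\sum_{i=1}^{m_o}\tilde{r}_i$; since the left-hand side and the sum are both linear in $\bar{q}_2$, so is $c_0-\bar{\theta}^W$. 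Substituting these back shows that every term of $\bar{\theta}(\bar{y})-\bar{\theta}^W$ is $\bar{q}_2$ times a function of $\bar{y}$ that is independent of $\bar{q}_2$, which finishes the proof.

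I do not expect a genuine analytic obstacle here, as the statement is a bookkeeping consequence of the linearity of \eqref{eq:0429} in $\bar{q}_2$ together with the established nonsingularity of $\+A$. The only point I would take care over is verifying that every auxiliary object entering \eqref{eq:gsol} — the constant $c_0$, the coefficients $\tilde{r}_i$, the eigendata $\lambda_i,\+R_{\text{even}},\+\Lambda_+$, and the exponentials $\exp(-\lambda_i^{-1}\bar{y}/\Kn)$ — is genuinely free of $\bar{q}_2$, so that the entire $\bar{q}_2$-dependence is carried by the single scalar solve $\bar{q}_2\,\+A^{-1}c_r$ and propagates linearly into the temperature profile.
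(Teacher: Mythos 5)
Your proposal is correct and follows essentially the same route the paper intends: the corollary is read off from the fact that $\bar{q}_2$ enters \eqref{eq:nwbc-2} (equivalently \eqref{eq:0429}) only through the right-hand side $\bar{q}_2 c_r$ of a nonsingular linear system whose coefficient matrix depends only on $\chi$ and $M$ (Theorem \ref{lemma:temwp}), after which the linearity propagates through \eqref{eq:gsol}. Your extra step of recovering the $\bar{q}_2$-linearity of $c_0-\bar{\theta}^W$ by evaluating \eqref{eq:gsol} at $\bar{y}=0$ is a correct and slightly more explicit version of the bookkeeping the paper leaves implicit.
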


{\bf Case $M\textbf{=3}.\quad $} As an illustrative example, we show
the case $M=3$ in some detail.  When $M=3$, $m_o=m_e=1$,
$\hat{f}=(\bar{t}_0,\bar{t}_1-0.2\bar{q}_2)^T$, and the eigenvalue
decomposition gives
\begin{eqnarray*}
  \boldsymbol{M} = \begin{bmatrix} 0 & \frac{3}{\sqrt{5}}\\
	  \frac{3}{\sqrt{5}} & 0
\end{bmatrix},\quad
\boldsymbol{\Lambda} = \begin{bmatrix}
\frac{3}{\sqrt{5}} & 0 \\ 0 & -\frac{3}{\sqrt{5}}
\end{bmatrix},\quad
\boldsymbol{R} = \begin{bmatrix}
-\frac{\sqrt{2}}{2} & -\frac{\sqrt{2}}{2} \\
	-\frac{\sqrt{2}}{2} & \frac{\sqrt{2}}{2}
\end{bmatrix}.
\end{eqnarray*}
The boundary condition \eqref{eq:bc1} is
$\bar{t}_0(0)=\sqrt{5}(\bar{t}_1(0)-0.2\bar{q}_2)$ and
\eqref{eq:nwbc-2} becomes
\begin{eqnarray*}
\begin{bmatrix} 1\\0.8\end{bmatrix}
\bar{q}_2 + \begin{bmatrix} 0 & \\ & 9 \end{bmatrix} 
\begin{bmatrix} \bar{\theta}(0)-\bar{\theta}^W \\ 
	\bar{t}_1(0)-0.2\bar{q}_2
\end{bmatrix} = b(\chi)\begin{bmatrix} -2 & -1\\ -1 & -5
\end{bmatrix}
\begin{bmatrix} \bar{\theta}(0)-\bar{\theta}^W \\
	\bar{t}_0(0)
\end{bmatrix}.
\end{eqnarray*}
So after some calculations we have \bea
\bar{\theta}(\bar{y})=-\frac{2}{5\Kn}\Pr\bar{q}_2\bar{y}+c_0-0.8\bar{t}_0(0)
\exp\left(-\frac{\sqrt{5}}{3\Kn}\bar{y}\right), \eea where $\bar{q}_2$
is a given constant, \beaa \bar{t}_0(0)
=-\frac{\bar{q}_2}{\sqrt{5}(6+3\sqrt{5}b(\chi))},\quad c_0 =
-\frac{\bar{q}_2}{2b(\chi)} + \bar{\theta}^W + 0.3\bar{t}_0(0).  \eeaa

For general $M$, we need to determine the coefficients in the
solutions numerically. Since $\+M_0$ is lower-triangular with
bandwidth three, the eigenvalue decomposition will somehow be standard.
And when $M$ is odd, the linear system \eqref{eq:0429} can be
symmetric definite, which benefits the linear solver too.


\section{The Kramers' Problem}
\par Before the quantitative study of the temperature profile, 
we will briefly represent the results of the Kramers' problem
for further reference convenience. All the proof is analogous to the 
temperature jump case and will be skipped.
Our previous work \cite{Lijun2017} has studied the velocity profile 
in the Kramers' problem for LHME with the BGK collision model but not 
explicitly considering this even-odd parity form as well as the 
Shakhov collision model.

In this section, we will use the script $k$ to represent
Kramers' problem. As mentioned in Section \ref{sec:25}, we set 
$\boldsymbol{\alpha}= \boldsymbol{e}_1+ i \boldsymbol{e}_2,\
0\leq i \leq M-1$ in the $M$-th order $(M>3)$ LHME \eqref{eq:simHME}
to decouple the following $M$ equations involving $\bar{u}_1$:
\begin{eqnarray}\label{eq:u}
\od{\bar{\sigma}_{12}}{\bar{y}} = 0,\quad 
	\od{\bar{u}_1}{\bar{y}}&=& -\frac{1}{\Kn} \bar{\sigma}_{12}- 2\od{\vf{2}}{\bar{y}}, \\
	\boldsymbol{M}_k \od{\hat{w}_k}{\bar{y}} &:=& \begin{bmatrix} \+0 & \+M_0^k \\ (\+M_0^k)^T & \+0
	\end{bmatrix} \od{\hat{w}_k}{\bar{y}} = -\frac{1}{\Kn}\hat{w}_k\label{eq:kramer},
\end{eqnarray}
where $\hat{w}_k=\+L_k\hat{f}_k:=(\hat{w}^k _{\text{even}}, \hat{w}^k _{\text{odd}})^T$,
$\+L_k= \mathrm{diag}(\+L^k_{1},\+L^k_{2}),\ \hat{f}_k=(\hat{f}^k _{\text{even}},\hat{f}^k _{\text{odd}} )^T$.
The index $m_e^k = \lfloor \frac{M-1}{2} \rfloor,\quad m_o^k = \lfloor \frac{M-2}{2} \rfloor.$
Similarly $\hat{f}^k _{\text{even}} := (\vf{2},\vf{4},...,\vfb{2m_e^k})^T\in \mathbb{R}^{m_e^k}$ collects
the even subscripts, and 
$\hat{f}^k _{\text{odd}} := (\vf{3},\ \vf{5},...,\vfb{2m_o^k+1})^T\in \mathbb{R}^{m_o^k}$ collects the odd. 
Here 
$\boldsymbol{M}_0^k=(m_{ij}^{k,0})\in \mathbb{R}^{m_e^k\times m_o^k}$, 
$\boldsymbol{L}^k_1= \mathrm{diag}(a_i^k)_{i=1}^{m_e^k}$,
$\boldsymbol{L}^k_2= \mathrm{diag}(b_i^k)_{i=1}^{m_o^k}$ have the entries:
\begin{eqnarray} \label{eq:m12}
	m_{ij}^{k,0} = \frac{1}{a_ib_j}\langle \phi_i^k, \xi_2\varphi_j^k \rangle_{\bar{\omega}},\ 
	a_i^k = \sqrt{\left(1- \frac{1-\Pr}{5}\delta_{i,1}\right) \langle \phi_i^k, \phi_i^k\rangle_{\bar{\omega}}} ,\ 
	b_i^k = \sqrt{\langle \varphi_i^k, \varphi_i^k\rangle_{\bar{\omega}}},
\end{eqnarray}
where $\phi_i^k:= \overline{\mathrm{He}}_{\boldsymbol{e}_1 + 2i \boldsymbol{e}_2}$, 
$\varphi_j^k:= \overline{\mathrm{He}}_{\boldsymbol{e}_1 + (2j+1) \boldsymbol{e}_2}$.
\begin{lemma} $\+M_k$ has $m_o^k$ positive,
$m_o^k$ negative and $m_e^k-m_o^k$ zero eigenvalues. 
\end{lemma}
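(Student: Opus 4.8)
The plan is to transcribe the proof of Lemma \ref{lem:01} verbatim, since $\+M_k$ has exactly the same block anti-diagonal shape $\begin{bmatrix} \+0 & \+M_0^k \\ (\+M_0^k)^T & \+0\end{bmatrix}$, now with $\+M_0^k\in\bbR^{m_e^k\times m_o^k}$ and $m_e^k\geq m_o^k$. First I would record the two structural facts that are immediate from this shape: (i) $\+M_k$ is real symmetric, hence orthogonally diagonalizable, so its number of nonzero eigenvalues equals its rank, which is $2\,\mathrm{rank}(\+M_0^k)$; and (ii) if $(x_1,x_2)^T$ is an eigenvector for $\lambda$, then $(x_1,-x_2)^T$ is an eigenvector for $-\lambda$, so the nonzero eigenvalues occur in $\pm$ pairs and split evenly between positive and negative. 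Granting the rank claim below, these two facts yield $m_o^k$ positive, $m_o^k$ negative, and $(m_e^k+m_o^k)-2m_o^k = m_e^k-m_o^k$ zero eigenvalues.

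The crux, as before, is to show that $\+M_0^k$ has full column rank $m_o^k$. Applying the trinomial recurrence $\xi_2\overline{\mathrm{He}}_{\+\alpha} = \alpha_2\overline{\mathrm{He}}_{\+\alpha-\+e_2} + \overline{\mathrm{He}}_{\+\alpha+\+e_2}$ to $\varphi_j^k = \overline{\mathrm{He}}_{\+e_1+(2j+1)\+e_2}$, I would compute $\xi_2\varphi_j^k = (2j+1)\phi_j^k + \phi_{j+1}^k$, where $\phi_i^k = \overline{\mathrm{He}}_{\+e_1+2i\+e_2}$. The orthogonality $\ang{\overline{\mathrm{He}}_{\+\alpha},\overline{\mathrm{He}}_{\+\beta}}_{\bar\omega} = \+\alpha!\,\delta_{\+\alpha,\+\beta}$ then shows that $\ang{\phi_i^k,\xi_2\varphi_j^k}_{\bar\omega}$ is nonzero only for $i=j$ and $i=j+1$; in particular the $(j,j)$ entry equals $(2j+1)(\+e_1+2j\+e_2)!\neq 0$. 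Hence $\+M_0^k$ is lower bidiagonal with a nonvanishing main diagonal, so its top $m_o^k\times m_o^k$ block is lower triangular and invertible, giving full column rank. Equivalently, I would run the same induction as in Lemma \ref{lem:01}: assuming $\ang{\phi_i^k,\xi_2\sum_j r_j\varphi_j^k}_{\bar\omega}=0$ for all $i$, the case $i=1$ isolates $r_1$ (every $\xi_2\varphi_j^k$ with $j\geq 2$ is orthogonal to $\phi_1^k$), forcing $r_1=0$, and proceeding upward forces each $r_j=0$.

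The one genuinely new point relative to the temperature jump case is the Shakhov factor hidden in the normalization $a_i^k = \sqrt{(1-\tfrac{1-\Pr}{5}\delta_{i,1})\ang{\phi_i^k,\phi_i^k}_{\bar\omega}}$. I would note that for any physical Prandtl number the factor $1-\tfrac{1-\Pr}{5}$ is strictly positive, so all $a_i^k$ and $b_j^k$ are real and nonzero; since $m_{ij}^{k,0} = \ang{\phi_i^k,\xi_2\varphi_j^k}_{\bar\omega}/(a_i^k b_j^k)$ merely rescales rows and columns by nonzero factors, the rank of $\+M_0^k$ is unaffected. I do not expect any real obstacle: the argument is a direct transcription of Lemma \ref{lem:01}, and the only thing to verify is the bidiagonal structure with nonzero diagonal, which the recurrence delivers at once.
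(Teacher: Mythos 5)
Your proposal is correct and follows exactly the route the paper intends: the paper explicitly skips this proof as ``analogous to the temperature jump case,'' i.e.\ a transcription of Lemma~\ref{lem:01}, which is what you do --- the $\pm\lambda$ pairing from the block anti-diagonal symmetric structure plus full column rank of $\+M_0^k$ via the trinomial recurrence $\xi_2\varphi_j^k=(2j+1)\phi_j^k+\phi_{j+1}^k$ and orthogonality. Your two additions --- the explicit lower-bidiagonal structure with nonzero diagonal $(2j+1)!/(a_j^kb_j^k)$, and the check that the Shakhov factor $1-\tfrac{1-\Pr}{5}$ keeps $a_1^k$ real and nonzero so the row/column rescaling preserves rank --- are correct and slightly more careful than what the paper records.
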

\begin{cor}
	There exists a real orthogonal diagonalization 
	$\boldsymbol{M}_k \boldsymbol{R}_k = \boldsymbol{R}_k \boldsymbol{\Lambda}_k$ where
	\begin{eqnarray}
\label{eq:evdofv}
\boldsymbol{R}_k:=
\begin{bmatrix}
\boldsymbol{R}^k _{\text{even}} & \boldsymbol{R}^k_0 & \boldsymbol{R}^k _{\text{even}} \\
\boldsymbol{R}^k _{\text{odd}} & \boldsymbol{0} & - \boldsymbol{R}^k _{\text{odd}}
\end{bmatrix},\quad
	\boldsymbol{\Lambda}_k := \begin{bmatrix} \boldsymbol{\Lambda}_+^k & & \\
	& \boldsymbol{0}_{m_e^k-m_o^k} & \\
		& & - \boldsymbol{\Lambda}_+^k
	\end{bmatrix}.
\end{eqnarray}
Here $\boldsymbol{R}_k$ is orthogonal, 
	$\boldsymbol{R}^k _{\text{even}}\in \mathbb{R}^{m_e^k\times m_o^k},\ \boldsymbol{R}^k _{\text{odd}}
	\in \boldsymbol{R}^{m_o^k\times m_o^k},\ 
	\boldsymbol{R}_0^k\in \boldsymbol{R}^{m_e^k\times(m_e^k-m_o^k)}$, and
	$\boldsymbol{\Lambda}_+^k:= \mathrm{diag}(\lambda_{k,i})\in \mathbb{R}^{m_o^k\times m_o^k}$ 
	with $\lambda_{k,1}\geq\lambda_{k,2}\geq\cdots\geq\lambda_{k,m_o^k}>0$.
\end{cor}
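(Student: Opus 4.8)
The plan is to mirror the proof of the temperature-jump corollary \eqref{eq:Rexp}, since $\+M_k$ has exactly the block anti-diagonal shape $\begin{bmatrix}\+0 & \+M_0^k\\ (\+M_0^k)^T & \+0\end{bmatrix}$ that drove that argument. First I would record the structural facts supplied by the preceding lemma: $\+M_k$ is real symmetric, hence orthogonally diagonalizable, and its spectrum consists of $m_o^k$ positive, $m_o^k$ negative, and $m_e^k-m_o^k$ zero eigenvalues. The one piece of genuine content behind that count is that $\+M_0^k$ has full column rank $m_o^k$; this follows exactly as in Lemma \ref{lem:01}, because the trinomial recurrence $\xi_2\overline{\mathrm{He}}_{\+\alpha}=\alpha_2\overline{\mathrm{He}}_{\+\alpha-\+e_2}+\overline{\mathrm{He}}_{\+\alpha+\+e_2}$ together with orthogonality $\ang{\overline{\mathrm{He}}_{\+\alpha},\overline{\mathrm{He}}_{\+\beta}}_{\bar{\omega}}=\+\alpha!\,\delta_{\+\alpha,\+\beta}$ makes $\bigl(\ang{\phi_i^k,\xi_2\varphi_j^k}_{\bar{\omega}}\bigr)_{ij}$ lower triangular with nonzero diagonal, and the positive diagonal rescalings by $a_i^k,b_j^k$ (including the Shakhov-modified $a_1^k$) leave the rank unchanged.

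With the rank settled, I would build $\+R_k$ from the reflection symmetry and the singular value decomposition of $\+M_0^k$. Setting $J=\mathrm{diag}(\+I_{m_e^k},-\+I_{m_o^k})$ gives $J\+M_kJ=-\+M_k$, so if $(x_1,x_2)^T$ is a $\lambda$-eigenvector then $(x_1,-x_2)^T$ is a $(-\lambda)$-eigenvector; this is the structural reason the $\pm$ eigenvectors appear as the mirrored outer column blocks of \eqref{eq:evdofv}. Concretely, I would take a decomposition $\+M_0^k=\sum_{i=1}^{m_o^k}\lambda_{k,i}u_iv_i^T$ with $\lambda_{k,1}\ge\cdots\ge\lambda_{k,m_o^k}>0$ (positivity being precisely the full-rank statement) and orthonormal $u_i\in\bbR^{m_e^k}$, $v_i\in\bbR^{m_o^k}$. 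Then $\tfrac{1}{\sqrt2}(u_i,v_i)^T$ and $\tfrac{1}{\sqrt2}(u_i,-v_i)^T$ are orthonormal eigenvectors for $\pm\lambda_{k,i}$, while an orthonormal basis $\{w_j\}$ of $\mathrm{col}(\+M_0^k)^{\perp}\subset\bbR^{m_e^k}$ supplies the $m_e^k-m_o^k$ zero-eigenvectors $(w_j,0)^T$. Collecting these columns reproduces the claimed block form with $\+R^k_{\text{even}}=\tfrac{1}{\sqrt2}[u_1,\dots,u_{m_o^k}]$, $\+R^k_{\text{odd}}=\tfrac{1}{\sqrt2}[v_1,\dots,v_{m_o^k}]$, $\+R^k_0=[w_1,\dots,w_{m_e^k-m_o^k}]$, and $\+\Lambda_+^k=\mathrm{diag}(\lambda_{k,i})$.

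Finally I would verify $\+R_k^T\+R_k=\+I$ directly from orthonormality of the singular vectors: the zero block is orthogonal to the others since $w_j\perp u_i$ and its lower part vanishes, a $+\lambda_i$ and a $-\lambda_{i'}$ column have inner product $\tfrac12(u_i^Tu_{i'}-v_i^Tv_{i'})=\tfrac12(\delta_{ii'}-\delta_{ii'})=0$, and within a block the $\tfrac{1}{\sqrt2}$ normalization gives unit columns; in passing this also yields the companion identities $(\+R^k_{\text{even}})^T\+R^k_{\text{even}}=\tfrac12\+I_{m_o^k}$ and $\+M_0^k\+R^k_{\text{odd}}=\+R^k_{\text{even}}\+\Lambda_+^k$ that the temperature-jump analysis relies on. The main difficulty is not in the diagonalization itself, which is pure linear algebra once the rank of $\+M_0^k$ is known, but in that rank claim inherited from the lemma; there the only deviation from Lemma \ref{lem:01} is the Shakhov factor entering $a_1^k$, and since it is a strictly positive scalar it cannot alter the rank, so the obstacle dissolves.
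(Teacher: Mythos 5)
Your proposal is correct and follows essentially the same route as the paper, which omits the proof as ``analogous to the temperature jump case'' and there derives the block form directly from the $\pm$ eigenvalue pairing under $\mathrm{diag}(\+I,-\+I)$ conjugation and the full column rank of the off-diagonal block; your explicit SVD construction of $\+R_k$ simply makes concrete what the paper leaves implicit. Your observation that the rank argument is even cleaner here (the Gram matrix is lower bidiagonal with diagonal entries $(2j+1)!$ since $\phi_i^k,\varphi_j^k$ are single Hermite polynomials) and that the positive Shakhov factor in $a_1^k$ cannot affect the rank are both accurate.
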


Similarly define $\hat{v}_k= \boldsymbol{R}_k^{-1} \hat{w}_k
=(\hat{v}_{k,+},\hat{v}_{k,0},\hat{v}_{k,-})^T$, 
then \eqref{eq:kramer} will turn to $\displaystyle\+\Lambda_k\od{\hat{v}_k}{\bar{y}}=-\frac{1}{\Kn}\hat{v}_k$.
The boundedness and consistency asks $m_e^k$ boundary conditions
	\begin{eqnarray}\label{eq:vbc1}
		\hat{v}_{k,0}(0) = \+0,\quad \hat{v}_{k,-}(0)=\+0.
	\end{eqnarray}
Setting $\boldsymbol{\alpha} = \boldsymbol{e}_1 + i \boldsymbol{e}_2,\ 0\leq i\leq M-2,\ i$ even, 
in \eqref{eq:lbc-0317} to get wall boundary conditions:
	\begin{eqnarray}
	\label{eq:vbc2}
	\bar{\sigma}_{12} r_k + \+E_k\begin{bmatrix} 0 & \\
	& \boldsymbol{M}_0^k\end{bmatrix}\begin{bmatrix}
	\bar{u}_1(0)-\bar{u}_1^W \\ \hat{w}^k _{\text{odd}} (0) \end{bmatrix} = b(\chi) \+E_k\tilde{\boldsymbol{S}_k}
		\begin{bmatrix}
		\bar{u}_1(0)-\bar{u}_1^W \\ \hat{w}^k _{\text{even}} (0) \end{bmatrix},
	\end{eqnarray}
	where $r_k=(1,2/a_1^k,0,...,0)^T,$
	$\tilde{\+S}_k:= \mathrm{diag}(1,\+L_1^k)^{-1}\+S_k
	\mathrm{diag}(1,\+L_1^k)^{-1}$, $\+E_k=[\+I_{m_o^k+1},\+0]$ is a $(m_o^k+1)\times (m_e^k+1)$ matrix.
	Here
	$\boldsymbol{S}_k=(s_{ij}^k)\in \mathbb{R}^{(m_e^k+1)\times (m_e^k+1)}$ has the entries
	\begin{eqnarray*}
	s_{ij}^k = S(2i-2,\ 2j-2),\ i,j\geq 1.
	\end{eqnarray*}
	Thus substituting $\hat{w}_k=\+R_k\hat{v}_k$ and \eqref{eq:vbc1} into \eqref{eq:vbc2}, we can determine
	$\hat{v}_{k,+}(0)$ and $\bar{u}_1(0)$.
 	\begin{lemma} $\+S_k$ is negative symmetric definite.
	\end{lemma}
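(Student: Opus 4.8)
The plan is to recognize $\+S_k$ as a Gram matrix taken against the weight $\xi_2$ on the half-line $\{\xi_2<0\}$, so that its negative definiteness can be read off directly from the sign of the integrand. This mirrors Lemma \ref{lem:02}, but is in fact simpler, since $\+S_k$ carries none of the rank-one correction that appears in $\+T^b$. Symmetry is immediate: since $S(\alpha,\beta)=S(\beta,\alpha)$ from \eqref{eq:defS}, the entries $s_{ij}^k=S(2i-2,2j-2)$ satisfy $s_{ij}^k=s_{ji}^k$, so $\+S_k$ is symmetric.

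For the quadratic form, fix $x=(x_i)\in\bbR^{m_e^k+1}$ and introduce the even polynomial
\begin{eqnarray*}
g(\xi_2):=\sum_{i=1}^{m_e^k+1}x_i\,\overline{\mathrm{He}}_{2i-2}(\xi_2).
\end{eqnarray*}
Invoking the $\theta$-independence of $S(\alpha,\beta)$ established above, I evaluate all entries at $\theta=1$; then every index-dependent weight factor disappears and the double sum $\sum_{i,j}x_ix_j\,s_{ij}^k$ collapses into a single square, giving
\begin{eqnarray*}
x^T\+S_kx=\sqrt{2\pi}\int_{-\infty}^0\!\!\xi_2\,g(\xi_2)^2\,\bar{\omega}(\xi_2)\,\mathrm{d}\xi_2\leq0,
\end{eqnarray*}
where the inequality holds because $\xi_2<0$, $g(\xi_2)^2\geq0$ and $\bar{\omega}(\xi_2)>0$ throughout the region of integration. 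Note that, in contrast to Lemma \ref{lem:02}, no Cauchy--Schwarz estimate is needed here.

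For strict definiteness I would argue that $x^T\+S_kx=0$ forces $\xi_2\,g(\xi_2)^2\bar{\omega}(\xi_2)=0$ almost everywhere on $(-\infty,0)$; since $\xi_2<0$ and $\bar{\omega}>0$ there, this means $g$ vanishes on a half-line, and a polynomial vanishing on a set of positive measure is identically zero. By the orthogonality (hence linear independence, Appendix \ref{app:A}) of $\{\overline{\mathrm{He}}_{2i-2}\}_{i=1}^{m_e^k+1}$, this yields $x=\+0$. Hence $\+S_k$ is negative symmetric definite.

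I do not expect a genuine obstacle. The only step that needs a moment's care is the collapse of the double sum into $g^2$, which relies on the even-degree structure of the indices $2i-2$ together with the $\theta$-independence of $S$; once $\theta=1$ is fixed this is a routine rearrangement. The lemma is strictly easier than its temperature-jump counterpart precisely because $\+S_k$ has no Schur-complement-type subtraction, so the sign of the quadratic form is exposed immediately.
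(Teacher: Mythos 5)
Your proof is correct and is essentially the argument the paper intends: the paper omits this proof, stating it is analogous to Lemma \ref{lem:02}, and your quadratic-form collapse $x^T\+S_kx=\sqrt{2\pi}\int_{-\infty}^0\xi_2\,g^2\,\bar{\omega}\,\mathrm{d}\xi_2\leq 0$ with the linear-independence argument for strictness is exactly that analogue. You also correctly observe the simplification that no Cauchy--Schwarz step is needed since $\+S_k$ lacks the rank-one correction present in $\+T^b$.
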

	\begin{theorem} For any given constant $\bar{\sigma}_{12}$,
		accommodation coefficient $\chi\in(0,1]$ and even moment order $M\geq4,M\in\bbN$,
		\eqref{eq:vbc2} has a unique solution of $\bar{u}_1(0)$ and $\hat{v}_{k,+}(0).$
	\end{theorem}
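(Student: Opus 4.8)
The plan is to transcribe, mutatis mutandis, the argument used for the temperature jump well-posedness (Theorem~\ref{lemma:temwp}), noting that for the Kramers' problem the roles of parity are exchanged: an \emph{even} $M$ is now the benign case. First I would settle the index bookkeeping. For even $M$ one has $m_e^k = \lfloor\frac{M-1}{2}\rfloor = \frac{M-2}{2} = \lfloor\frac{M-2}{2}\rfloor = m_o^k$, so by the preceding lemma $\+M_k$ carries no zero eigenvalue; consequently the selection matrix $\+E_k = [\+I_{m_o^k+1},\+0]$ is exactly the identity $\+I_{m_o^k+1}$ and $\+R^k_{\text{even}}$ is a square $m_o^k\times m_o^k$ block. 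From the orthogonal diagonalization \eqref{eq:evdofv} I would then extract the two block identities $\+M_0^k\+R^k_{\text{odd}} = \+R^k_{\text{even}}\+\Lambda_+^k$ and $(\+R^k_{\text{even}})^T\+R^k_{\text{even}} = \frac{1}{2}\+I_{m_o^k}$, both of which follow from expanding $\+M_k\+R_k = \+R_k\+\Lambda_k$ and $\+R_k^T\+R_k = \+I$ on the $\pm\+\Lambda_+^k$ eigen-blocks.

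Next I would reduce the unknowns. Substituting $\hat{w}_k = \+R_k\hat{v}_k$ and imposing the boundedness conditions \eqref{eq:vbc1} gives the Kramers analogue of \eqref{eq:col1}, namely $\hat{w}^k_{\text{even}} = \+R^k_{\text{even}}\hat{v}_{k,+}$ and $\hat{w}^k_{\text{odd}} = \+R^k_{\text{odd}}\hat{v}_{k,+}$. Feeding these into the wall condition \eqref{eq:vbc2} and using $\+M_0^k\+R^k_{\text{odd}} = \+R^k_{\text{even}}\+\Lambda_+^k$ to absorb the odd block, I would recast \eqref{eq:vbc2}, exactly as \eqref{eq:0429}, in the form
\[
\+E_k\+K_k(\chi)\begin{bmatrix} 1 & \\ & \+R^k_{\text{even}}\end{bmatrix}\begin{bmatrix} \bar{u}_1(0)-\bar{u}_1^W \\ \hat{v}_{k,+}(0)\end{bmatrix} = \bar{\sigma}_{12}\, r_k,\qquad \+K_k(\chi) := b(\chi)\,\tilde{\+S}_k - 2\begin{bmatrix} 0 & \\ & \+R^k_{\text{even}}\+\Lambda_+^k(\+R^k_{\text{even}})^T\end{bmatrix}.
\]
Here the factor $2$ in $\+K_k(\chi)$ is precisely what cancels the $\frac{1}{2}$ coming from $(\+R^k_{\text{even}})^T\+R^k_{\text{even}} = \frac{1}{2}\+I_{m_o^k}$ once the trailing $\mathrm{diag}(1,\+R^k_{\text{even}})$ is factored out on the right.

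Finally I would close with a definiteness/rank argument. Since $\tilde{\+S}_k = \mathrm{diag}(1,\+L_1^k)^{-1}\+S_k\,\mathrm{diag}(1,\+L_1^k)^{-1}$ is a congruence of the negative symmetric definite $\+S_k$, it is itself negative symmetric definite; because $b(\chi) = \frac{2\chi}{(2-\chi)\sqrt{2\pi}} > 0$ for $\chi\in(0,1]$ and $\+\Lambda_+^k$ is strictly positive, the subtracted block is positive semidefinite, so $\+K_k(\chi)$ is negative symmetric definite and hence nonsingular. For even $M$ the matrix $\+E_k$ is the identity and $\+R^k_{\text{even}}$, being square with $(\+R^k_{\text{even}})^T\+R^k_{\text{even}} = \frac{1}{2}\+I_{m_o^k}$, is invertible; the coefficient matrix is then a product of nonsingular factors, yielding a unique $\bar{u}_1(0)$ and $\hat{v}_{k,+}(0)$. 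I expect the only genuine obstacle to be bookkeeping rather than analysis: one must verify that even $M$ truly removes the zero eigenvalue so that $\+E_k$ disappears and $\+R^k_{\text{even}}$ comes out square, and one must check that the Shakhov-specific weight $a_1^k = \sqrt{(1-\frac{1-\Pr}{5})\ang{\phi_1^k,\phi_1^k}_{\bar{\omega}}}$ from \eqref{eq:m12} stays strictly positive (true since $\Pr>0$), so that $\mathrm{diag}(1,\+L_1^k)$ is invertible and the congruence defining $\tilde{\+S}_k$ legitimately preserves negative definiteness.
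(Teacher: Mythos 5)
Your proposal is correct and is precisely the argument the paper intends: the paper explicitly skips the Kramers' proofs as "analogous to the temperature jump case," and your transcription — the index check $m_e^k=m_o^k$ for even $M$, the identities $\+M_0^k\+R^k_{\text{odd}}=\+R^k_{\text{even}}\+\Lambda_+^k$ and $(\+R^k_{\text{even}})^T\+R^k_{\text{even}}=\tfrac{1}{2}\+I_{m_o^k}$, and the negative definiteness of $\+K_k(\chi)$ — is exactly the mechanism of the proof of Theorem~\ref{lemma:temwp} carried over. The added checks ($\+E_k=\+I$, $a_1^k>0$ since $1-\frac{1-\Pr}{5}>0$) are the right details to verify and hold as you state.
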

	
Finally the velocity solution has the form
	\vspace{-3mm}
\begin{eqnarray*}
\bar{u}_1(\bar{y}) = -\frac{1}{\Kn}\bar{\sigma}_{12}\bar{y} + c_0^k -\frac{2}{a_1^k}\+R^k _{\text{even}} [1,:]
\exp\left(-\frac{1}{\Kn}(\+\Lambda_+^k)^{-1}\bar{y}\right)\hat{v}_{k,+}(0),
\end{eqnarray*}
where all constants can be determined by eigenvalue decomposition
and linear solvers.

\section{Numerical Validation}

In this section, we will represent some numerical results in the
temperature jump problem. As in the kinetic theory, we consider the
normalized temperature combined by three parts
\begin{eqnarray*}
\tilde{\theta}(\bar{y}) = \bar{y} + \zeta - \theta_d(\bar{y}),
\end{eqnarray*}
where $\bar{y}$ is the linear part, $\theta_d(\bar{y})$ is the 
temperature defect satisfying
$\displaystyle\lim_{\bar{y}\rightarrow\infty}\theta_d(\bar{y}) = 0$
and $\zeta$ is the temperature jump coefficient. In our model, 
\bea
\label{eq:ntemp}
\zeta = -\frac{5\Kn}{2\Pr\bar{q}_2}c_0,\quad \theta_d =
-\frac{2\Kn}{\Pr\bar{q}_2}(\bar{t}_0+
(1-\delta_{M,3})(6\bar{t}_2+\bar{s}_2)).
\eea
As shown in Corollary \ref{re:q1},
$c_0-\bar{\theta}^W,\bar{t}_0,\bar{t}_2,\bar{s}_2$ 
are linear dependent on $\bar{q}_2$, so we may as well set
$\bar{\theta}^W=0$ and $\bar{q}_2=1$. Since the reasons in
Remark \ref{rem:odd}, we just consider the case when $M$ is odd.

\subsection{Temperature Jump Coefficient $\zeta$.}
We compare the temperature jump coefficient when $M=2k+1,\ 1\leq k\leq 6$
with the results solved by discrete-ordinates methods of linearized
Boltzmann-BGK model \cite{Siewert2000} in Table.\ref{tab:01}. 
The parameters are chosen to be consistent with \cite{Siewert2000},
i.e. $\Kn=\frac{\sqrt{2}}{2},\ \Pr=1.$
\begin{table}[!htb] 
\centering 
\caption{The temperature jump coefficient compared with Barichello 
and Siewert's results \cite{Siewert2000}}\label{tab:01}
\begin{tabular}{cccccccc} 
\toprule 
$\chi$ & Siewert's & $M=3$ & $M=5$ & $M=7$ & $M=9$ & $M=11$& $M=13$\\ 
\midrule 
0.1 & 21.45012 & 21.086 & 21.357 & 21.396 & 21.412 & 21.421 & 21.426\\
0.3 & 6.630514 & 6.3116& 6.5542 & 6.5870 & 6.6003 & 6.6074 & 6.6118\\
0.5 & 3.629125 & 3.3538& 3.5680 & 3.5951 & 3.6057 & 3.6114 & 3.6149\\
0.6 & 2.867615 & 2.6134& 2.8135 & 2.8378 & 2.8473 & 2.8522 & 2.8553\\
0.7 & 2.317534 & 2.0840& 2.2698 & 2.2916 & 2.3000 & 2.3043 & 2.3070\\
0.9 & 1.570264 & 1.3768& 1.5342 & 1.5513 & 1.5576 & 1.5608 & 1.5628\\
1.0 & 1.302716 & 1.1287& 1.2718 & 1.2867 & 1.2921 & 1.2949 & 1.2965\\
\bottomrule 
\end{tabular}
\end{table}

As can be seen, when $\chi$ becomes smaller the temperature jump
coefficient will go larger. And for the given $\chi$, the LHME solutions 
seem to agree with the reference solutions with not too many moments. 
In fact when $M=13$ the relative error between the LHME solution 
and the reference solution is less than $1\%$ in most cases. There is
also a convergence trend when $M$ grows. In fact if $\zeta_k$ is the
LHME solution when $M=2^{k}+1$, we can define the numerical 
convergence order as
\beaa
\beta_k = -\log_2\left(\frac{\zeta_{k+2}-\zeta_{k+1}}
{\zeta_{k+1}-\zeta_k}\right).
\eeaa
Table.\ref{tab:add1} shows $\beta_k$ when $\chi$ is different and $k=6,7,8$.
The results imply about one order convergence when $M\rightarrow\infty$
and the accuracy of the linear solver may impact on $\beta_k$
when $M$ is large.
\begin{table}[!htb] 
\centering 
\caption{The numerical convergence order of the temperature jump coefficient}\label{tab:add1}
\begin{tabular}{cccccccc} 
\toprule 
	$\chi$ & 0.1 & 0.3 & 0.5 & 0.6 & 0.7 & 0.9 & 1.0\\ 
\midrule 
$k=6$ & 0.984 & 0.995 & 1.006 & 1.012 & 1.018 & 1.029 & 1.036\\
$k=7$ & 0.976 & 0.985 & 0.993 & 0.998 & 1.003 & 1.012 & 1.017\\
$k=8$ & 0.974 & 0.981 & 0.988 & 0.991 & 0.995 & 1.002 & 1.006\\
\bottomrule 
\end{tabular}
\end{table}

\begin{remark}
	For the linearized moment system, we think its capacity to
	describe the Knudsen layer mainly lies in the approximation of
	basis function spaces, i.e. similarly as in Galerkin spectral methods,
	but is rarely dependent on the hyperbolic regularization. For the HME,
	we can see from Remark \ref{rem:Grad} that the hyperbolic
	regularization does not affect the linearized moment system. Beyond,
	it may be also true for 13 or 26 moment methods.
\end{remark}

Since $\Pr$ only occurs in \eqref{eq:bartheta} and does not affect the
other equations or boundary conditions, if we seem $\zeta$ as a
function of $\Pr$, immediately we have
\bea\label{eq:Prlin}
\zeta(\Pr) = {\Pr}^{-1}\zeta(1).
\eea
So we can just consider the BGK model when studying the jump
coefficient. We note this relation 
\eqref{eq:Prlin} is also shown in \cite{2019Hat} when studying 
the Shakhov model.

Fig. \ref{fig:add1} shows the value $b(\chi)\zeta$ 
when $\chi$ is different and $M$ is fixed.
\vspace{-3mm}
\begin{figure}[htb!]
\centering
\newcommand\addsubfigure[2]{{%
    \begin{overpic}[width=#2 \textwidth]{#1}
    \end{overpic}}
}
{%
\addsubfigure{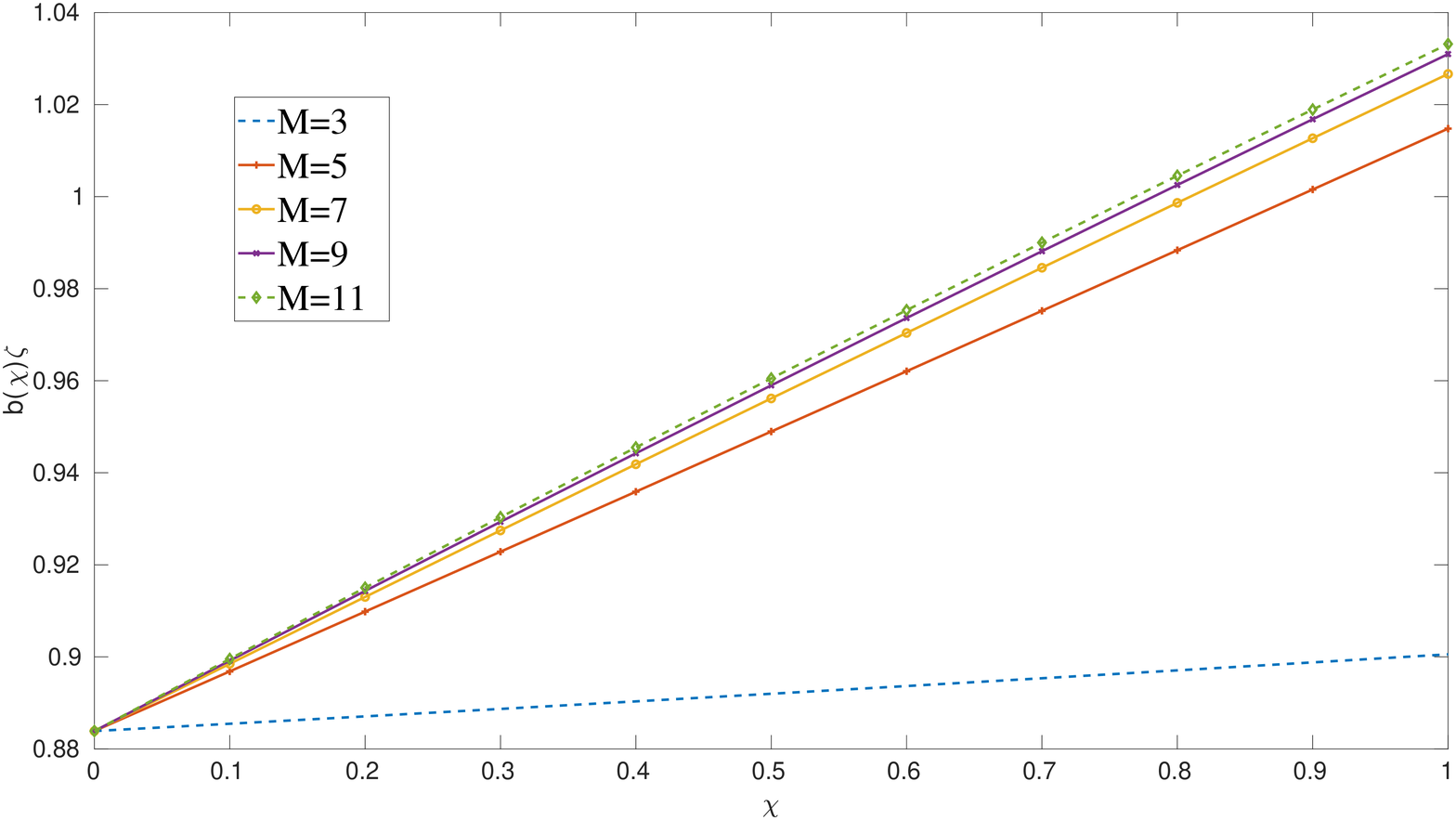}{0.92}
}
\caption{\label{fig:add1}
The value of $b(\chi)\zeta$ for the LHME solutions.
}
\end{figure}

In fact we can formally show the convergence results 
when $\chi\rightarrow0$. When 
$\chi\rightarrow 0,\ b(\chi)=\frac{2\chi}{2-\chi}(2\pi)^{-\frac{1}{2}}$
also goes to zero. Note that
$\hat{w} _{\text{odd}}=\+R _{\text{odd}} \hat{v}_+$ and
$\hat{w} _{\text{even}}=\+R _{\text{even}} \hat{v}_+$
should have the same order since $\+R$ remains the same
when $\chi$ varies. So to make both sides of  
\eqref{eq:nwbc-2} the same order, one must 
assume $\hat{w}(0)=O(b\bar{\theta}(0))$.
Thus, the first row of the leading order equations will be
\beaa
-2b\bar{\theta}(0) = \bar{q}_2.
\eeaa
After the normalization and note that
$c_0(0)=\bar{\theta}(0)+o(\bar{\theta}(0))$, we have
\bea\label{eq:lim1}
\lim_{\chi\rightarrow 0}b\zeta = \frac{5}{8}\sqrt{2}\quad 
\Rightarrow\quad \lim_{\chi\rightarrow 0} \frac{\chi}{2-\chi}\zeta
= \frac{5}{8}\sqrt{\pi}. 
\eea
The limit \eqref{eq:lim1} exactly agrees with the result in
linearized Boltzmann-BGK model as in \cite{1978Temperature}.
\par In a word, the numerical results tell that we may 
only need a moment system with moderate moment order (such as $M=11,\ 13$)
to describe the Knudsen layer in this problem. Since the
1D assumptions, the number of moments
is linearly correlated with the moment order $M$, so this 
scale may be affordable in practice.

\subsection{Temperature Defect $\theta_d(\bar{y})$.}
Fig.\ref{fig:101} presents the profile of the temperature defect
$\theta_d(\bar{y})$ for the LHME when $M=3,7,11,15$ and $\chi=0.1,\ 1.0$.
The reference solution is from the linearized Boltzmann-BGK
model \cite{Siewert2000}. As we can see, the result of $M=3$ is away
from the reference solution but when $M$ becomes larger our results
quickly agree with the reference solution well. When $M\geq 7$, the
gap seems to mainly occur only near the wall, i.e. $\bar{y}$ close to
zero.  
\begin{figure}[htb!]
\centering
\newcommand\addsubfigure[2]{{%
    \begin{overpic}[width=#2 \textwidth]{#1}
    \end{overpic}}
}
{%
\addsubfigure{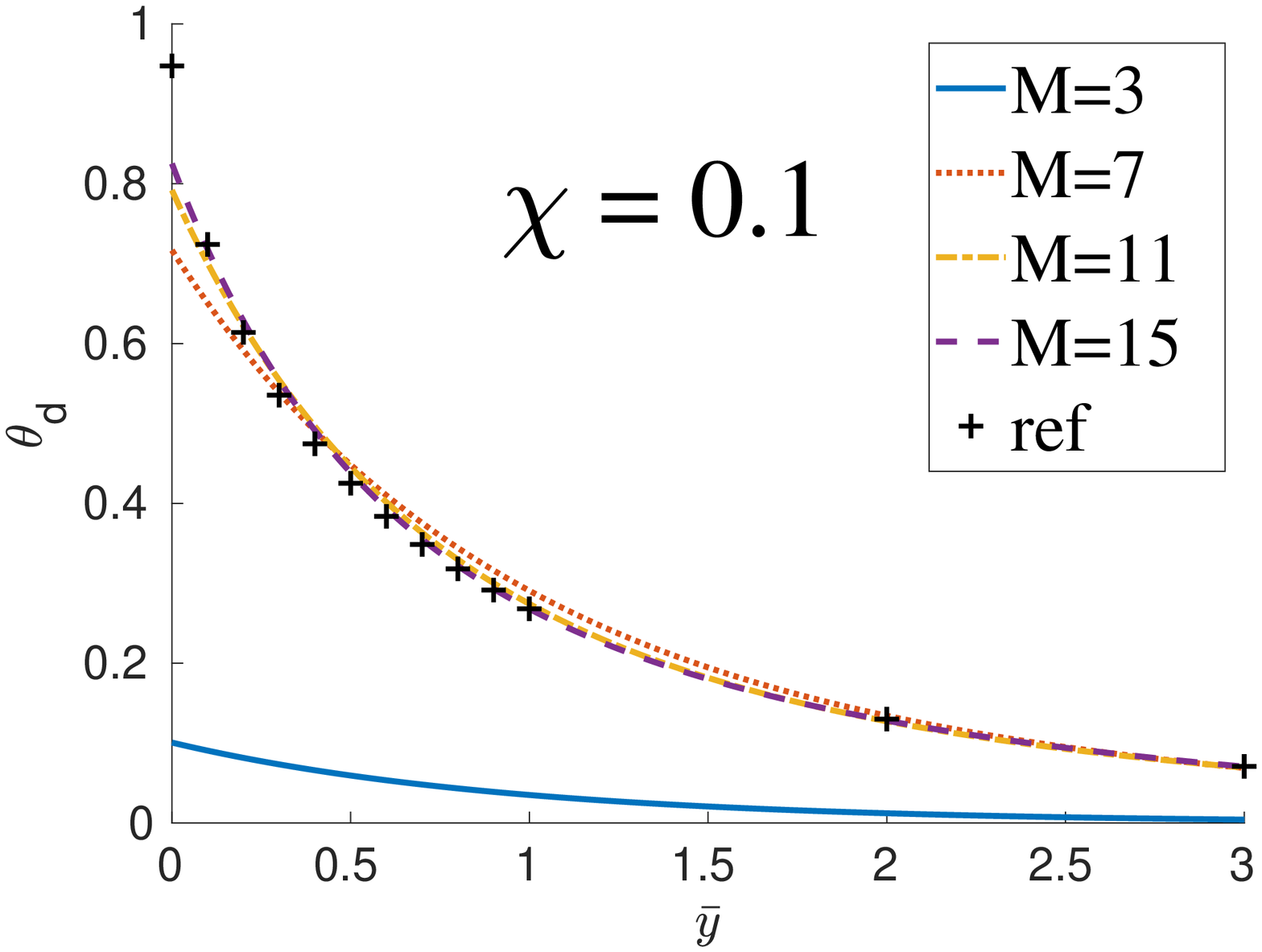}{0.49}
\addsubfigure{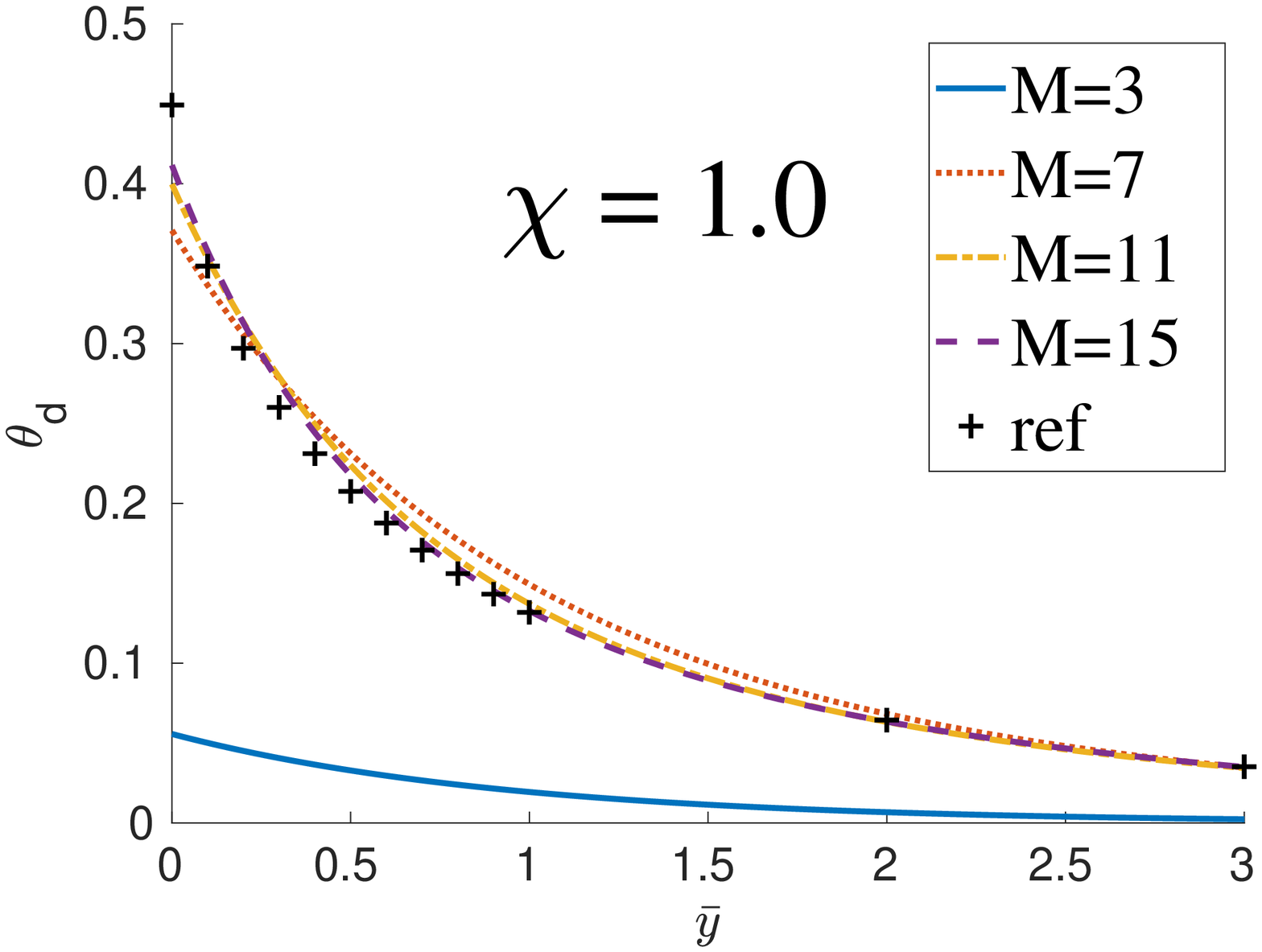}{0.49}
}
\caption{\label{fig:101}
Profile of the temperature defect $\theta_d(\bar{y})$ of the LHME for
different $M$ with $\chi=0.1,\ 1.0$.
}
\end{figure}

In fact the analytical expressions of $\theta_d(\bar{y})$ are
available in our model, by \eqref{eq:gsol},
\bea\label{eq:liscoef}
\theta_d(\bar{y}) &=& -\frac{2\Kn}{\Pr\bar{q}_2}
\begin{bmatrix}
\frac{\sqrt{3}}{3},\frac{\sqrt{6}}{2},\frac{\sqrt{2}}{2}
\end{bmatrix} \boldsymbol{R} _{\text{even}} [1:3,:]
\exp\left(-\frac{1}{\mathit{Kn}}
	\boldsymbol{\Lambda}_+^{-1}\bar{y}\right)
	\hat{v}_+(0) \notag\\
	&:=& -\frac{2\Kn}{\Pr}
	\sum_{i=1}^{m_o} \tilde{c}_i \exp\left(-\frac{1}{\mathit{Kn}}
	\lambda_i^{-1}\bar{y}\right),
\eea
where $\tilde{c}_i$ is only dependent on the moment order $M$ and
accommodation coefficient $\chi$. The profile of the temperature
defect in Fig. \ref{fig:101} is obtained by solving the constants 
in \eqref{eq:liscoef} then plotting the analytical expressions. 

Again we find that if we want to capture the behavior of the gas near the 
wall, it may be necessary to enlarge the moment order $M$, but 
a modest $M$ such as 11 may be enough considering the balance of
accuracy and efficiency. In other problems, the moment order $M$
should be of concrete analysis.

\subsection{Effective Thermal Conductivity.} 
The Fourier law fails in the Knudsen layer and
we can formally write the Fourier law by the effective thermal conductivity 
$\kappa_{\text{eff}}$:
\bea
q_2 = -\kappa_{\text{eff}}\od{\theta}{y}.
\eea
So if we denote by $\kappa_0$ the original thermal conductivity, we have
\bea
\frac{\kappa_{\text{eff}}}{\kappa_0} = \left(\od{\tilde{\theta}}{\bar{y}}\right)^{-1}
= \left(1-\od{\theta_d}{\bar{y}}\right)^{-1}.
\eea
\begin{figure}[htb!]
\centering
\newcommand\addsubfigure[2]{{%
    \begin{overpic}[width=#2 \textwidth]{#1}
    \end{overpic}}
}
{%
\addsubfigure{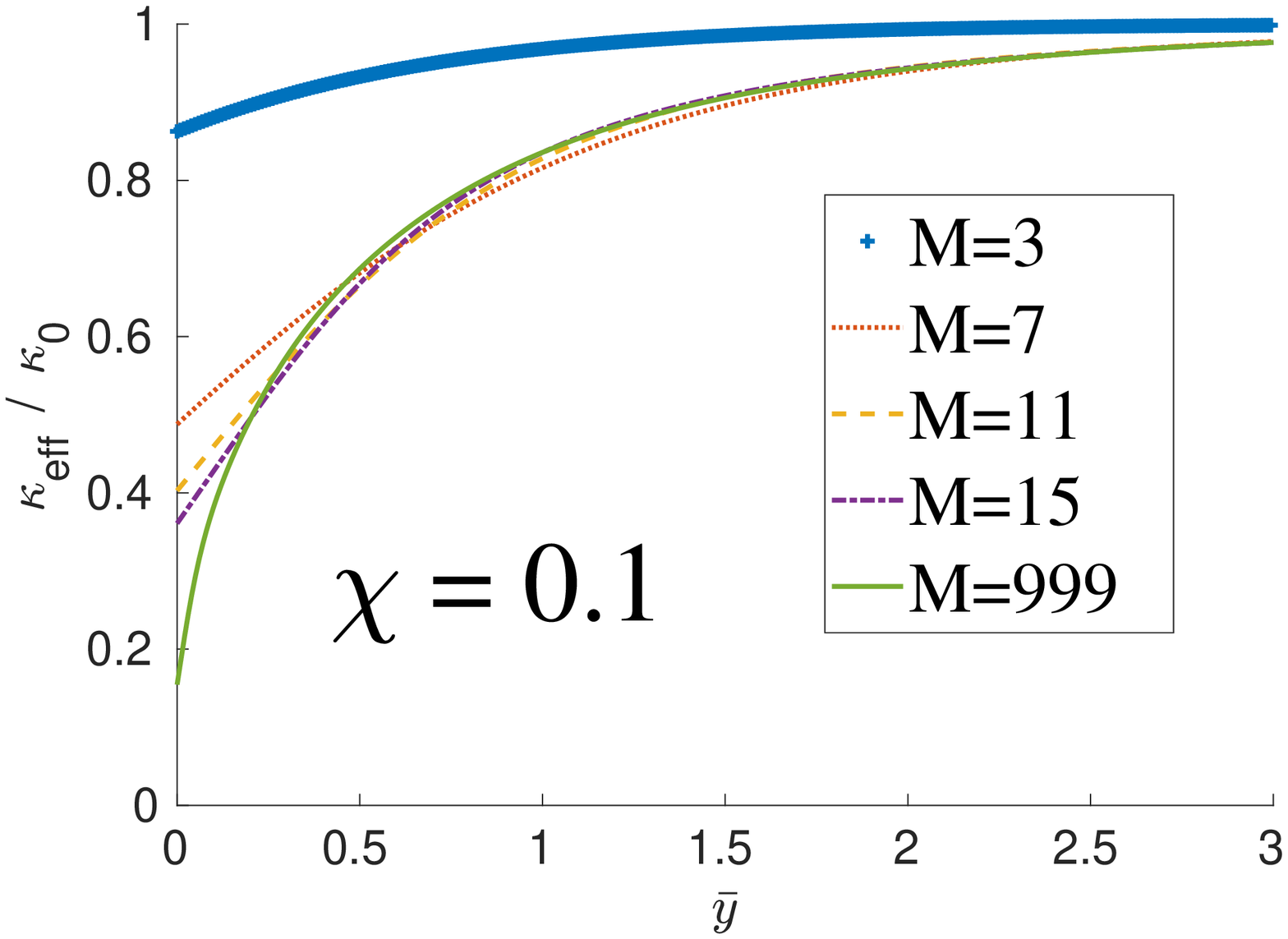}{0.48}
\addsubfigure{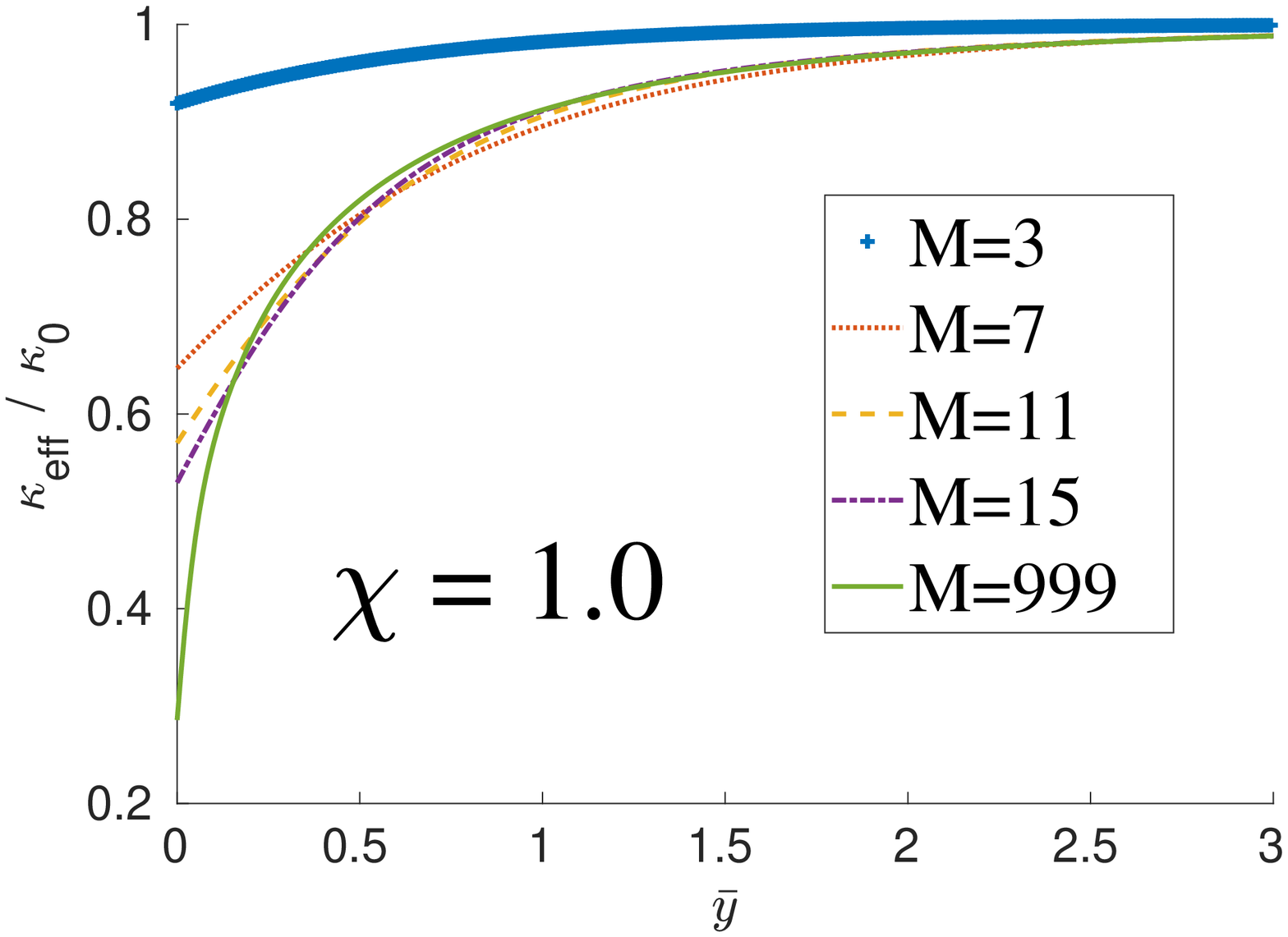}{0.48}
}
\caption{\label{fig:102}
Profile of the effective thermal conductivity in the
Knudsen layer.
}
\end{figure} 

Here we choose $\Pr=\frac{2}{3}$ for the Maxwell molecules and 
study the effective thermal conductivity of different $M$ and
accommodation coefficients numerically in Fig.\ref{fig:102}. We notice
that \cite{Gu2014} obtained a similar form of
$\kappa_{\text{eff}}/\kappa_0$ with exponential terms by the R26
moment system and compare our results with it. We can find that the
LHME captures all the qualitative trends of $\kappa_{\text{eff}}$
mentioned in \cite{Gu2014} in the Knudsen layer, such as
$\kappa_{\text{eff}}$ will reduce as $\bar{y}\rightarrow 0$ or
$\chi\rightarrow 0$.

These analytical expressions may be used to correct the boundary
conditions of the NSF equations. But since $\kappa_{\text{eff}}$
relies on the flow conditions \cite{Gu2014}, we must be very careful
in the application. This may be the future study and we don't plan 
to deal with it in this paper.

\section{Conclusions}
\par We have derived an approximate analytical solution for the Knudsen 
layer using arbitrary high order LHME. A class of well-posed boundary
conditions for the LHME under all accommodation coefficients has been
imposed. And the formal analytical solutions with some constants
determined by numerical solvers have been presented. In the
temperature jump problem, we have compared the
temperature defect, temperature jump coefficient and effective thermal
conductivity of our model with the existing models. It's shown that
the LHME with a few moments can capture the thermal Knudsen layer
well. Although we restricted us mainly in the temperature jump problem
(and Kramers' problem), it is straightforward to extend the method to
other boundary layer problems as well as other collision models.

\section*{Acknowledgement}
\par This work is financially supported by the National Key R\&D
Program of China, Project Number 2020YFA0712000. We thank Dr.Jun Li
for her enthusiastic discussions with us.


\bibliographystyle{plain}
\bibliography{article}

\begin{thebibliography}{10}

\bibitem{1997Arnold}
A.~Arnold and U.~Giering.
\newblock An analysis of the {M}arshak conditions for matching {B}oltzmann and
  {E}uler equations.
\newblock {\em Mathematical Models and Methods in Applied Sciences}, 07(04):--,
  1997.

\bibitem{Siewert2000}
L.~Barichello and C.~Siewert.
\newblock The temperature-jump problem in rarefied-gas dynamics.
\newblock {\em European Journal of Applied Mathematics}, 11:353 -- 364, 09
  2000.

\bibitem{BGK}
P.~L. Bhatnagar, E.~P. Gross, and M.~Krook.
\newblock A model for collision processes in gases. {I}. small amplitude
  processes in charged and neutral one-component systems.
\newblock {\em Phys. Rev.}, 94(3):511--525, 1954.

\bibitem{Bird}
G.~A. Bird.
\newblock {\em Molecular Gas Dynamics and the Direct Simulation of Gas Flows}.
\newblock Oxford: Clarendon Press, 1994.

\bibitem{Boltzmann}
L.~Boltzmann.
\newblock Weitere studien {\"u}ber das w{\"a}rmegleichgewicht unter
  gas-molek{\"u}len.
\newblock {\em Wiener Berichte}, 66:275--370, 1872.

\bibitem{Broadwell}
J.~E. {Broadwell}.
\newblock {Study of rarefied shear flow by the discrete velocity method}.
\newblock {\em Journal of Fluid Mechanics}, 19:401--414, 1964.

\bibitem{Fan}
Z.~Cai, Y.~Fan, and R.~Li.
\newblock Globally hyperbolic regularization of {G}rad's moment system in one
  dimensional space.
\newblock {\em Comm. Math. Sci.}, 11(2):547--571, 2013.

\bibitem{Fan_new}
Z.~Cai, Y.~Fan, and R.~Li.
\newblock Globally hyperbolic regularization of {G}rad's moment system.
\newblock {\em Comm. Pure Appl. Math.}, 67(3):464--518, 2014.

\bibitem{Grad13toR13}
Z.~Cai, Y.~Fan, and R.~Li.
\newblock On hyperbolicity of 13-moment system.
\newblock {\em Kinetic and Related Models}, 7(3):415--432, 2014.

\bibitem{framework}
Z.~Cai, Y.~Fan, and R.~Li.
\newblock A framework on moment model reduction for kinetic equation.
\newblock {\em SIAM J. Appl. Math.}, 75(5):2001--2023, 2015.

\bibitem{cai2020hyperbolic}
Z.~Cai, Y.~Fan, and R.~Li.
\newblock Hyperbolic model reduction for kinetic equations.
\newblock {\em arXiv preprint arXiv:2001.10370}, 2020.

\bibitem{Cai2011}
Z.~Cai, R.~Li, and Z.~Qiao.
\newblock N{R}$xx$ simulation of microflows with {S}hakhov model.
\newblock {\em SIAM Journal on Scientific Computing}, 34, 03 2011.

\bibitem{egg2012}
H.~Egger and M.~Schlottbom.
\newblock A mixed variational framework for the radiative transfer equation.
\newblock {\em Mathematical Models and Methods in Applied Sciences},
  22(03):1150014, 2012.

\bibitem{fan2016model}
Y.~Fan, J.~Koellermeier, J.~Li, R.~Li, and M.~Torrilhon.
\newblock Model reduction of kinetic equations by operator projection.
\newblock {\em J. Stat. Phys.}, 162(2):457--486, 2016.

\bibitem{Lijun2017}
Y.~Fan, J.~Li, R.~Li, and Z.~Qiao.
\newblock Resolving {K}nudsen layer by high order moment expansion.
\newblock {\em Continuum Mechanics and Thermodynamics}, 31(5):1313--1337, 2019.

\bibitem{ANRxx}
Y.~Fan and R.~Li.
\newblock Globally hyperbolic moment system by generalized {H}ermite expansion.
\newblock {\em SCIENTIA SINICA Mathematica}, 45:1635--1676, 2015.

\bibitem{Grad}
H.~Grad.
\newblock On the kinetic theory of rarefied gases.
\newblock {\em Comm. Pure Appl. Math.}, 2(4):331--407, 1949.

\bibitem{Grad1958}
H.~Grad.
\newblock Principles of the kinetic theory of gases.
\newblock {\em Handbuch der Physik, Vol. XII}, 12, 01 1958.

\bibitem{Gu2014}
X.~Gu and D.~Emerson.
\newblock Linearized-moment analysis of the temperature jump and temperature
  defect in the {K}nudsen layer of a rarefied gas.
\newblock {\em Physical Review E}, 89:063020, 06 2014.

\bibitem{2019Hat}
M.~Hattori and S.~Takata.
\newblock Slip/jump coefficients and {K}nudsen-layer corrections for the
  {S}hakhov model occurring in the generalized slip-flow theory.
\newblock volume 2132, page 130012, 08 2019.

\bibitem{Kramers1949}
H.~A. Kramers.
\newblock On the behaviour of a gas near a wall.
\newblock {\em Il Nuovo Cimento (1943-1954)}, 6(2):297--304, 1949.

\bibitem{1960Local}
P.~D. Lax and R.~S. Phillips.
\newblock Local boundary conditions for dissipative symmetric linear
  differential operators.
\newblock {\em Communications on Pure and Applied Mathematics}, 13(3), 1960.

\bibitem{Lu2014}
Q.~Li, J.~Lu, and W.~Sun.
\newblock A convergent method for linear half-space kinetic equations.
\newblock {\em ESAIM: Mathematical Modelling and Numerical Analysis}, 51, 08
  2014.

\bibitem{Lilley2007}
C.~R. Lilley and J.~E. Sader.
\newblock Velocity gradient singularity and structure of the velocity profile
  in the {K}nudsen layer according to the {B}oltzmann equation.
\newblock {\em Phys. Rev. E}, 76:026315, Aug 2007.

\bibitem{Loyalka1971}
S.~K. Loyalka and Z.~Naturforsch.
\newblock Approximate method in kinetic theory.
\newblock {\em Phys. Fluids}, 14:2291--2294, 1971.

\bibitem{1978Temperature}
S.~K. Loyalka, C.~E. Siewert, and J.~R. Thomas.
\newblock Temperature-jump problem with arbitrary accommodation.
\newblock {\em Physics of Fluids}, 21(5):854, 1978.

\bibitem{Maxwell}
J.~C. Maxwell.
\newblock On stresses in rarefied gases arising from inequalities of
  temperature.
\newblock {\em Proc. R. Soc. Lond.}, 27(185--189):304--308, 1878.

\bibitem{Mizzi2007}
S.~Mizzi, R.~W. Barber, D.~R. Emerson, J.~M. Reese, and S.~K. Stefanov.
\newblock A phenomenological and extended continuum approach for modelling
  non-equilibrium flows.
\newblock {\em Continuum Mechanics and Thermodynamics}, 19(5):273--283, 2007.

\bibitem{Muller}
I.~M{\"u}ller and T.~Ruggeri.
\newblock {\em Rational Extended Thermodynamics, Second Edition}, volume~37 of
  {\em Springer tracts in natural philosophy}.
\newblock Springer-Verlag, New York, 1998.

\bibitem{Reese2003}
J.~M. Reese, M.~A. Gallis, and D.~A. Lockerby.
\newblock New directions in fluid dynamics: non-equilibrium aerodynamic and
  microsystem flows.
\newblock {\em Philosophical Transactions of the Royal Society of London A:
  Mathematical, Physical and Engineering Sciences}, 361(1813):2967--2988, 2003.

\bibitem{Shakhov}
E.~M. Shakhov.
\newblock Generalization of the {K}rook kinetic relaxation equation.
\newblock {\em Fluid Dyn.}, 3(5):95--96, 1968.

\bibitem{shen2006rare}
Ching Shen.
\newblock {\em Rarefied gas dynamics: fundamentals, simulations and micro
  flows}.
\newblock Springer Science \& Business Media, 2006.

\bibitem{Struchtrup2002}
H.~Struchtrup.
\newblock {G}rad's moment equations for microscale flows.
\newblock In A.~D. Ketsdever and E.~P. Muntz, editors, {\em Rarefied Gas
  Dynamics: 23rd International Symposium}, volume 663, pages 792--799. AIP,
  2003.

\bibitem{Struchtrup}
H.~Struchtrup.
\newblock {\em Macroscopic Transport Equations for Rarefied Gas Flows:
  Approximation Methods in Kinetic Theory}.
\newblock Springer, 2005.

\bibitem{2008Linear}
H.~Struchtrup.
\newblock Linear kinetic heat transfer: Moment equations, boundary conditions,
  and {K}nudsen layers.
\newblock {\em Physica A Statistical Mechanics \& Its Applications},
  387(8-9):1750--1766, 2008.

\bibitem{TorrilhonEditorial}
M.~Torrilhon.
\newblock Special issues on moment methods in kinetic gas theory.
\newblock {\em Continuum Mech. Thermodyn.}, 21(5):341--343, 2009.

\bibitem{Welander1954}
P.~Welander.
\newblock On the temperature jump in a rarefied gas.
\newblock {\em Ark Fys}, 7(7):507, 1954.

\bibitem{Williams2001}
M.~M.~R. Williams.
\newblock A review of the rarefied gas dynamics theory associated with some
  classical problems in flow and heat transfer.
\newblock {\em Z. Angew. Math. Phys.}, 52(3):500--516, May 2001.

\end{thebibliography}
\appendix
\section{Hermite polynomials and the half-space integral}\label{app:A}
\begin{defn} Given $\boldsymbol{u}\in \mathbb{R}^N,\ \theta\in \mathbb{R},\ \theta>0$,
	we define the weight function $\omega ^{[\boldsymbol{u},\theta]}(\boldsymbol{\xi})$, the generalized $N$-D
	Hermite function
$\mathcal{H}_{\boldsymbol{\alpha}}^{[\boldsymbol{u},\theta]}(\boldsymbol{\xi})$ and the Hermite polynomial
$\mathrm{He}_{\boldsymbol{\alpha}}^{[\boldsymbol{u},\theta]}(\boldsymbol{\xi})$ as follows:
\begin{eqnarray}
\omega ^{[\boldsymbol{u},\theta]}(\boldsymbol{\xi}) &=& \frac{1}{(2\pi\theta)^{N/2}}\exp\left(-\frac{|\boldsymbol{\xi}- \boldsymbol{u}|^2}{2\theta}\right),\\
	\mathrm{He}_{\boldsymbol{\alpha}}^{[\boldsymbol{u},\theta]}(\boldsymbol{\xi}) &=& \frac{(-1)^{|\boldsymbol{\alpha}|}}{\omega ^{[\boldsymbol{u},\theta]}(\boldsymbol{\xi})}
	\dfrac{\partial^{|\boldsymbol{\alpha}|} \omega ^{[\boldsymbol{u},\theta]}(\boldsymbol{\xi}) }{\partial \boldsymbol{\xi}^{\boldsymbol{\alpha}}},\\
\mathcal{H}_{\boldsymbol{\alpha}}^{[\boldsymbol{u},\theta]}(\boldsymbol{\xi}) &=& \omega ^{[\boldsymbol{u},\theta]}(\boldsymbol{\xi}) \mathrm{He}_{\boldsymbol{\alpha}}^{[\boldsymbol{u},\theta]}(\boldsymbol{\xi}),
\end{eqnarray}
where 
$\boldsymbol{\alpha}=(\alpha_i)_{i=1}^N\in \mathbb{N}^N$, $|\+\alpha|:=\sum_i \alpha_i$, 
 $\boldsymbol{\xi}\in \mathbb{R}^N$ and $\+\xi^{\boldsymbol{\alpha}}:=\prod_i\xi_i^{\alpha_i}$. 
From the definition we have
\begin{eqnarray}
	\omega ^{[\boldsymbol{u},\theta]}(\boldsymbol{\xi}) = \prod_{i=1}^N \omega^{[u_i,\theta]}(\xi_i),\ 
	\mathrm{He}_{\boldsymbol{\alpha}}^{[\boldsymbol{u},\theta]}(\boldsymbol{\xi}) = \prod_{i=1}^N \mathrm{He}_{\alpha_i}^{[u_i,\theta]}(\xi_i),
\end{eqnarray}
so properties of $N$-D Hermite polynomials will reduce to the 1D case.
\end{defn}
\begin{prop}\label{propA1} When $N=1$,\ i.e. $u,\xi\in \mathbb{R},\alpha\in \mathbb{N}$, we have
	(proof in \cite{Fan_new} or anywhere)
\begin{itemize}
	\item Recursion relation: $(\xi-u) \mathrm{He}_{\alpha+1}^{[u,\theta]}(\xi) = 
		(\alpha+1)\mathrm{He}_{\alpha}^{[u,\theta]}(\xi)+\theta \mathrm{He}_{\alpha+2}^{[u,\theta]}(\xi)$.
	\item Differential relation I: $\pdd{\xi}\mathrm{He}_{\alpha+1}^{[u,\theta]}(\xi) 
		= \dfrac{\alpha+1}{\theta} \mathrm{He}_{\alpha}^{[u,\theta]}(\xi)$.
	\item Differential relation II: $\pdd{\xi}\mathcal{H}_{\alpha}^{[u,\theta]}(\xi)
		= - \mathcal{H}_{\alpha+1}^{[u,\theta]}(\xi)$.
	\item Orthogonal relation: $\displaystyle\int_{\mathbb{R}}\!\!\mathrm{He}_{\alpha}^{[u,\theta]}(\xi)
		\mathcal{H}_{\beta}^{[u,\theta]}(\xi)\mathrm{d}\xi = 
		\langle \mathrm{He}_{\alpha}^{[u,\theta]},\mathrm{He}_{\beta}^{[u,\theta]} \rangle
		_{\omega^{[u,\theta]}}
		= \alpha!\theta^{-\alpha}\delta_{\alpha,\beta}.$
\end{itemize}
\end{prop}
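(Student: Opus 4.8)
The plan is to route all four identities through a single generating function, from which both differential relations and the three-term recurrence fall out at once, and to treat the orthogonality relation separately by integration by parts. Throughout I work in the scalar case $u,\xi\in\mathbb{R}$ and abbreviate $\omega=\omega^{[u,\theta]}$, $\mathcal{H}_\alpha=\mathcal{H}_\alpha^{[u,\theta]}$, $\mathrm{He}_\alpha=\mathrm{He}_\alpha^{[u,\theta]}$. Differential relation II is immediate from the definition: since $\mathcal{H}_\alpha=(-1)^{\alpha}\,\mathrm{d}^{\alpha}\omega/\mathrm{d}\xi^{\alpha}$, differentiating once gives $\mathrm{d}\mathcal{H}_\alpha/\mathrm{d}\xi=(-1)^{\alpha}\,\mathrm{d}^{\alpha+1}\omega/\mathrm{d}\xi^{\alpha+1}=-\mathcal{H}_{\alpha+1}$. (An equally short alternative is the affine rescaling $z=(\xi-u)/\sqrt{\theta}$, under which $\mathrm{He}_\alpha(\xi)=\theta^{-\alpha/2}He_\alpha(z)$ for the classical probabilist's Hermite polynomials $He_\alpha$; this reduces every claim to its textbook form, but I prefer the self-contained argument below.)

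First I would establish the generating-function identity. Taylor-expanding the Gaussian about $\xi$ and inserting the Rodrigues definition $\mathrm{d}^{\alpha}\omega/\mathrm{d}\xi^{\alpha}=(-1)^{\alpha}\mathcal{H}_\alpha$ gives $\omega(\xi-s)=\sum_{\alpha\ge0}\frac{s^{\alpha}}{\alpha!}\mathcal{H}_\alpha(\xi)$, and dividing by $\omega(\xi)$ and evaluating the Gaussian ratio explicitly yields
\[
\frac{\omega(\xi-s)}{\omega(\xi)}=\exp\!\Big(\frac{(\xi-u)s}{\theta}-\frac{s^{2}}{2\theta}\Big)=\sum_{\alpha\ge0}\frac{s^{\alpha}}{\alpha!}\,\mathrm{He}_\alpha(\xi)=:G(\xi,s).
\]
Applying $\partial_\xi$ to $G$ gives $\partial_\xi G=\frac{s}{\theta}G$; matching the coefficient of $s^{\alpha+1}$ on both sides produces $\mathrm{d}\mathrm{He}_{\alpha+1}/\mathrm{d}\xi=\frac{\alpha+1}{\theta}\mathrm{He}_\alpha$, which is differential relation I. Applying $\partial_s$ instead gives $\partial_s G=\frac{(\xi-u)-s}{\theta}G$; matching the coefficient of $s^{\alpha+1}$ produces $(\xi-u)\mathrm{He}_{\alpha+1}=(\alpha+1)\mathrm{He}_\alpha+\theta\,\mathrm{He}_{\alpha+2}$, the stated recursion. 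Extracting both relations from the same entire-in-$s$ series is what lets me avoid the circular dependence that appears if one tries to prove the recurrence and differential relation I in isolation.

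For the orthogonality relation I would use that $\int_{\mathbb{R}}\mathrm{He}_\alpha\,\mathcal{H}_\beta\,\mathrm{d}\xi=\langle\mathrm{He}_\alpha,\mathrm{He}_\beta\rangle_{\omega}$ is symmetric in $\alpha,\beta$, so I may assume $\beta\ge\alpha$. Writing $\mathcal{H}_\beta=(-1)^{\beta}\,\mathrm{d}^{\beta}\omega/\mathrm{d}\xi^{\beta}$ and integrating by parts $\beta$ times, the boundary terms vanish because each $\mathrm{d}^{k}\omega/\mathrm{d}\xi^{k}=(-1)^{k}\mathcal{H}_k$ is a polynomial times the Gaussian and hence decays at $\pm\infty$, leaving $\int_{\mathbb{R}}\omega\,\mathrm{d}^{\beta}\mathrm{He}_\alpha/\mathrm{d}\xi^{\beta}\,\mathrm{d}\xi$. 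Since $\mathrm{He}_\alpha$ has degree $\alpha$, this is $0$ when $\beta>\alpha$; when $\beta=\alpha$ the $\alpha$-th derivative equals $\alpha!$ times the leading coefficient of $\mathrm{He}_\alpha$, which the recurrence shows by induction to be $\theta^{-\alpha}$, so using $\int\omega\,\mathrm{d}\xi=1$ the integral equals $\alpha!\,\theta^{-\alpha}$. This gives exactly $\alpha!\,\theta^{-\alpha}\delta_{\alpha,\beta}$.

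The steps are individually routine, so there is no deep obstacle; the only points needing care are the analytic justifications — termwise differentiation of the series in $\xi$ and $s$, which is harmless since $G$ is entire in $s$ and the series converges uniformly on compact sets, and the vanishing of the boundary terms in the repeated integration by parts, which rests on the super-exponential decay of $\omega$ and its derivatives. The genuine content is therefore organizational: ordering the arguments (generating function first, then differentiation in each variable, then a standalone integration by parts for orthogonality) so that the mutually dependent recurrence and differential relations are all obtained from one source rather than bootstrapped from one another.
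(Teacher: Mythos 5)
Your proposal is correct. Note, however, that the paper itself gives no proof of this proposition at all --- it simply points to \cite{Fan_new} (``or anywhere''), since these are the standard properties of the scaled probabilists' Hermite polynomials --- so there is no in-paper argument to compare against. Your self-contained route is sound: the generating function $G(\xi,s)=\omega(\xi-s)/\omega(\xi)=\exp\bigl((\xi-u)s/\theta - s^2/(2\theta)\bigr)$ follows from the Taylor expansion of the entire function $\omega$ together with the Rodrigues definition, and I verified that matching coefficients of $s^{\alpha+1}$ in $\partial_\xi G=\tfrac{s}{\theta}G$ and $\partial_s G=\tfrac{(\xi-u)-s}{\theta}G$ yields exactly differential relation I and the stated recursion (including the constants), while differential relation II is immediate from $\mathcal{H}_\alpha=(-1)^\alpha\,\mathrm{d}^\alpha\omega/\mathrm{d}\xi^\alpha$. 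The orthogonality argument is also complete: the $\beta$-fold integration by parts is justified by the Gaussian decay of $\mathrm{d}^k\omega/\mathrm{d}\xi^k$, and the needed fact that $\mathrm{He}_\alpha$ has degree $\alpha$ with leading coefficient $\theta^{-\alpha}$ does follow by induction from your recursion (starting from $\mathrm{He}_0=1$, $\mathrm{He}_1=(\xi-u)/\theta$), giving $\alpha!\,\theta^{-\alpha}$ in the diagonal case. Extracting the recursion and differential relation I from the single series in $s$, as you do, is a clean way to avoid circularity; the equally short alternative you mention --- rescaling $z=(\xi-u)/\sqrt{\theta}$ so that $\mathrm{He}_\alpha^{[u,\theta]}(\xi)=\theta^{-\alpha/2}\mathrm{He}_\alpha^{[0,1]}(z)$ and quoting the classical identities --- is essentially what the paper's citation amounts to.
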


\begin{prop} \label{prop:1} The half-space integral $S(\alpha_2,\beta_2)$ defined as \eqref{eq:defS} 
	is independent of $\theta$. 
	And \begin{eqnarray*} S(\alpha_2,\alpha_2+1)=\frac{\sqrt{2\pi}}{2}(\alpha_2+1)!;\quad 
	S(\alpha_2,\alpha_2-1)=\frac{\sqrt{2\pi}}{2}\alpha_2!\ (\alpha_2>0). \end{eqnarray*} Otherwise when
	$\beta_2\neq \alpha_2-1,\alpha_2+1$, we have
\begin{eqnarray*}
S(\alpha_2,\beta_2) = \frac{\alpha_2+\beta_2+1}{(\alpha_2-\beta_2)^2-1}
z_{\alpha_2}z_{\beta_2},
\end{eqnarray*}
where $z_{0}=1,\ z_1=0,\ z_{n+1}=-nz_{n-1},\ n\geq 1$. 
\end{prop}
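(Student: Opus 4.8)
The plan is to prove the three assertions in turn, after first reducing everything to the case $\theta=1$.

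First I would establish the $\theta$-independence by the rescaling $\xi_2=\sqrt{\theta}\,s$. The scaling identities $\mathrm{He}_n^{[0,\theta]}(\sqrt{\theta}\,s)=\theta^{-n/2}\mathrm{He}_n^{[0,1]}(s)$ and $\omega^{[0,\theta]}(\sqrt{\theta}\,s)=\theta^{-1/2}\omega^{[0,1]}(s)$, both immediate from the definitions in Appendix \ref{app:A}, combine with the explicit prefactors $\sqrt{2\pi/\theta}$ and $\theta^{(\alpha_2+\beta_2)/2}$ and the Jacobian $\mathrm{d}\xi_2=\sqrt{\theta}\,\mathrm{d} s$ so that every power of $\theta$ cancels. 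This leaves
\begin{equation*}
S(\alpha_2,\beta_2)=\int_{-\infty}^0 s\,\mathrm{He}_{\alpha_2}^{[0,1]}(s)\,\mathrm{He}_{\beta_2}^{[0,1]}(s)\,e^{-s^2/2}\,\mathrm{d} s,
\end{equation*}
which is manifestly independent of $\theta$. Writing $\mathrm{He}_n:=\mathrm{He}_n^{[0,1]}$, I would note that $z_n:=\mathrm{He}_n(0)$ satisfies $z_0=1$, $z_1=0$, and, evaluating the recursion relation of Proposition \ref{propA1} at $s=0$, $z_{n+1}=-n z_{n-1}$; hence the $z_n$ of the statement are exactly the Hermite polynomials at the origin, vanishing for odd $n$.

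Next I would reduce the weighted half-range integral to unweighted ones. The three-term recurrence $s\,\mathrm{He}_{\beta_2}=\mathrm{He}_{\beta_2+1}+\beta_2\mathrm{He}_{\beta_2-1}$ gives
\begin{equation*}
S(\alpha_2,\beta_2)=U(\alpha_2,\beta_2+1)+\beta_2\,U(\alpha_2,\beta_2-1),\qquad U(m,n):=\int_{-\infty}^0 \mathrm{He}_m\,\mathrm{He}_n\,e^{-s^2/2}\,\mathrm{d} s.
\end{equation*}
The diagonal value follows from symmetry: since $\mathrm{He}_n^2$ is even, $U(n,n)=\tfrac12\int_{\bbR}\mathrm{He}_n^2 e^{-s^2/2}\,\mathrm{d} s=\tfrac{\sqrt{2\pi}}{2}\,n!$ by the orthogonality relation of Proposition \ref{propA1}. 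For $m\neq n$ I would use the self-adjoint form $(e^{-s^2/2}\mathrm{He}_n')'=-n\,e^{-s^2/2}\mathrm{He}_n$ of Hermite's equation. The Lagrange identity $\mathrm{He}_n(e^{-s^2/2}\mathrm{He}_m')'-\mathrm{He}_m(e^{-s^2/2}\mathrm{He}_n')'=(n-m)e^{-s^2/2}\mathrm{He}_m\mathrm{He}_n$ exhibits the left side as a total derivative; integrating over $(-\infty,0)$, the contribution at $-\infty$ vanishes and the boundary term at $0$ (using $\mathrm{He}_n'(0)=n\,z_{n-1}$) yields
\begin{equation*}
U(m,n)=\frac{m\,z_{m-1}z_n-n\,z_m z_{n-1}}{n-m},\qquad m\neq n.
\end{equation*}

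Finally I would assemble the cases. When $\beta_2=\alpha_2\pm1$, one of the two $U$-terms is diagonal and the other is $U(\alpha_2,\alpha_2\pm2)$, whose numerator vanishes since each summand $z_{\alpha_2-1}z_{\alpha_2+2}$, $z_{\alpha_2}z_{\alpha_2+1}$ (and their $-2$ analogues) pairs indices of opposite parity; this leaves $S(\alpha_2,\alpha_2+1)=(\alpha_2+1)U(\alpha_2,\alpha_2)=\tfrac{\sqrt{2\pi}}{2}(\alpha_2+1)!$ and likewise $S(\alpha_2,\alpha_2-1)=\tfrac{\sqrt{2\pi}}{2}\,\alpha_2!$ for $\alpha_2>0$. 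In the remaining case $\beta_2\neq\alpha_2\pm1$, both $U$-terms are off-diagonal; substituting the boundary-term formula and using $z_{k+1}=-k z_{k-1}$ to rewrite $z_{\beta_2\pm1}$ and $z_{\beta_2-2}$ through $z_{\beta_2}$, the surviving terms combine over the common denominator $(\beta_2+1-\alpha_2)(\beta_2-1-\alpha_2)=(\alpha_2-\beta_2)^2-1$, the numerator collapsing to $(\alpha_2+\beta_2+1)\,z_{\alpha_2}z_{\beta_2}$, which is the asserted closed form (vanishing whenever $z_{\alpha_2}z_{\beta_2}=0$, consistent with the earlier proposition). I expect the main obstacle to be exactly this last collapse: one must track which of $z_{\alpha_2-1},z_{\beta_2\pm1},z_{\beta_2-2}$ die according to the parities of $\alpha_2,\beta_2$ and reduce all survivors to the single product $z_{\alpha_2}z_{\beta_2}$ before the numerator simplifies. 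The cleanest route, matching the setting where $S$ is actually used (with $\alpha_2$ even, cf. \eqref{eq:phi}), is to perform this step under that parity assumption, since $z_{\alpha_2-1}=0$ then removes the cross terms immediately.
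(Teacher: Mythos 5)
Your proof is correct and follows essentially the same route as the paper's: both split $S(\alpha_2,\beta_2)$ via the three-term recurrence into two half-range integrals of Hermite products, evaluate the diagonal one by orthogonality and the off-diagonal ones by an integration-by-parts identity whose only surviving boundary term sits at the origin and is expressed through $z_n=\mathrm{He}_n^{[0,1]}(0)$ — your Lagrange-identity formula for $U(m,n)$ is literally the paper's $I(\alpha,\beta)$ after one application of $z_{k+1}=-k\,z_{k-1}$. The parity caveat you flag at the end is real: the closed form $\frac{\alpha_2+\beta_2+1}{(\alpha_2-\beta_2)^2-1}z_{\alpha_2}z_{\beta_2}$ fails when $\alpha_2$ and $\beta_2$ are both odd (e.g.\ $S(1,3)=-2\neq 0$), and the paper's own display \eqref{eq:AAA3} silently discards the $z_{\alpha_2+1}z_{\beta_2\pm 1}$ terms under the same implicit restriction, so carrying out the final collapse under the assumption that $\alpha_2$ is even (the only case used, cf.\ \eqref{eq:phi}) is exactly what the paper does as well.
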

\begin{cor}  Since $z_n=0$ when $n$ is odd, from Property \ref{prop:1}, 
	we have $S(\alpha_2,\beta_2)=0$ if $\alpha_2$ is even, $\beta_2$
	is odd and $|\beta_2-\alpha_2|\neq1$.
\end{cor}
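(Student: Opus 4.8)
The plan is to read the claim straight off the closed-form expression supplied by Property~\ref{prop:1}, so the only genuine content lies in the vanishing of the odd-index coefficients $z_n$ together with a careful check of which branch of the formula is in force.

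First I would verify that $z_n = 0$ whenever $n$ is odd, by a one-line induction on the recursion $z_{n+1} = -n z_{n-1}$. The base case is $z_1 = 0$; if $z_{n-1} = 0$ for some odd index $n-1$ (so that $n$ is even), then $z_{n+1} = -n z_{n-1} = 0$, which advances the induction through the odd integers. The even-index values $z_0 = 1, z_2 = -z_0, \ldots$ play no role in this half of the argument.

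Next I would invoke the hypothesis $|\beta_2 - \alpha_2| \neq 1$ to place us squarely in the ``otherwise'' branch of Property~\ref{prop:1}, where
\[
S(\alpha_2, \beta_2) = \frac{\alpha_2 + \beta_2 + 1}{(\alpha_2 - \beta_2)^2 - 1}\, z_{\alpha_2} z_{\beta_2}.
\]
Since $\alpha_2$ is even and $\beta_2$ is odd, $\alpha_2 - \beta_2$ is odd, so $(\alpha_2 - \beta_2)^2$ is an odd perfect square; the excluded value $|\beta_2 - \alpha_2| = 1$ is exactly the one that would make the denominator vanish, so under our hypothesis $(\alpha_2 - \beta_2)^2 - 1$ is a nonzero integer and the right-hand side is well-defined. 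Because $\beta_2$ is odd, the first step gives $z_{\beta_2} = 0$, and the whole product—hence $S(\alpha_2, \beta_2)$—vanishes.

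There is no real obstacle once Property~\ref{prop:1} is granted; the one point that demands care is the case split. The hard part, such as it is, will be to not apply the product formula when $|\beta_2 - \alpha_2| = 1$, both because that regime is governed by the separate special-case formulas $S(\alpha_2, \alpha_2 \pm 1)$ and because it is precisely where the denominator $(\alpha_2 - \beta_2)^2 - 1$ degenerates; the hypothesis $|\beta_2 - \alpha_2| \neq 1$ is exactly what rules this out. I would also note that the assumption that $\alpha_2$ is even is not strictly needed to conclude $S = 0$, since $z_{\beta_2} = 0$ already suffices, but it records the parity configuration that actually arises when $\+\alpha \in \mathbb{I}$ in the boundary conditions.
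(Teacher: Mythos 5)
Your argument is correct and is exactly the one the paper intends: the corollary follows by combining the induction $z_1=0,\ z_{n+1}=-nz_{n-1}$ (so $z_n=0$ for odd $n$) with the product formula $S(\alpha_2,\beta_2)=\frac{\alpha_2+\beta_2+1}{(\alpha_2-\beta_2)^2-1}z_{\alpha_2}z_{\beta_2}$, which applies precisely because $|\beta_2-\alpha_2|\neq 1$. Your added remarks — that the excluded case is exactly where the denominator degenerates, and that the parity of $\alpha_2$ is not actually needed — are accurate and consistent with the paper.
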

\begin{proof}{\bf (Proof of Proposition \ref{prop:1}.)} First for $\alpha,\beta\in \mathbb{N}$, we denote
by \begin{eqnarray*}
I(\alpha,\beta) = \sqrt{2\pi}\int_{-\infty}^0\!\!\theta^{\frac{\alpha+\beta}{2}}
	\mathrm{He}_{\alpha}^{[0,\theta]}(\xi) \mathrm{He}_{\beta}^{[0,\theta]}(\xi)\omega^{[0,\theta]}(\xi)
	\mathrm{d}\xi.
\end{eqnarray*}
So $I(\alpha,\beta)=I(\beta,\alpha)$. Integrate by parts using 
$\mathrm{d}\left(\mathcal{H}_{\beta}^{[0,\theta]}\right)=- \mathcal{H}_{\beta+1}^{[0,\theta]} \mathrm{d}\xi$
or 
$\mathrm{d}\left(\mathcal{H}_{\alpha}^{[0,\theta]}\right)=- \mathcal{H}_{\alpha+1}^{[0,\theta]} \mathrm{d}\xi$,
then we should get the equivalent results by these two ways:
	\begin{eqnarray} \label{eq:AAA1}
	I(\alpha+1,\beta+1) &=& 
	-\sqrt{2\pi}\theta^{\frac{\alpha+\beta+2}{2}}
\mathrm{He}_{\alpha+1}^{[0,\theta]}(0) \mathcal{H}_{\beta}^{[0,\theta]}(0) + (\alpha+1)I(\alpha,\beta) \\
	&=& 
	-\sqrt{2\pi}\theta^{\frac{\alpha+\beta+2}{2}}
\mathrm{He}_{\beta+1}^{[0,\theta]}(0) \mathcal{H}_{\alpha}^{[0,\theta]}(0) + (\beta+1)I(\alpha,\beta).
\end{eqnarray}
Noting that 
$\mathcal{H}_{\alpha}^{[0,\theta]}(0)=(2\pi\theta)^{-\frac{1}{2}}\mathrm{He}_{\alpha}^{[0,\theta]}(0)$.
If we denote by $z_{\alpha}=\theta^{\frac{\alpha}{2}}\mathrm{He}_{\alpha}^{[0,\theta]}(0)$, 
when $\alpha\neq\beta$ we have
\begin{eqnarray}
I(\alpha,\beta) = \frac{1}{\alpha-\beta}
	(z_{\alpha+1}z_{\beta}-z_{\beta+1}z_{\alpha}),	
\end{eqnarray}
	where $z_0=1,\ z_1=0$ and $z_{n+1}=-nz_{n-1}$ by recursion relation in Proposition \ref{propA1}. 
By definition,
\begin{eqnarray}
S(\alpha,\beta) = \beta I(\alpha,\beta-1)+I(\alpha,\beta+1),
\end{eqnarray}
	which turns to $S(\alpha,0)=I(\alpha,1)$ when $\beta=0$.
So when $\beta\neq \alpha+1$ and $\beta\neq \alpha-1$, we have
\bea
\label{eq:AAA3}
S(\alpha,\beta) = -\frac{\beta}{\alpha-\beta+1}z_{\alpha}z_{\beta}-\frac{1}{\alpha-\beta-1}z_{\alpha}
z_{\beta+2} = \frac{\alpha+\beta+1}{(\alpha-\beta)^2-1}z_{\alpha}z_{\beta}.
\eea
	For the special case $\beta=\alpha+1$, we calculate by \eqref{eq:AAA1} to
	get
\beaa
S(\alpha,\alpha+1) = (\alpha+1)I(\alpha,\alpha) = (\alpha+1)!I(0,0) = \frac{\sqrt{2\pi}}{2}(\alpha+1)!.
\eeaa 
Similarly when $\alpha>1$ we have $S(\alpha,\alpha-1)=I(\alpha,\alpha)=\frac{\sqrt{2\pi}}{2}\alpha!$.
\hfill
\end{proof}

\section{Calculation of $\bar{m}_{\boldsymbol{\alpha}}$}\label{app:B}
Assume $\boldsymbol{u}^W=(u^W_1,u^W_2,u^W_3)^T$ and $u^W_2=0$.
Then from $\boldsymbol{u}\cdot \boldsymbol{n}=0$ we have $u_2=0$.
By definition 
\begin{eqnarray}\label{eq:maJ}
	m_{\boldsymbol{\alpha}} &=& \frac{\theta^{|\boldsymbol{\alpha}|}}{\boldsymbol{\alpha} !}
\int_{\mathbb{R}^3}\frac{\rho^W}{\sqrt{2\pi\theta^W}^{3}}
\exp(-\frac{|\boldsymbol{\xi}- \boldsymbol{u}^W|^2}{2\theta^W})
\mathrm{He}_{\boldsymbol{\alpha}}^{[\boldsymbol{u},\theta]}(\boldsymbol{\xi})\mathrm{d} \boldsymbol{\xi} 
	\notag\\ &=& 
\rho^W J_{\alpha_1}(u_1^W-u_1) J_{\alpha_2}(0) J_{\alpha_3}(u_3^W-u_3),
\end{eqnarray}
where the 1D integral $J_{m}(x)$ is defined for $m\in \mathbb{N}$ and $x,u \in \mathbb{R}$ as 
\begin{eqnarray*}
J_{m}(x) := \frac{1}{m!}\theta^{m}
\int_{\mathbb{R}}(2\pi\theta^W)^{-\frac{1}{2}}\exp(-\frac{|\xi-u-x|^2}{2\theta^W})
\mathrm{He}_{m}^{[u,\theta]}(\xi)\mathrm{d}\xi.
\end{eqnarray*}
\begin{prop}
$J_m(x)$ is independent of $u$ and satisfy a recursion relation. 
\end{prop}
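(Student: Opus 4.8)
The plan is to establish the two assertions separately: the $u$-independence follows from a single translation of the integration variable, and the recurrence is then cleanest to read off from a generating-function identity that simultaneously re-confirms the $u$-independence.

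First I would absorb the shift $u$. Since $\omega^{[u,\theta]}(\xi)=\omega^{[0,\theta]}(\xi-u)$, the Rodrigues-type definition of the Hermite polynomials immediately gives the translation invariance $\mathrm{He}_m^{[u,\theta]}(\xi)=\mathrm{He}_m^{[0,\theta]}(\xi-u)$. Substituting $\eta=\xi-u$ in the definition of $J_m(x)$ therefore yields
\begin{eqnarray*}
J_m(x)=\frac{\theta^m}{m!}\int_{\bbR}g(\eta)\,\mathrm{He}_m^{[0,\theta]}(\eta)\,\mathrm{d}\eta,\qquad g(\eta):=(2\pi\theta^W)^{-1/2}\exp\left(-\frac{|\eta-x|^2}{2\theta^W}\right),
\end{eqnarray*}
in which $u$ no longer appears. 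This proves the first claim.

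For the recurrence, I would form the generating function $\sum_m J_m(x)t^m$. The classical Hermite generating function, rescaled to the weight $\omega^{[0,\theta]}$, reads $\sum_m \tfrac{(\theta t)^m}{m!}\mathrm{He}_m^{[0,\theta]}(\eta)=\exp\!\left(t\eta-\tfrac{1}{2}\theta t^2\right)$. Multiplying by $g$ and integrating termwise, the sum collapses to
\begin{eqnarray*}
\sum_{m}J_m(x)\,t^m = e^{-\theta t^2/2}\int_{\bbR}g(\eta)\,e^{t\eta}\,\mathrm{d}\eta = \exp\left(xt+\frac{\theta^W-\theta}{2}\,t^2\right),
\end{eqnarray*}
where the remaining integral is recognized as the moment-generating function of the Gaussian $g$ (mean $x$, variance $\theta^W$). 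Differentiating this identity in $t$ and matching the coefficient of $t^m$ gives the three-term recurrence
\begin{eqnarray*}
(m+1)\,J_{m+1}(x)=x\,J_m(x)+(\theta^W-\theta)\,J_{m-1}(x),\qquad J_0(x)=1,\ J_1(x)=x.
\end{eqnarray*}
Alternatively, and without generating functions, I could multiply the Hermite three-term recurrence $\eta\,\mathrm{He}_m^{[0,\theta]}=m\,\mathrm{He}_{m-1}^{[0,\theta]}+\theta\,\mathrm{He}_{m+1}^{[0,\theta]}$ of Proposition \ref{propA1} by $g$ and integrate, handling the factor $\eta$ via the Gaussian identity $\eta g=xg-\theta^W\,\odd{\eta}g$ together with one integration by parts and the differential relation $\odd{\eta}\mathrm{He}_m^{[0,\theta]}=\frac{m}{\theta}\mathrm{He}_{m-1}^{[0,\theta]}$; both routes produce the same recurrence.

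The only delicate point is that the Gaussian weight $g$ carries the wall variance $\theta^W$, whereas the Hermite polynomials are orthogonal with respect to the gas variance $\theta$. Because these do not coincide, the integral is not governed by the orthogonality relation and does not reduce to a Kronecker delta; the mismatch surfaces precisely as the coefficient $\theta^W-\theta$ in the recurrence. Consequently the main care required throughout is bookkeeping: keeping the two variances $\theta$ and $\theta^W$ strictly distinct and tracking their powers correctly when collapsing the generating function (or, in the direct route, when carrying the $\theta^W$ out of the integration by parts). With that distinction maintained, the computation is routine.
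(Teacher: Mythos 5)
Your proof is correct, and both of your stated routes lead to the same recurrence
\begin{equation*}
(m+1)J_{m+1}(x)=xJ_m(x)+(\theta^W-\theta)J_{m-1}(x),\qquad J_0=1,\ J_1=x,
\end{equation*}
which is exactly the paper's relation \eqref{eq:BB1} after the index shift $m\mapsto m+1$. However, your primary argument is genuinely different from the paper's. The paper works directly on the integral: it uses the differential relation $\odd{\xi}\mathrm{He}_{m+1}^{[u,\theta]}=\frac{m+1}{\theta}\mathrm{He}_m^{[u,\theta]}$ to integrate by parts, throwing the derivative onto the wall Gaussian so that a factor $\frac{\xi-u-x}{\theta^W}$ appears, and then expands $(\xi-u)\mathrm{He}_{m+1}^{[u,\theta]}$ via the three-term recursion to obtain $(m+2)J_{m+2}=xJ_{m+1}+(\theta^W-\theta)J_m$; the $u$-independence is then implicit, since the recursion and the seeds $J_0=1$, $J_1=x$ contain no $u$. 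Your second, non-generating-function route is essentially this same computation reorganized. Your first route instead makes the $u$-independence explicit by the translation $\eta=\xi-u$ (legitimate, since $\mathrm{He}_m^{[u,\theta]}(\xi)=\mathrm{He}_m^{[0,\theta]}(\xi-u)$ by the Rodrigues definition) and then collapses the generating function to the closed form $\sum_m J_m(x)t^m=\exp\bigl(xt+\tfrac{\theta^W-\theta}{2}t^2\bigr)$. That closed form is a genuine bonus: it makes the estimate $J_m(x)=O(\varepsilon^{\lceil m/2\rceil})$ for $x=O(\varepsilon)$, $\theta^W-\theta=O(\varepsilon)$ --- which the paper later extracts from the recursion by induction to conclude $\bar{m}_{\boldsymbol{\alpha}}=0$ for $|\boldsymbol{\alpha}|>2$ --- immediately visible from the exponent. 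The only point you should state rather than assume is the justification for termwise integration of the Hermite series against the Gaussian $g$, but for Gaussian weights this is standard and the resulting entire function of $t$ confirms it.
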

\begin{proof}
Use $\mathrm{d}\left(\mathrm{He}_{m+1}^{[u,\theta]}\right)=\frac{m+1}{\theta}\mathrm{He}
_m^{[u,\theta]}\mathrm{d}\xi$ in the integration by parts formula, then we have
\begin{eqnarray}
J_{m}(x) &=& \frac{\theta^{m+1}}{(m+1)!}
\int_{\mathbb{R}}(2\pi\theta^W)^{-\frac{1}{2}}\exp(-\frac{|\xi-u-x|^2}{2\theta^W})
\mathrm{He}_{m+1}^{[u,\theta]}(\xi)\frac{\xi-u-x}{\theta^W}\mathrm{d}\xi \notag\\
&=& \frac{1}{\theta^W}\left(-xJ_{m+1}+\theta J_m(x)+(m+2)J_{m+2}(x)\right),\quad m\geq 0.
\label{eq:app222}
\end{eqnarray} 
Note that $\mathrm{He}_0^{[u,\theta]}(\xi)=1,\ \mathrm{He}_1^{[u,\theta]}(\xi)=(\xi-u)/\theta$, so
if we substitute $m$ by $m-2$ in \eqref{eq:app222}, we have
\bea
\label{eq:BB1}
J_{m}(x) = \frac{1}{m}\left((\theta^W-\theta)J_{m-2}(x)+xJ_{m-1}(x)\right),\quad m\geq 2,
\eea
with $J_0(x)=1$ and $J_1(x)=x$.
\hfill
\end{proof}
\par If we introduce a formal small quantity $\varepsilon$ 
and assume $\rho = \rho_0(1+\bar{\rho}), m_{\boldsymbol{0}}=\rho^W=\rho_0(1+\bar{\rho}^W), 
u_i = \sqrt{\theta_0}\bar{u}_i, u_i^W=\sqrt{\theta_0}\bar{u}_i^W, \theta=\theta_0(1+\bar{\theta}),
\theta^W = \theta_0(1+\bar{\theta}^W)$, $
m_{\boldsymbol{\alpha}}=\rho_0\theta_0^{\frac{|\boldsymbol{\alpha}|}{2}}\bar{m}_{\boldsymbol{\alpha}}$, where
the variables with a bar are $O(\varepsilon)$, then discarding the higher order small quantities we have
\begin{prop}
	$\bar{m}_{\+e_i}=\bar{u}_i^W-\bar{u}_i.\ \bar{m}_{2\+e_i}=\frac{1}{2}
	(\bar{\theta}^W-\bar{\theta})$. $\bar{m}_{\+\alpha}=0$ when
	$\+\alpha\neq\+0,\+e_i,2\+e_i$. And
	\begin{eqnarray}\label{eq:rhow}
	S(0,0)(\bar{\rho}^W-\bar{\rho})=
\sum_{\beta_2=2,\ \text{even}}^{M} 
S(0,\beta_2)(\bar{f}_{\beta_2 \boldsymbol{e}_2}-\bar{m}_{\beta_2 \boldsymbol{e}_2}).
\end{eqnarray}
\end{prop}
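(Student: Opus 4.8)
The plan is to split the statement into the three explicit-value claims for $\bar{m}_{\+\alpha}$ and the normalization identity \eqref{eq:rhow}, since the former follow from an order count on the one-dimensional integrals $J_m$ while the latter is the no-penetration constraint that fixes $\rho^W$. First I would make the homogeneity of $J_m$ under the linearization explicit. Writing the admissible argument as $x=\sqrt{\theta_0}\,\bar{x}$ with $\bar{x}=O(\varepsilon)$ and using $\theta^W-\theta=\theta_0(\bar{\theta}^W-\bar{\theta})$, the recursion \eqref{eq:BB1} gives by induction $J_m(\sqrt{\theta_0}\,\bar{x})=\theta_0^{m/2}\hat{J}_m$, where $\hat{J}_m=\frac{1}{m}\big[(\bar{\theta}^W-\bar{\theta})\hat{J}_{m-2}+\bar{x}\hat{J}_{m-1}\big]$ with $\hat{J}_0=1,\ \hat{J}_1=\bar{x}$ is $\theta_0$-free. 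Since each use of the recursion introduces exactly one factor that is $O(\varepsilon)$, the same induction yields $\hat{J}_m=O(\varepsilon^{\lceil m/2\rceil})$; at the vanishing argument one gets $\hat{J}_m(0)=0$ for odd $m$ and $\hat{J}_m(0)$ of order $\varepsilon^{m/2}$ for even $m$.

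Next I would substitute \eqref{eq:maJ}, i.e. $m_{\+\alpha}=\rho^W J_{\alpha_1}(u_1^W-u_1)J_{\alpha_2}(0)J_{\alpha_3}(u_3^W-u_3)$, together with $\rho^W=\rho_0(1+\bar{\rho}^W)$ and $m_{\+\alpha}=\rho_0\theta_0^{|\+\alpha|/2}\bar{m}_{\+\alpha}$. The three factors $\theta_0^{\alpha_d/2}$ collect into $\theta_0^{|\+\alpha|/2}$ and cancel exactly against the scaling of $\bar{m}_{\+\alpha}$, leaving the $\theta_0$-free expression $\bar{m}_{\+\alpha}=(1+\bar{\rho}^W)\,\hat{J}_{\alpha_1}(\bar{u}_1^W-\bar{u}_1)\,\hat{J}_{\alpha_2}(0)\,\hat{J}_{\alpha_3}(\bar{u}_3^W-\bar{u}_3)$, which is of order $\lceil\alpha_1/2\rceil+\alpha_2/2+\lceil\alpha_3/2\rceil$ in $\varepsilon$ (and identically zero when $\alpha_2$ is odd, because $\hat{J}_{\alpha_2}(0)=0$). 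Keeping only first-order terms, this order is $\leq 1$ exactly for $\+\alpha\in\{\+0,\+e_1,\+e_3,2\+e_1,2\+e_2,2\+e_3\}$, which already proves $\bar{m}_{\+\alpha}=0$ off the claimed index set (the case $\+e_2$ being subsumed by $\bar{u}_2^W=\bar{u}_2=0$). Reading off the surviving leading terms via $\hat{J}_1(\bar{x})=\bar{x}$ and $\hat{J}_2=\frac12\big[(\bar{\theta}^W-\bar{\theta})+\bar{x}^2\big]$, and discarding the $O(\varepsilon^2)$ cross terms, gives $\bar{m}_{\+e_i}=\bar{u}_i^W-\bar{u}_i$ and $\bar{m}_{2\+e_i}=\frac12(\bar{\theta}^W-\bar{\theta})$.

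For the identity \eqref{eq:rhow} I would start from the defining property of $\rho^W$, namely that there is no net mass flux through the wall, $\int_{\bbR^3}\xi_2 f\,\mathrm{d}\+\xi=0$; through \eqref{eq:int-bc} with the odd polynomial $p_{\+0}=\xi_2$ this is equivalent to $\int_{\bbR^2}\!\int_{-\infty}^0\xi_2(f-\mathcal{M}^W)\,\mathrm{d}\+\xi=0$. Expanding $f$ and $\mathcal{M}^W$ in the Hermite basis and integrating out $\xi_1,\xi_3$ by orthogonality (Proposition \ref{propA1}) collapses the sum onto the $\+\beta=\beta_2\+e_2$ terms, whose $\xi_2$–half-space integrals are precisely $S(0,\beta_2)$ up to the $\theta$-powers that cancel under the linearization. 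This produces $S(0,0)(\bar{\rho}^W-\bar{\rho})=\sum_{\beta_2\geq1}S(0,\beta_2)(\bar{f}_{\beta_2\+e_2}-\bar{m}_{\beta_2\+e_2})$; the odd $\beta_2\geq3$ terms vanish by the Corollary to Proposition \ref{prop:1}, the $\beta_2=1$ term vanishes because $\bar{f}_{\+e_2}=\bar{m}_{\+e_2}=0$, and the closure $\bar{f}_{\+\alpha}=o(\Kn)$ for $|\+\alpha|>M$ truncates the sum at $\beta_2=M$, yielding exactly \eqref{eq:rhow}.

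The hard part will be the order bookkeeping in the first two steps: one must track simultaneously the powers of $\theta_0$, which have to cancel so that $\bar{m}_{\+\alpha}$ is genuinely nondimensional, and the powers of $\varepsilon$, which decide which indices survive at first order. In particular I would need to confirm that the $\bar{x}^2$ contribution to $\hat{J}_2$ is truly second order, so that it may be dropped in $\bar{m}_{2\+e_i}$, and that the induction bound $\hat{J}_m=O(\varepsilon^{\lceil m/2\rceil})$ is tight enough to exclude every $\+\alpha$ outside the claimed set. The remaining manipulations are routine uses of the Hermite recursion, orthogonality, and the parity of $S$ already recorded in Appendix \ref{app:A}.
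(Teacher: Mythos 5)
Your proposal is correct and follows essentially the same route as the paper: explicit evaluation of $J_1,J_2$ for the surviving indices, an induction on the recursion \eqref{eq:BB1} giving $J_m=O(\varepsilon^{\lceil m/2\rceil})$ to discard all $\+\alpha\neq\+0,\+e_i,2\+e_i$, and the $\+\alpha=\+0$ instance of the wall boundary condition to obtain \eqref{eq:rhow}. The only differences are presentational --- you make the $\theta_0$-scaling explicit through the rescaled $\hat{J}_m$, and you re-derive the $\+\alpha=\+0$ case from \eqref{eq:int-bc} and the no-penetration condition instead of citing \eqref{eq:lbc-0317} directly.
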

\begin{proof}
	Set $\boldsymbol{\alpha}=(0,0,0)$ in \eqref{eq:lbc-0317} and immediately we have \eqref{eq:rhow}.
	Since $J_1(x)=x$ and $J_2(x)=\frac{1}{2}\left(
	\theta_0(\bar{\theta}^W-\bar{\theta})+x^2\right)$, from \eqref{eq:maJ} we have
	$\bar{m}_{\+e_i}=\bar{u}_i^W-\bar{u}_i,\ \bar{m}_{2\+e_i}=\frac{1}{2}
	(\bar{\theta}^W-\bar{\theta})$. Further from \eqref{eq:BB1} we can induce that 
	$J_m(x)=O(\varepsilon^{\lceil\frac{m}{2}\rceil})$ if $x=O(\varepsilon)$ and $\theta^W-\theta=O(\varepsilon)$.
	So when $\+\alpha\neq\+0,\+e_i,2\+e_i$, from \eqref{eq:maJ} we have 
	$\bar{m}_{\+\alpha}=o(\varepsilon)$.
	\hfill
\end{proof}

\end{document}